\newcommand{\mc}[1]{\mathcal{#1}}
\newcommand{\mf}[1]{\mathfrak{#1}}
\newcommand{\mb}[1]{\mathbb{#1}}
\newcommand{\id}{\mathbbm{1}}
\newcommand{\tint}{{\textstyle\int}}
\newcommand{\xhat}{\widehat{\phantom{\mc V}}}
\newcommand{\yhat}{\!\!\!\!\widehat{\phantom{\mc V}}}
\DeclareMathOperator{\End}{End}
\DeclareMathOperator{\Tr}{Tr}
\DeclareMathOperator{\res}{Res}
\DeclareMathOperator{\ad}{ad}
\DeclareMathOperator{\im}{Im}
\DeclareMathOperator{\Ker}{Ker}
\DeclareMathOperator{\Res}{Res}
\DeclareMathOperator{\sign}{sign}
\theoremstyle{plain}
\newtheorem{theorem}{Theorem}[section]
\newtheorem{lemma}[theorem]{Lemma}
\newtheorem{proposition}[theorem]{Proposition}
\theoremstyle{definition}
\newtheorem{definition}[theorem]{Definition}
\newtheorem{example}[theorem]{Example}
\theoremstyle{remark}
\newtheorem{remark}[theorem]{Remark}
\numberwithin{equation}{section}
\definecolor{light}{gray}{.9}
\begin{document}

\title{Adler-Oevel-Ragnisco type operators and Poisson vertex algebras}

\author{Alberto De Sole}
\address{Dipartimento di Matematica \& INFN, Sapienza Universit\`a di Roma,
P.le Aldo Moro 2, 00185 Rome, Italy}
\email{desole@mat.uniroma1.it}
\urladdr{www1.mat.uniroma1.it/\$$\sim$\$desole}

\author{Victor G. Kac}
\address{Dept of Mathematics, MIT,
77 Massachusetts Avenue, Cambridge, MA 02139, USA}
\email{kac@math.mit.edu}

\author{Daniele Valeri}
\address{Dipartimento di Matematica \& INFN, Sapienza Universit\`a di Roma,
P.le Aldo Moro 5, 00185 Rome, Italy}
\email{daniele.valeri@uniroma1.it}

%\subjclass{
%Primary 17B63; 
%Secondary 17B69, 17B80, 37K30, 17B08
%}

\begin{abstract}
The theory of triples of Poisson brackets and related integrable systems,
based on a classical $R$-matrix $R\in\End_{\mb F}(\mf g)$, where $\mf g$ is a finite dimensional 
associative algebra over a field $\mb F$ viewed as a Lie algebra,
was developed by Oevel-Ragnisco and Li-Parmentier \cite{OR89,LP89}.
In the present paper we develop an ``affine'' analogue of this theory by introducing the notion of a continuous
Poisson vertex algebra and constructing triples of Poisson $\lambda$-brackets. We introduce the corresponding
Adler type identities and apply them to integrability of hierarchies of Hamiltonian PDEs.
\end{abstract}

%%%%%%%%
\maketitle

\begin{center}
\emph{To Claudio Procesi, a friend and a source of inspiration.}
\end{center}

%
%\keywords{
%}

\tableofcontents

%%%%%%%%%%%%%%%%%%%%%%%%%%%%%%%%
\section{Introduction}
Following Semenov-Tian-Shansky \cite{STS83},
one defines a (classical) $R$-matrix over a Lie algebra $\mf g$ over a field $\mb F$
as an element $R\in\End_{\mb F}(\mf g)$ satisfying
the modified Yang-Baxter equation ($a,b\in\mf g$)
\begin{equation}\label{intro:eq1}
[R(a),R(b)]-R([R(a),b])-R([a,R(b)])+[a,b]=0
\,.
\end{equation}
The basic example is provided by a decomposition of $\mf g$ in a sum of two subalgebras
$\mf g_+$ and $\mf g_-$, such that $\mf g_+\cap\mf g_-=0$, by letting
\begin{equation}\label{intro:eq2}
R=\Pi_+-\Pi_-\,,
\quad\text{where }\Pi_{\pm}:\mf g\to\mf g_{\pm}\text{ are projections}
\,.
\end{equation}

Oevel and Ragnisco \cite{OR89} and independently Li and Parmentier \cite{LP89}
used this $R$-matrix in the case when $\mf g$ is a unital finite dimensional associative algebra,
endowed with a non-degenerate trace form $\Tr:\mf g\to\mb F$,
and viewed  as a Lie algebra with the bracket $[a,b]=a\circ b-b\circ a$, where
$a\circ b$ is the associative product in $\mf g$.
Namely, they construct a triple of Poisson brackets $\{\cdot\,,\cdot\}^R_i$,
$i=1,2,3$,  on $S(\mf g)$ as the coefficients
in the expansion ($a,b\in\mf g$)
\begin{equation}\label{intro:eq3}
\begin{split}
\{a,b\}^{R,\epsilon}
&=
\frac12 \sum_{i,j\in I}
(u^i\!+\!\epsilon\Tr(u^i))(u^j\!+\!\epsilon\Tr(u^j))
\big(
[a,R(u_i\circ b\circ u_j)]\!-\![b,R(u_i\circ a\circ u_j)]
\big)
\\
&=
\{a,b\}^{R}_3+2\epsilon \{a,b\}^{R}_2+\epsilon^2\{a,b\}^{R}_1
\,.
\end{split}
\end{equation}
Here $\{u_i\}_{i\in I}$ and $\{u^i\}_{i\in I}$ are dual bases of $\mf g$ with respect to the trace form
$\langle a|b\rangle=\Tr(a\circ b)$, and $\epsilon\in\mb F$.
Identifying $\mf g$ and $\mf g^*$ using the trace form, we obtain the following expression for arbitrary
$f,g\in S(\mf g)$:
\begin{equation}
\label{intro:eq4}
\begin{array}{r}
\displaystyle{
\vphantom{\Big(}
\{f,g\}^{R,\epsilon}(L)
=
\frac12\Tr\Big( L \circ [d_Lf,R((L+\epsilon\id)\circ d_Lg\circ (L+\epsilon\id))] \Big)
} \\
\displaystyle{
\vphantom{\Big(}
-
\frac12\Tr\Big( L \circ [d_Lg,R((L+\epsilon\id)\circ d_Lf\circ (L+\epsilon\id))] \Big)
\,,}
\end{array}
\end{equation}
where $L\in\mf g^*\cong\mf g$ and $d_Lf=\sum_{i\in I}\frac{\partial f}{\partial u_i}(L)u_i$.

Furthermore, they showed that the subalgebra of Casimirs $S(\mf g)^{\ad \mf g}$ is commutative with
respect to all three Poisson brackets. This leads to the following family of compatible Hamiltonian ODE
($C\in S(\mf g)^{\ad \mf g}$, $i,j,h,k\in I$):
\begin{equation}\label{intro:eq5}
\begin{array}{l}
\displaystyle{
\vphantom{\Big(}
\frac{du_j}{dt_{1}}
%=
%\{C,u_j\}^{R}_1
=
\frac12 
\sum_{i\in I}
\frac{\partial C}{\partial u_i}
\big(
[u_i,R(u_j)]-[u_j,R(u_i)]
\big)
\,,} \\
\displaystyle{
\vphantom{\Big(}
\frac{du_j}{dt_{2}}
%=
%\{C,u_j\}^{R}_2
=
\frac14 \sum_{i,h\in I}
\frac{\partial C}{\partial u_i}
u^h
\big(
[u_i,R(u_h\circ u_j+u_j\circ u_h)]-[u_j,R(u_h\circ u_i+u_i\circ u_h)]
\big)
\,,} \\
\displaystyle{
\vphantom{\Big(}
\frac{du_j}{dt_{3}}
%=
%\{C,u_j\}^{R}_3
=
\frac12 \sum_{i,h,k\in I}
\frac{\partial C}{\partial u_i}
u^hu^k
\big(
[u_i,R(u_h\circ u_j\circ u_h)]-[u_j,R(u_h\circ u_i\circ u_k)]
\big)
\,.}
\end{array}
\end{equation}
Taking the Casimirs ($k\in\mb Z_{\geq1}$)
\begin{equation}\label{intro:eq6}
C_k
=
\frac1k
\sum_{i_1,\dots,i_k}
u_{i_1}\dots u_{i_k} \Tr( u^{i_1}\circ\dots\circ u^{i_k} )
\,,
\end{equation}
we obtain the following triple Lenard-Magri scheme for the time evolutions $t_{k,i}$, $k\in\mb Z_{\geq1}$
and $i\in I$, of Hamiltonian
ODE with respect to the Poisson structure $\{\cdot\,,\cdot\}_j^R$, $j=1,2,3$:
\begin{equation}\label{intro:eq7}
\frac{du_j}{dt_{k+1,1}}
=
\frac{du_j}{dt_{k,2}}
=
\frac{du_j}{dt_{k-1,3}}
=
\frac12 
\sum_{i_1,\dots,i_{k}}
u_{i_1}\dots u_{i_{k}} [R(u^{i_1}\circ\dots\circ u^{i_{k}}),u_j]
\,.
\end{equation}

In the present paper we start by giving a self-contained exposition of the Oevel-Ragnisco (OR) theory
(see Sections \ref{sec:2}-\ref{sec:3}). 
In the subsequent Sections \ref{sec:4.1}-\ref{sec:ham} we extend this theory to the
infinite-dimensional case of Hamiltonian PDE.
(Some authors applied the theory to the infinite-dimensional case without any justification,
see e.g. \cite{BM94} and \cite{KO93}.)

We consider again a unital associative algebra $A$ with a non-degenerate trace form, but the relevant $R$-matrix
is over the associative algebra $\mf g=\mc F((\partial^{-1}))\otimes A$, where $\mc F$ is a differential
algebra over $\mb F$ of ``test functions'' and $\mc F((\partial^{-1}))$ is the algebra of pseudodifferential operators with
coefficient in $\mc F$. We require only the existence of a linear functional $\int:\mc F\to\mb F$, which
vanishes on $\partial\mc F$ and such that the bilinear form $\tint f g$ is non-degenerate in $\mc F$.
The three most important examples of $R$-matrices
are special cases of \eqref{intro:eq2} (for an infinite-dimensional $\mf g$).
Namely, for $k\in\mb Z_{\geq0}$ we consider the direct sum decomposition as left
$\mc F$-modules
$$
\mf g=\left(\mc F[\partial]\partial^k\otimes A\right)
\oplus
\left(\mc F[[\partial^{-1}]]\partial^{k-1}\otimes A\right)
\,,
$$
and denote by $\Pi_+$ and $\Pi_-$ the projections on the first and second summand respectively. The first
summand is always an associative subalgebra, but the second summand is an associative subalgebra only
for $k=0,\,1$, and a Lie subalgebra for $k=2$, provided that $A$ is commutative. 
The corresponding $R$-matrices are denoted by $R^{(0)}$,
$R^{(1)}$ and $R^{(2)}$ respectively. 

Fix dual bases $\{E_\alpha\}_{\alpha\in I}$ and $\{E^\alpha\}_{\alpha\in I}$ of $A$ with respect to the trace form,
and let, for $N\in\mb Z$, $\mc V_N$ be the algebra over $\mb F$ of differential polynomials in the indeterminates
$u_{p,\alpha}^{(n)}$, where $n\in\mb Z_{\geq0}$, $p\geq-N-1$, $\alpha\in I$,
with derivation $\partial$ defined by $\partial u_{p,\alpha}^{(n)}=u_{p,\alpha}^{(n+1)}$.
We denote by $\mc V_\infty\yhat$ the inverse limit 
$\displaystyle\lim_{\substack{\longleftarrow \\ N}}\mc V_N$
for the homomorphisms $\pi_N:\mc V_N\to\mc V_{N-1}$ defined by
$\pi_N(u_{-N-1,\alpha}^{(n)})=0$ and $\pi_N(u_{p,\alpha}^{(n)})=u_{p,\alpha}^{(n)}$
for $p>-N-1$.

In order to develop an infinite-dimensional version of the OR theory we need to introduce the notion of a 
continuous Poisson vertex algebra (PVA) structure on $\mc V_\infty\yhat$. Besides the usual
properties of a PVA $\lambda$-bracket $\{\cdot\,_\lambda\,\cdot\}$ on  $\mc V_\infty\yhat$
it should satisfy the continuity property:
for every $N\in\mb Z$ there exists sufficiently large $M\in\mb Z$ such that $\pi_M(f)=0$
or $\pi_M(g)=0$ imply that $\pi_N\{f_\lambda g\}=0$ (the Jacobi identity makes sense only under the assumption
of continuity).

The differential algebra $\mc V_\infty\yhat$ plays the same role in the ``affine'' OR theory as $S(\mf g)$
plays in the finite-dimensional OR theory. By analogy with the latter we write down formula
\eqref{eq:black11} for the bracket on the space $\mc V_\infty\yhat/\partial\mc V_\infty\yhat$,
which leads to formula \eqref{eq:bblack}, which defines the corresponding $\lambda$-bracket on $\mc V_\infty\yhat$.

We encode all the differential variables $u_{p,\alpha}$ of $\mc V_\infty\yhat$ in a generating series
$$
L(z)=\sum_{p\in\mb Z,\alpha\in I}u_{p,\alpha}z^{-p-1}\otimes E^\alpha
\in\mc V_\infty\yhat[[z,z^{-1}]]\otimes A
\,,
$$
and deduce from \eqref{eq:bblack} the following explicit formula for the $\lambda$-bracket
(cf. \eqref{20180414:eq4})
\begin{equation}\label{20180414:eq4-intro}
\begin{array}{l}
\displaystyle{
\vphantom{\Big(}
\{L_1(z)_\lambda L_2(w)\}^{R,\epsilon}
} \\
\displaystyle{
\vphantom{\Big(}
=
\frac12
\Omega
\bigg(
%1
(L_1(w\!+\!\lambda\!+\!\partial)\!+\!\epsilon\id)
\big(\big|_{\zeta=z+\partial}\!L_1(z)\big)
R_\zeta\big(
\delta(\zeta\!-\!\xi)\big)
(\big|_{\xi=\zeta\!-\!z+\!w\!+\!\lambda\!+\!\partial}
L_2(w)\!+\!\epsilon\id)
} \\
\displaystyle{
\vphantom{\Big(}
%2
-
(L_1(w+\lambda+\partial)+\epsilon\id)
R_z\big(
\delta(z-\xi)\big)
\big|_{\xi=w+\lambda+\partial}
L_2^*(\lambda-z)
(L_2(w)+\epsilon\id)
} \\
\displaystyle{
\vphantom{\Big(}
%3
+
L_1(w+\lambda+\partial)
(L_1(z)+\epsilon\id)
R_w\big(
\delta(\zeta-w)\big)
\big(\big|_{\zeta=z-\lambda-\partial}
L_2^*(\lambda-z)+\epsilon\id\big)
} \\
\displaystyle{
\vphantom{\Big(}
%4
-
(L_1(z)+\epsilon\id)
R_\xi\big(
\delta(\zeta-\xi)\big)
\big(\big|_{\zeta=z-\lambda-\partial}
L_2^*(\lambda-z)+\epsilon\id\big)
\big|_{\xi=w+\partial}
L_2(w)
\bigg)
\,,}
\end{array}
\end{equation}
where $\Omega=\sum_{\alpha\in I}E^\alpha\otimes E_\alpha$ and, 
as usual, $L_1(z)=L(z)\otimes\id$ and $L_2(w)=\id\otimes L(w)$.
Also, by $R_z(\delta(z-w))$ we denote the symbol of the pseudodifferential operator $R(\delta(\partial-w))$.
We call equation \eqref{20180414:eq4-intro} 
the $\epsilon$-Adler identity associated to the $R$-matrix $R$ since the 
coefficient of $2\epsilon$ in $\{\cdot\,_\lambda\,\}^{R,\epsilon}$ for $R=R^{(0)}$ and $A=gl_N$
coincides with the Adler identity for $gl_N$, which appeared in \cite{DSKV16} and \cite{DSKV17}.

We prove that, for $R=R^{(0)}$, $R^{(1)}$ and $R^{(2)}$, 
the $\epsilon$-Adler identity defines a 
continuous PVA $\lambda$-brackets on $\mc V_\infty\yhat$ and,
for $R=R^{(0)}$, it defines three compatible continuous PVA $\lambda$-brackets on $\mc V_\infty\yhat$
(see Theorem \ref{thm:main1}).

Due to the $\epsilon$-Adler identities, the elements $h_0=0$,
$h_n=-\frac{1}{n}\Res\Tr L^{n}(z)dz$, $n\in\mb Z_{\geq1}$,
are densities of Hamiltonian functionals in involution, defining a compatible hierarchy
of Hamiltonian PDE, satisfying the relations
\begin{equation}\label{intro:eq8}
\frac{dL(w)}{dt_{n}}
=
\{\tint h_{n},L(w)\}^{R,\epsilon}
=\frac12
[R((L+\epsilon\id)\circ L^{n-1}\circ (L+\epsilon\id)),L](w)
\,,\,\,n\in\mb Z_{\geq0},\,
\end{equation}
It follows that we have a triple Lenard-Magri relation
\begin{equation}\label{intro:eq9}
\begin{split}
\{\tint h_{n-1},L(w)\}^{R,3}&=\{\tint h_{n},L(w)\}^{R,2}
=\{\tint h_{n+1},L(w)\}^{R,1}
\\
&=\frac12[R (L^n),L](w)
\,,\quad
n\in\mb Z_{\geq1}\,.
\end{split}
\end{equation}
Equation \eqref{intro:eq9} is the affine analogue of equation \eqref{intro:eq7}.

\smallskip

Throughout the paper all vector spaces, Hom's and tensor products
are over a base field $\mb F$ of characteristic zero.

\subsubsection*{Acknowledgments} 
A. De Sole  was supported  by  the national PRIN project n. 2017YRA3LK.
V. G. Kac was supported by the Simons collaboration grant and the Bert and Ann Kostant fund.
A. De Sole and D. Valeri acknowledge the financial support of the
project MMNLP (Mathematical Methods in Non Linear Physics) of the
INFN.

%%%%%%%%%%%%%%%%%%%%%%%%%%%%%%%%
\section{Classical $R$-matrix over a Lie algebra}\label{sec:2}

\begin{definition}[\cite{STS83}]\label{20180407:def}
A (classical) \emph{R-matrix} over a Lie algebra $\mf g$ is an endomorphism $R\in\End(\mf g)$
satisfying the \emph{modified Yang-Baxter equation}:
\begin{equation}\label{eq:mod-YB}
[R(a),R(b)]-R([R(a),b])-R([a,R(b)])+[a,b]=0
\,.
\end{equation}
\end{definition}
\begin{example}[\cite{STS83}]\label{ex:R}
Suppose that we have a direct sum decomposition (as vector spaces)
$\mf g=\mf g_+\oplus\mf g_-$,
where $\mf g_{\pm}\subset\mf g$ are two subalgebras of the Lie algebra $\mf g$,
and denote by $\Pi_{\pm}:\,\mf g\twoheadrightarrow\mf g_{\pm}$
the corresponding projections.
Then,
\begin{equation}\label{eq:R}
R:=\Pi_+-\Pi_-
\,,
\end{equation}
is an $R$-matrix over $\mf g$.
Indeed, denoting $a_{\pm}=\Pi_{\pm}(a)$ and $b_{\pm}=\Pi_{\pm}(b)$, we have
$$
[R(a),b]+[a,R(b)]
=
2[a_+,b_+]-2[a_-,b_-]
\,,
$$
so that
\begin{equation}\label{20180409:eqx}
R([R(a),b]+[a,R(b)])
=
2[a_+,b_+]+2[a_-,b_-]
\,.
\end{equation}
On the other hand,
\begin{equation}\label{20180409:eqy}
[R(a),R(b)]
=
[a_+,b_+]-[a_+,b_-]-[a_-,b_+]+[a_-,b_-]
\,.
\end{equation}
Equation \eqref{eq:mod-YB} follows immediately from \eqref{20180409:eqx} and \eqref{20180409:eqy}.
\end{example}
\begin{lemma}[\cite{STS83}]\label{20180405:rem1}
If $R$ is an $R$-matrix over the Lie algebra $\mf g$, then
\begin{equation}\label{20180407:eq1}
[a,b]_R:=[R(a),b]+[a,R(b)]
\end{equation}
is a Lie algebra bracket on $\mf g$.
\end{lemma}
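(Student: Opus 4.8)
The plan is to reduce the Jacobi identity for $[\cdot\,,\cdot]_R$ to the Jacobi identity for the original bracket $[\cdot\,,\cdot]$ by means of the two twisted operators $R_\pm:=R\pm\id\in\End(\mf g)$. Bilinearity of $[\cdot\,,\cdot]_R$ is obvious, and skew-symmetry follows at once from \eqref{20180407:eq1}:
\[
[b,a]_R=[R(b),a]+[b,R(a)]=-[a,R(b)]-[R(a),b]=-[a,b]_R\,.
\]
So the only thing left to prove is that the cyclic sum
\[
\mathrm J(a,b,c):=[[a,b]_R,c]_R+[[b,c]_R,a]_R+[[c,a]_R,b]_R
\]
vanishes for all $a,b,c\in\mf g$.

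The crucial step is the identity
\[
R_\pm([a,b]_R)=[R_\pm(a),R_\pm(b)]\qquad(a,b\in\mf g)\,,
\]
which says that each of $R_+,R_-$ intertwines the (not yet known to be Lie) bracket $[\cdot\,,\cdot]_R$ with the Lie bracket $[\cdot\,,\cdot]$. To check it, expand
\[
R_\pm([a,b]_R)=\big(R([R(a),b])+R([a,R(b)])\big)\pm\big([R(a),b]+[a,R(b)]\big)\,,
\]
rewrite $R([R(a),b])+R([a,R(b)])=[R(a),R(b)]+[a,b]$ using the modified Yang--Baxter equation \eqref{eq:mod-YB}, and compare the result with $[R_\pm(a),R_\pm(b)]=[R(a),R(b)]\pm[R(a),b]\pm[a,R(b)]+[a,b]$; the two sides coincide. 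This is the only place where \eqref{eq:mod-YB} is used, and it is the main (though short) point of the argument.

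Applying this identity twice gives $R_\pm\big([[a,b]_R,c]_R\big)=\big[[R_\pm(a),R_\pm(b)],R_\pm(c)\big]$ for all $a,b,c$. Taking the cyclic sum over $(a,b,c)$ and invoking the Jacobi identity of the Lie algebra $\mf g$, the right-hand side vanishes, so
\[
R_\pm\big(\mathrm J(a,b,c)\big)=0\,.
\]
Finally, since $R_+-R_-=2\,\id$, subtracting the two relations yields $2\,\mathrm J(a,b,c)=0$, hence $\mathrm J(a,b,c)=0$, which is the desired Jacobi identity. (Alternatively, one can proceed without $R_\pm$: expand $[[a,b]_R,c]_R=[R([a,b]_R),c]+[[a,b]_R,R(c)]$, use \eqref{eq:mod-YB} in the form $R([a,b]_R)=[R(a),R(b)]+[a,b]$, and then verify directly, with repeated use of the Jacobi identity for $[\cdot\,,\cdot]$, that the cyclic sum of $[[R(a),R(b)],c]+[[a,b],c]+[[a,b]_R,R(c)]$ equals zero.)
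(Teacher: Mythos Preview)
Your proof is correct. The approach via $R_\pm=R\pm\id$ is a genuinely different (and classical) route from the paper's. The paper proceeds by direct expansion: it writes $[a,[b,c]_R]_R=[R(a),[R(b),c]]+[R(a),[b,R(c)]]+[a,R([R(b),c]+[b,R(c)])]$, applies \eqref{eq:mod-YB} once to replace the last summand by $[a,[R(b),R(c)]]+[a,[b,c]]$, and then observes that the cyclic sum of the resulting four terms vanishes by the ordinary Jacobi identity. This is precisely the ``alternative'' you sketch in your final parenthetical. Your main argument instead packages the use of \eqref{eq:mod-YB} into the single statement that $R_\pm:(\mf g,[\cdot\,,\cdot]_R)\to(\mf g,[\cdot\,,\cdot])$ are bracket-preserving, which immediately forces $R_\pm(\mathrm J)=0$ and hence $\mathrm J=0$ from $R_+-R_-=2\,\id$. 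This is conceptually cleaner and highlights the structural reason behind the result (indeed, it shows $R_\pm$ are Lie algebra homomorphisms once one knows $[\cdot\,,\cdot]_R$ is a Lie bracket); the paper's computation is slightly more hands-on but avoids introducing the auxiliary maps.
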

\begin{proof}
The bracket \eqref{20180407:eq1} is obviously skewsymmetric,
so we only need to prove the Jacobi identity,
i.e. that the sum over cyclic permutations of 
$[a,[b,c]_R]_R$ vanishes.
By equation \eqref{eq:mod-YB} we have
$$
\begin{array}{l}
\displaystyle{
\vphantom{\Big(}
[a,[b,c]_R]_R
=
[R(a),[R(b),c]]+[R(a),[b,R(c)]]+
\big(
[a,R([R(b),c])+R([b,R(c)])]
\big)
} \\
\displaystyle{
\vphantom{\Big(}
=
[R(a),[R(b),c]]+[R(a),[b,R(c)]]+
\big(
[a,[R(b),R(c)]]+[a,[b,c]]
\big)
\,,}
\end{array}
$$
and the sum over cyclic permutations of the above sum is zero
due to the Jacobi identity for the commutator in $\mf g$.
\end{proof}
For example, if $R=\id_{\mf g}$, we recover the original Lie bracket of $\mf g$ multiplied by the factor $2$.
\begin{lemma}\label{20180409:lem1}
Let $\langle\cdot\,|\,\cdot\rangle$ be a non-degenerate, symmetric, invariant %inner product 
bilinear form on the Lie algebra $\mf g$,
let $R\in\End (\mf g)$ be an $R$-matrix over $\mf g$,
and let $R^*$ the adjoint of $R$ with respect to $\langle\cdot\,|\,\cdot\rangle$.
\begin{enumerate}[(a)]
\item
Each of the following identities is equivalent to the 
modified Yang-Baxter equation \eqref{eq:mod-YB} for $R$:
\begin{equation}\label{eq:mod-YB-a}
[R(a),R^*(b)]-R^*([R(a),b])+R^*([a,R^*(b)])-[a,b]=0
\,,
\end{equation}
and 
\begin{equation}\label{eq:mod-YB-b}
[R^*(a),R(b)]+R^*([R^*(a),b])-R^*([a,R(b)])-[a,b]=0
\,.
\end{equation}
\item
The antisymmetric part $\frac12(R-R^*)\in\End(\mf g)$ is also an $R$-matrix over $\mf g$
if and only if the following equation holds:
\begin{equation}\label{eq:mod-YB-c}
[R^*(a),R^*(b)]+R([R^*(a),b])+R([a,R^*(b)])+[a,b]=0
\,.
\end{equation}
\item
Equation \eqref{eq:mod-YB-c} implies the following identity:
\begin{equation}\label{eq:mod-YB-d}
\frac12 \sum_\sigma\sign(\sigma)
\langle [R^*(a),R^*(b)] | c \rangle
=
-\langle [a,b] | c \rangle
\,,
\end{equation}
where the sum is over all permutations of $a,b,c$ and $\sign(\sigma)$ is the sign of the permutation.
\end{enumerate}
\end{lemma}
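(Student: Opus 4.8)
The plan is to prove the three parts in order, treating (a) as the computational core and deriving (b) and (c) as consequences. For part (a), I would start from the modified Yang-Baxter equation \eqref{eq:mod-YB} and use the invariance of $\langle\cdot\,|\,\cdot\rangle$ to move $R$ to $R^*$ systematically. Concretely, pair \eqref{eq:mod-YB} for $R$ with an arbitrary $c\in\mf g$: each of the four terms $\langle[R(a),R(b)]\,|\,c\rangle$, $\langle R([R(a),b])\,|\,c\rangle$, etc., can be rewritten using $\langle R(x)\,|\,y\rangle=\langle x\,|\,R^*(y)\rangle$ together with invariance $\langle[x,y]\,|\,z\rangle=\langle x\,|\,[y,z]\rangle$ to relocate the operators onto whichever slot is convenient. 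Carrying this out, one recognizes the resulting identity (after renaming the free vectors) as precisely \eqref{eq:mod-YB-a}; since all manipulations are reversible and $\langle\cdot\,|\,\cdot\rangle$ is non-degenerate, this establishes the equivalence. Identity \eqref{eq:mod-YB-b} is then obtained from \eqref{eq:mod-YB-a} by the substitution $R\mapsto R^*$ (noting $(R^*)^*=R$) and a sign/relabeling bookkeeping, or alternatively by applying the same pairing trick to \eqref{eq:mod-YB} a second time moving operators onto a different slot. The main obstacle here is purely bookkeeping: keeping track of which argument each $R^*$ acts on and not losing signs when commutators get cycled; a clean way to organize it is to fix the convention of always pairing against a third element $c$ and always pushing brackets to the left.

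For part (b), I would apply the general fact that $\frac12(R-R^*)$ is an $R$-matrix exactly when it satisfies \eqref{eq:mod-YB}, and expand \eqref{eq:mod-YB} for $\frac12(R-R^*)$ into its sixteen terms grouped by how many factors of $R$ versus $R^*$ appear. Using \eqref{eq:mod-YB} for $R$ itself, and identities \eqref{eq:mod-YB-a} and \eqref{eq:mod-YB-b} from part (a) to rewrite the ``mixed'' terms (those with one $R$ and one $R^*$), the whole expression should collapse: the pure-$R$ piece cancels by \eqref{eq:mod-YB}, the mixed pieces cancel against each other using (a), and what survives is exactly a multiple of the left-hand side of \eqref{eq:mod-YB-c}. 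Hence $\frac12(R-R^*)$ is an $R$-matrix if and only if \eqref{eq:mod-YB-c} holds. The care needed is again combinatorial — it is essentially a $4\times4$ expansion — but every cross term is handled by an identity already in hand, so no genuinely new input is required.

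For part (c), I would pair \eqref{eq:mod-YB-c} with a third element $c$ and symmetrize. Pairing gives
\[
\langle[R^*(a),R^*(b)]\,|\,c\rangle
+\langle R([R^*(a),b])\,|\,c\rangle
+\langle R([a,R^*(b)])\,|\,c\rangle
+\langle[a,b]\,|\,c\rangle=0\,.
\]
Now sum over all six permutations $\sigma$ of $(a,b,c)$ weighted by $\sign(\sigma)$. The key observation is that the two middle terms, after moving $R$ to $R^*$ via $\langle R(x)\,|\,c\rangle=\langle x\,|\,R^*(c)\rangle$ and using invariance, become expressions of the form $\langle[R^*(a),b]\,|\,R^*(c)\rangle$-type that are cyclically related to $\langle[R^*(a),R^*(b)]\,|\,c\rangle$ — more precisely, under the alternating sum over all permutations these ``one-$R^*$'' terms reorganize into copies of the ``two-$R^*$'' term. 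Likewise $\sum_\sigma\sign(\sigma)\langle[a,b]\,|\,c\rangle$ and $\sum_\sigma\sign(\sigma)\langle[R^*(a),R^*(b)]\,|\,c\rangle$ are each invariant (up to the stated factors) because $\langle[\cdot,\cdot]\,|\,\cdot\rangle$ is totally antisymmetric in its three arguments by invariance plus skew-symmetry of the bracket. Collecting the coefficients yields $\frac12\sum_\sigma\sign(\sigma)\langle[R^*(a),R^*(b)]\,|\,c\rangle=-\langle[a,b]\,|\,c\rangle$, which is \eqref{eq:mod-YB-d}. The one subtlety to check carefully is the combinatorial factor: one must verify that the middle terms contribute with the right multiplicity so that the final coefficient is exactly $\tfrac12$ on the left and $-1$ on the right, and this is where I would slow down and write the six permutations explicitly.
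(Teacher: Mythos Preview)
Your approach matches the paper's essentially line for line. One correction in part (a): the substitution $R\mapsto R^*$ in \eqref{eq:mod-YB-a} does \emph{not} yield \eqref{eq:mod-YB-b} --- it produces an identity equivalent to the modified Yang--Baxter equation for $R^*$, not for $R$, and there is no hypothesis that $R^*$ is an $R$-matrix. The paper (and your own ``relabeling'' alternative) obtains \eqref{eq:mod-YB-b} from \eqref{eq:mod-YB-a} simply by exchanging $a\leftrightarrow b$ and using skewsymmetry of the bracket. For part (c), your plan to alternate over all six permutations of the paired \eqref{eq:mod-YB-c} works but is heavier than necessary: the paper pairs \eqref{eq:mod-YB-c} with $c$ just once, rewrites $\langle R([R^*(a),b])\,|\,c\rangle$ and $\langle R([a,R^*(b)])\,|\,c\rangle$ via adjointness and invariance as the two cyclic rotations $\langle[R^*(c),R^*(a)]\,|\,b\rangle$ and $\langle[R^*(b),R^*(c)]\,|\,a\rangle$, and \eqref{eq:mod-YB-d} follows immediately with no combinatorial counting needed.
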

In \eqref{eq:mod-YB-d}, and throughout the remainder of the paper,
in order to simplify notation,
%to denote an alternating sum over the permutations of $a,b,c$,
we write 
\begin{equation}\label{eq:sums}
\sum_\sigma \sign(\sigma) f(a,b,c)\,,
\end{equation}
in place of $\sum_\sigma \sign(\sigma) f(\sigma(a),\sigma(b),\sigma(c))$.
\begin{proof}
Paring the modified Yang-Baxter equation \eqref{eq:mod-YB} with $c\in\mf g$,
we get
\begin{equation}\label{20180409:eq1}
\langle [R(a),R(b)] | c \rangle
-
\langle R([R(a),b]) | c \rangle
-
\langle R([a,R(b)]) | c \rangle
+
\langle [a,b] | c \rangle
=
0
\,.
\end{equation}
By the definition of $R^*$ and the invariance of the %trace \eqref{eq:invariance}, we have
bilinear form $\langle\cdot\,|\,\cdot\rangle$, we have
$$
\begin{array}{l}
\displaystyle{
\vphantom{\Big(}
\langle [R(a),R(b)] | c \rangle
=
\langle a | R^*([R(b),c]) \rangle
\,,} \\
\displaystyle{
\vphantom{\Big(}
\langle R([R(a),b]) | c \rangle
=
\langle a | R^*([b,R^*(c)]) \rangle
\,} \\
\displaystyle{
\vphantom{\Big(}
\langle R([a,R(b)]) | c \rangle
=
\langle a | [R(b),R^*(c)] \rangle
\,} \\
\displaystyle{
\vphantom{\Big(}
\langle [a,b] | c \rangle
=
\langle a | [b,c] \rangle
\,.}
\end{array}
$$
Hence, \eqref{20180409:eq1} gives
$$
\langle a | R^*([R(b),c]) \rangle
-
\langle a | R^*([b,R^*(c)]) \rangle
-
\langle a | [R(b),R^*(c)] \rangle
+
\langle a | [b,c] \rangle
=
0
\,,
$$
which is the same as \eqref{eq:mod-YB-a},
with $b$ and $c$ in place of $a$ and $b$.
Equation \eqref{eq:mod-YB-b}
is obtained from \eqref{eq:mod-YB-a}
by exchanging the roles of $a$ and $b$ and using skewsymmetry.
This proves part (a).
Let us prove part (b).
Writing the modified Yang-Baxter equation \eqref{eq:mod-YB}
for the operator $\frac12(R-R^*)$, we get
\begin{equation}\label{20180409:eq2}
\begin{array}{l}
\displaystyle{
\vphantom{\Big(}
[R(a),R(b)]-[R(a),R^*(b)]-[R^*(a),R(b)]+[R^*(a),R^*(b)]
} \\
\displaystyle{
\vphantom{\Big(}
-R([R(a),b])+R^*([R(a),b])-R^*([R^*(a),b])+R([R^*(a),b])
} \\
\displaystyle{
\vphantom{\Big(}
-R([a,R(b)])-R^*([a,R^*(b)])+R^*([a,R(b)])+R([a,R^*(b)])
} \\
\displaystyle{
\vphantom{\Big(}
+4[a,b]=0
\,.}
\end{array}
\end{equation}
Hence,
in view of equations \eqref{eq:mod-YB}, \eqref{eq:mod-YB-a} and \eqref{eq:mod-YB-b},
equation \eqref{20180409:eq2} reduces to \eqref{eq:mod-YB-c},
proving (b).
Finally, we prove part (c).
We have, by the definition of $R^*$ and the invariance of the %trace \eqref{eq:invariance},
bilinear form $\langle\cdot\,|\,\cdot\rangle$,
$$
\begin{array}{l}
\displaystyle{
\vphantom{\Big(}
\frac12 \sum_\sigma\sign(\sigma)
\langle [R^*(a),R^*(b)] | c \rangle
} \\
\displaystyle{
\vphantom{\Big(}
=
\langle [R^*(a),R^*(b)] | c \rangle
+
\langle [R^*(b),R^*(c)] | a \rangle
+
\langle [R^*(c),R^*(a)] | b \rangle
} \\
\displaystyle{
\vphantom{\Big(}
=
\langle [R^*(a),R^*(b)] | c \rangle
+
\langle R([a,R^*(b)]) | c \rangle
+
\langle R([R^*(a),b]) | c \rangle
} \\
\displaystyle{
\vphantom{\Big(}
=
-\langle [a,b] | c \rangle
\,.}
\end{array}
$$
For the last equality we used \eqref{eq:mod-YB-c}.
\end{proof}

%%%%%%%%%%%%%%%%%%%%%%%%%%%%%%%%
\section{Oevel-Ragnisco Poisson structures 
for finite dimensional associative algebras}\label{sec:3}

%%%
\subsection{The O-R construction}\label{sec:3.1}

Let $\mf g$ be a 
finite dimensional unital associative algebra, 
with unity $\id$,
associative product $\circ$, and the Lie bracket $[\cdot\,,\,\cdot]$ given by the commutator:
$[a,b]=a\circ b-b\circ a$.
Recall that a \emph{trace form} on $\mf g$ is a linear function
$\Tr(\cdot):\,\mf g\to\mb F$,
vanishing on commutators: $\Tr([a,b])=0$ for all $a,b\in\mf g$, 
and non-degenerate, 
in the sense that $\Tr(a\circ b)=0$ for all $b\in\mf g$
implies that $a=0$.
Any trace form has the cyclic property:
\begin{equation}\label{eq:cyclic}
\Tr(a_1\circ a_2\circ\dots\circ a_k)
=
\Tr(a_k\circ a_1\circ\dots\circ a_{k-1})
\,,
\end{equation}
hence the invariance property:
\begin{equation}\label{eq:invariance}
\Tr(a\circ[b,c])
=
\Tr([a,b]\circ c)
\,.
\end{equation}
We have the corresponding non-degenerate, symmetric, invariant bilinear form on $\mf g$
\begin{equation}\label{20180412:eq3}
\langle a|b \rangle=\Tr(a\circ b)
\,.
\end{equation}

The following construction is due independently to Oevel and Ragnisco \cite{OR89} 
and to Lie and Parmentier \cite{LP89}.
\begin{theorem}\label{20180408:thm}
Assume that $R\in\End(\mf g)$ is an $R$-matrix on the algebra $\mf g$.
Then, we have an $\epsilon$-family ($\epsilon\in\mb F$) of Poisson brackets on $S(\mf g)$,
defined by the following Lie brackets on $\mf g$ with values in $S(\mf g)$,
extended to $S(\mf g)$ by the Leibniz rules
($a,b,c\in\mf g$):
\begin{equation}\label{20180408:eq1}
\{a,b\}^{R,\epsilon}
=
\frac12 \sum_{i,j\in I}
(u^i+\epsilon\Tr(u^i))(u^j+\epsilon\Tr(u^j))
\big(
[a,R(u_i\circ b\circ u_j)]-[b,R(u_i\circ a\circ u_j)]
\big)
\,,
\end{equation}
where $\{u_i\}_{i\in I}$, $\{u^i\}_{i\in I}$ are bases of $\mf g$
dual with respect to the inner product $\langle\cdot\,|\,\cdot\rangle$ in \eqref{20180412:eq3},
i.e. such that $\Tr(u_i\circ u^j)=\delta_{i,j}$.
The $\epsilon$-family of Poisson brackets $\{\cdot\,,\,\cdot\}^{R,\epsilon}$
has the expansion
\begin{equation}\label{20180408:eq2}
\{\cdot\,,\,\cdot\}^{R,\epsilon}
=
\{\cdot\,,\,\cdot\}^{R}_3+2\epsilon \{\cdot\,,\,\cdot\}^{R}_2+\epsilon^2\{\cdot\,,\,\cdot\}^{R}_1
\,,
\end{equation}
where $\{\cdot\,,\,\cdot\}^{R}_i$, $i=1,2,3$ are the following brackets on $\mf g$
with values in $S(\mf g)$ ($a,b,c\in\mf g$):
\begin{equation}\label{20180408:eq3}
\begin{array}{l}
\displaystyle{
\vphantom{\Big(}
\{a,b\}^{R}_3
=
\frac12 \sum_{i,j\in I}
u^iu^j
\big(
[a,R(u_i\circ b\circ u_j)]-[b,R(u_i\circ a\circ u_j)]
\big)
\,,} \\
\displaystyle{
\vphantom{\Big(}
\{a,b\}^{R}_2
=
\frac14 \sum_{i\in I}
u^i
\big(
[a,R(u_i\circ b+b\circ u_i)]-[b,R(u_i\circ a+a\circ u_i)]
\big)
\,,} \\
\displaystyle{
\vphantom{\Big(}
\{a,b\}^{R}_1
=
\frac12 
\big(
[a,R(b)]-[b,R(a)]
\big)
=
\frac12[a,b]_R
\,.}
\end{array}
\end{equation}
The $1$-st and $3$-rd brackets $\{\cdot\,,\,\cdot\}^{R}_1$ and $\{\cdot\,,\,\cdot\}^{R}_3$
are Lie brackets, i.e. they define Poisson algebra structures on $S(\mf g)$.
If, moreover, $\frac12(R-R^*)$ 
is also an $R$-matrix on $\mf g$,
then the $2$-nd bracket $\{\cdot\,,\,\cdot\}^{R}_2$
is also a Lie bracket,
and all three brackets $\{\cdot\,,\,\cdot\}^{R}_i$, $i=1,2,3$, are compatible,
in the sense that any their linear combination is also a Lie bracket.
\end{theorem}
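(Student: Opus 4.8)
## Proof proposal

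The plan is to verify the three assertions in turn --- that $\{\cdot\,,\,\cdot\}_1^R$ and $\{\cdot\,,\,\cdot\}_3^R$ are Lie brackets, that $\{\cdot\,,\,\cdot\}_2^R$ is a Lie bracket under the extra hypothesis, and that the three are compatible --- reducing as much as possible to identities already available in Lemma \ref{20180405:rem1} and Lemma \ref{20180409:lem1}. For $\{\cdot\,,\,\cdot\}_1^R=\frac12[\cdot\,,\,\cdot]_R$ there is nothing to do: it is a scalar multiple of the Lie bracket produced by Lemma \ref{20180405:rem1}, hence a Poisson bracket on $S(\mf g)$ once extended by the Leibniz rule. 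The substantive work is the third bracket (and then the second).

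For $\{\cdot\,,\,\cdot\}_3^R$ I would first rewrite the bracket in a basis-free way. Using the identification $\mf g\cong\mf g^*$ via the trace form, the operator $a\mapsto\sum_{i}u^i\,u_i\circ a$ (landing in $\mf g\otimes S(\mf g)$, degree $1$ in $S(\mf g)$) is just multiplication by the ``tautological element'' $L=\sum_i u^i\otimes u_i\in\mf g\otimes S^1(\mf g)$, so that $\{a,b\}_3^R$ becomes $\frac12\big([a,R(L\circ b\circ L)]-[b,R(L\circ a\circ L)]\big)$ with $L$ playing the role of a parameter that commutes with everything in $\mf g$. This is exactly the Poisson bracket \eqref{intro:eq4} read off on linear functions, i.e. the "quadratic" Oevel--Ragnisco bracket; the point is that it is the pullback of the $R$-bracket under the quadratic map $L\mapsto$ (something like $L\circ(-)\circ L$), and skewsymmetry is manifest. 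The Jacobi identity is then checked by a direct computation: expand $\{a,\{b,c\}_3^R\}_3^R$, carry the outer $L$-factors through using associativity and the fact that the $S(\mf g)$-valued entries commute, collect the $R\circ R$, $R\circ[\,,\,]$, and $[\,,\,]$ terms after the cyclic sum over $a,b,c$, and invoke the modified Yang--Baxter equation \eqref{eq:mod-YB} together with its adjoint forms \eqref{eq:mod-YB-a}, \eqref{eq:mod-YB-b} to cancel everything. The subtlety is that differentiating the $S(\mf g)$-coefficients of the first $L$ against the entries of the second bracket produces extra terms (the Leibniz rule acts on the $u^i$'s as well as on $a,b,c$); these are precisely the terms that force the appearance of $R^*$, since differentiating $\sum_i u^i(\cdots u_i\cdots)$ transposes the inner operator. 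I expect this bookkeeping --- keeping straight which $L$-factor is being differentiated and matching the resulting adjoint expressions against \eqref{eq:mod-YB-a}--\eqref{eq:mod-YB-b} --- to be the main obstacle, and it is why $\{\cdot\,,\,\cdot\}_3^R$ is a Lie bracket with no extra hypothesis (only \eqref{eq:mod-YB}, \eqref{eq:mod-YB-a}, \eqref{eq:mod-YB-b} are needed, and by Lemma \ref{20180409:lem1}(a) these are all equivalent to \eqref{eq:mod-YB}), whereas the $\epsilon^1$-term below needs more.

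For $\{\cdot\,,\,\cdot\}_2^R$ I would run the same expansion but now with the ``symmetrized'' inner operator $a\mapsto\sum_i u^i(u_i\circ a+a\circ u_i)$, i.e. the derivative of the quadratic map, equivalently the linearization of the degree-$2$ piece at $L$. Tracking the Jacobiator one finds that, on top of the terms handled by \eqref{eq:mod-YB} and its adjoints, there is a leftover symmetric combination of the form $\sum_\sigma\sign(\sigma)\langle[R^*(a),R^*(b)]\mid c\rangle+\langle[a,b]\mid c\rangle$ (paired against the ambient $L$'s), which is exactly the content of \eqref{eq:mod-YB-d}; and \eqref{eq:mod-YB-d} holds precisely when $\frac12(R-R^*)$ is an $R$-matrix, by Lemma \ref{20180409:lem1}(b)--(c). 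Hence under that hypothesis the $\epsilon^1$-obstruction vanishes and $\{\cdot\,,\,\cdot\}_2^R$ is a Lie bracket.

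Finally, compatibility. Rather than re-checking Jacobi for an arbitrary linear combination $\alpha\{\cdot\,,\,\cdot\}_1^R+\beta\{\cdot\,,\,\cdot\}_2^R+\gamma\{\cdot\,,\,\cdot\}_3^R$ from scratch, I would observe that $\{\cdot\,,\,\cdot\}^{R,\epsilon}$ is itself a Lie bracket for \emph{every} $\epsilon\in\mb F$ --- this follows by the \emph{same} computation as for $\{\cdot\,,\,\cdot\}_3^R$, now with $L$ replaced by $L+\epsilon\id$ (note $\id$ is central and the trace form entries are unchanged under the shift), and using the extra hypothesis exactly where the $\epsilon^1$- (and, automatically, $\epsilon^2$-) terms require \eqref{eq:mod-YB-c}/\eqref{eq:mod-YB-d}. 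Since by \eqref{20180408:eq2} the Jacobiator of $\{\cdot\,,\,\cdot\}^{R,\epsilon}$ is a polynomial in $\epsilon$ of degree $\le 6$ whose coefficients are (up to combinatorial constants) the pairwise and triple Jacobiators of $\{\cdot\,,\,\cdot\}_1^R,\{\cdot\,,\,\cdot\}_2^R,\{\cdot\,,\,\cdot\}_3^R$, its vanishing for all $\epsilon$ forces each mixed Jacobiator to vanish; equivalently, every linear combination $\alpha\{\cdot\,,\,\cdot\}_1^R+\beta\{\cdot\,,\,\cdot\}_2^R+\gamma\{\cdot\,,\,\cdot\}_3^R$ has zero Jacobiator. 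A standard specialization argument (evaluating at enough values of $\epsilon$, or rescaling $\epsilon\to t\epsilon$ and collecting powers) makes this precise over the infinite field $\mb F$, and all three brackets being skewsymmetric and Leibniz, they are pairwise compatible Poisson brackets on $S(\mf g)$.
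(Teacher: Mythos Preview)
Your overall strategy---prove Jacobi for the $\epsilon$-family directly, then extract the individual brackets and their compatibilities as coefficients---matches the paper's, but two points need correction.

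First, the paper proves Jacobi for $\{\cdot,\cdot\}^{R,\epsilon}$ using \emph{only} the modified Yang--Baxter equation for $R$, with no extra hypothesis on $\frac12(R-R^*)$. Your instinct that ``replace $L$ by $L+\epsilon\id$'' should reduce the $\epsilon$-case to $\epsilon=0$ is right, but the mechanism is more delicate than you suggest and does not require \eqref{eq:mod-YB-c} at all: when the Leibniz rule differentiates a factor $(u^h+\epsilon\Tr(u^h))$ the constant $\epsilon$-piece is killed, so after expansion not every occurrence of $L$ becomes $L+\epsilon\id$. The paper resolves this by evaluating the Jacobiator at a point $x\in\mf g^*\simeq\mf g$; the ``bare'' $x$'s that survive all sit inside $\Tr(x\circ[\cdots])$, and since $\Tr$ vanishes on commutators these can be upgraded to $\Tr((x+\epsilon\id)\circ[\cdots])$. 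Now everything is a function of $x+\epsilon\id$ alone, and since $x$ is arbitrary one may set $\epsilon=0$. Consequently the $\epsilon$-family needs no more than \eqref{eq:mod-YB}. The extra hypothesis enters only in the separate verification that $\{\cdot,\cdot\}_2^R$ satisfies Jacobi; note also that this is \emph{equivalent} to compatibility of brackets $1$ and $3$ (they are linked by the $\epsilon^2$-coefficient of the Jacobiator), so the vanishing of the one-parameter $\epsilon$-polynomial does not by itself force all mixed Jacobiators to vanish, contrary to what you write.

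Second, and more seriously, your plan to ``cancel everything'' for $\{\cdot,\cdot\}_3^R$ using \eqref{eq:mod-YB} and its adjoint forms \eqref{eq:mod-YB-a}--\eqref{eq:mod-YB-b} is incomplete. When you apply \eqref{eq:mod-YB} to the cyclic sum, the ``$+[a,b]$'' term leaves a residual contribution $\sum_\sigma\sign(\sigma)\Tr\big([x\circ a\circ x,\,x\circ b\circ x]\circ[x,c]\big)$, which does \emph{not} vanish by any $R$-identity---it vanishes by the purely associative-algebra identity of Lemma~\ref{20180409:lem2}(a), an ingredient your sketch omits. The analogous story for $\{\cdot,\cdot\}_2^R$ requires Lemma~\ref{20180409:lem2}(b) in addition to \eqref{eq:mod-YB-d}: the mYBE residual there is $-\frac12\sum_\sigma\sign(\sigma)\Tr\big([x,b]\circ[x\circ a+a\circ x,\,x\circ c+c\circ x]\big)$, which Lemma~\ref{20180409:lem2}(b) rewrites as $-\Tr([[x,a],[x,b]]\circ[x,c])$, and this is then cancelled by the $R^*$-terms via \eqref{eq:mod-YB-d}. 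You correctly anticipate the role of \eqref{eq:mod-YB-d}, but not that of Lemma~\ref{20180409:lem2}.
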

For the proof of Theorem \ref{20180408:thm}
we shall use the following elementary results:
\begin{lemma}\label{20180409:lem2}
The following identities hold,
for every $a,b,c,x\in\mf g$, 
\begin{enumerate}[(a)]
\item
$\displaystyle{
\frac12 \sum_\sigma \sign(\sigma) \Tr( [x\circ a\circ x,x\circ b\circ x] \circ [x,c] ) =0
}$,
\item
$\displaystyle{
\frac12 \sum_\sigma \sign(\sigma) \Tr( [x\!\circ\! a+a\!\circ\! x,x\!\circ\! b+b\!\circ\! x] \circ [x,c] )
=
- \Tr( [[x,a],[x,b]] \circ [x,c] )
}$,
\end{enumerate}
where we are using the notation \eqref{eq:sums}.
\end{lemma}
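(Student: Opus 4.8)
The plan is to prove the two identities in Lemma~\ref{20180409:lem2} by brute-force expansion, systematically exploiting the cyclic property \eqref{eq:cyclic} of the trace form together with the antisymmetrization over the symmetric group $S_3$ acting on $a,b,c$. Throughout I would abbreviate $[a,b]=a\circ b-b\circ a$ and keep $x$ fixed, treating everything as a $\mb F$-valued function $f(a,b,c)$ to which I apply $\frac12\sum_\sigma\sign(\sigma)$.

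For part (a), first I would expand $[x\circ a\circ x,\,x\circ b\circ x]=x\circ a\circ x\circ x\circ b\circ x-x\circ b\circ x\circ x\circ a\circ x$ and $[x,c]=x\circ c-c\circ x$, obtaining a sum of four monomials of the form $\pm\Tr(x\circ a\circ x^2\circ b\circ x\circ x\circ c)$ etc. The key observation is that each such monomial, read cyclically, becomes a word in which $a$, $b$, $c$ each appear once, separated by powers of $x$; using the cyclic property I can always bring it to a normal form where, say, $a$ sits in a fixed slot. Under the antisymmetrization $\frac12\sum_\sigma\sign(\sigma)$, I would pair up monomials that differ by a transposition of two of the three letters but coincide after applying cyclicity; since they carry opposite signs they cancel. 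Concretely, swapping $a\leftrightarrow b$ in $\Tr(x\circ a\circ x^2\circ b\circ x^2\circ c)$ and then cyclically rotating reproduces (up to the symmetric placement of the $x$'s, which is the point of conjugating $a$ and $b$ by the \emph{same} $x$ on both sides) the original term, forcing the total to vanish. I would organize this as a short table of the four terms and their images under the three transpositions, showing everything cancels in pairs.

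For part (b), the same strategy applies but now $x\circ a+a\circ x$ is symmetric in placing $x$ on the left or right, so expanding $[x\circ a+a\circ x,\,x\circ b+b\circ x]$ gives eight monomials, and multiplying by $[x,c]$ and tracing gives sixteen terms before antisymmetrization. I expect that most of these cancel by the same pairing argument as in (a), but a residual set of terms survives precisely because the ``+'' (rather than the balanced conjugation $x\circ\cdot\circ x$) breaks the left-right symmetry that made (a) vanish identically. The surviving terms should reassemble, again using cyclicity and the invariance \eqref{eq:invariance}, into $-\Tr([[x,a],[x,b]]\circ[x,c])$: indeed $[[x,a],[x,b]]$ expands into exactly the monomials $x\circ a\circ x\circ b - x\circ a\circ b\circ x - \cdots$ that are left over. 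The cleanest route is probably to compute $\frac12\sum_\sigma\sign(\sigma)\,\Tr([x\circ a+a\circ x,\,x\circ b+b\circ x]\circ[x,c])$ and $\Tr([[x,a],[x,b]]\circ[x,c])$ separately in the same normal form (all words cyclically rotated so that $c$ is last, and antisymmetrized), then check the two expansions agree.

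The main obstacle is purely bookkeeping: keeping track of the signs and of the distinct cyclic normal forms of words like $x^i a x^j b x^k c$ with $i+j+k$ fixed, and correctly identifying which monomials are identified by cyclic rotation. A useful device to control this is to note that $\frac12\sum_\sigma\sign(\sigma)\Tr(X_{\sigma(a)}Y_{\sigma(b)}Z_{\sigma(c)}\circ[x,\cdot])$-type sums only depend on the \emph{antisymmetric} part of the three-linear form in $a,b,c$, so one may freely symmetrize or antisymmetrize intermediate expressions; in particular any term symmetric under some transposition of $a,b,c$ drops out immediately, which kills a large fraction of the sixteen terms in (b) at once. Once the terms are grouped this way, both identities reduce to a finite check that I would present as a compact computation rather than a long case analysis.
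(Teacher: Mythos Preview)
Your brute-force strategy is valid and, if carried out carefully, will prove both (a) and (b); but it is not the route the paper takes, and the heuristic you offer for (a) is incorrect. You claim that swapping $a\leftrightarrow b$ in $\Tr(x\circ a\circ x^2\circ b\circ x^2\circ c)$ and then cyclically rotating reproduces the original monomial. It does not: the cyclic word $xax^2bx^2c$ has the letters in cyclic order $(a,b,c)$ with $x$-gaps $(2,2,1)$, whereas $xbx^2ax^2c$ has them in cyclic order $(b,a,c)$ with the same gap pattern, and these are distinct traces in a generic associative algebra. When you actually build your table you will see that the twelve monomials (four from expanding $[x\circ a\circ x,x\circ b\circ x]\circ[x,c]$, times three cyclic permutations of $a,b,c$) do cancel in pairs, but each pair matches a monomial from one cyclic permutation with one from a \emph{different} cyclic permutation --- not via a single transposition inside one term. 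The same phenomenon governs (b). So your plan survives, but your explanation of the mechanism does not.

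The paper's argument is more structural and avoids monomial bookkeeping altogether. It keeps commutators intact and uses the invariance $\Tr(A\circ[B,C])=\Tr([A,B]\circ C)$ to rewrite the second and third terms of the cyclic sum so that they combine via the elementary identity
\[
[[x,a],\,x\circ b\circ x]\;+\;[x\circ a\circ x,\,[x,b]]
\;=\;
[x,\,a\circ x\circ x\circ b - b\circ x\circ x\circ a]\,,
\]
after which one more application of invariance shows that these two terms together equal $-\Tr([x\circ a\circ x,x\circ b\circ x]\circ[x,c])$, cancelling the first. Part (b) is handled by the parallel identity with $x\circ a + a\circ x$ in place of $x\circ a\circ x$. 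This is considerably shorter and makes the cancellation transparent at the level of commutators; your approach has the compensating virtue of being entirely mechanical once a cyclic normal form is fixed, so either is acceptable --- just be sure, if you keep the brute-force route, to present the actual table of cancellations rather than the $a\leftrightarrow b$ heuristic.
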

\begin{proof}
Both claims are straightforward. 
We provide here a proof for pedagogical reasons.
For claim (a), we have
\begin{equation}\label{20180409:eq3}
\begin{array}{l}
\displaystyle{
\vphantom{\Big(}
\frac12 \sum_\sigma \sign(\sigma) \Tr( [x\circ a\circ x,x\circ b\circ x] \circ [x,c] )
=
\Tr( [x\circ a\circ x,x\circ b\circ x] \circ [x,c] )
} \\
\displaystyle{
\vphantom{\Big(}
+
\Tr( [x\circ b\circ x,x\circ c\circ x] \circ [x,a] )
+
\Tr( [x\circ c\circ x,x\circ a\circ x] \circ [x,b] )
\,.}
\end{array}
\end{equation}
By the invariance of the trace \eqref{eq:invariance},
the second term in the RHS of \eqref{20180409:eq3}
is
\begin{equation}\label{20180409:eq4}
\Tr( [x\circ b\circ x,x\circ c\circ x] \circ [x,a] )
=
\Tr( [[x,a],x\circ b\circ x] \circ x\circ c\circ x )
\,,
\end{equation}
while the third term in the RHS of \eqref{20180409:eq3}
is
\begin{equation}\label{20180409:eq5}
\Tr( [x\circ c\circ x,x\circ a\circ x] \circ [x,b] )
=
\Tr( [x\circ a\circ x,[x,b]] \circ x\circ c\circ x )
\,.
\end{equation}
Combining \eqref{20180409:eq4} and \eqref{20180409:eq5}, we get
\begin{equation}\label{20180409:eq6}
\begin{array}{l}
\displaystyle{
\vphantom{\Big(}
\Tr\Big(
\big(
[[x,a],x\circ b\circ x] 
+
[x\circ a\circ x,[x,b]]
\big)\circ
x\circ c\circ x 
\Big)
} \\
\displaystyle{
\vphantom{\Big(}
=
\Tr\big(
[x,a\circ x\circ x\circ b-b\circ x\circ x\circ a]
\circ
x\circ c\circ x
\big)
} \\
\displaystyle{
\vphantom{\Big(}
=
-\Tr\big(
(a\circ x\circ x\circ b-b\circ x\circ x\circ a)
\circ
x\circ [x,c]\circ x
\big)
} \\
\displaystyle{
\vphantom{\Big(}
=
-\Tr\big(
[x\circ a\circ x,x\circ b\circ x]
\circ
[x,c]
\big)
\,.}
\end{array}
\end{equation}
Hence, the RHS of \eqref{20180409:eq3} vanishes, 
proving claim (a).
Similarly, we have
\begin{equation}\label{20180409:eq7}
\begin{array}{l}
\displaystyle{
\vphantom{\Big(}
\frac12 \sum_\sigma \sign(\sigma) 
\Tr( [x\!\circ\! a+a\!\circ\! x,x\!\circ\! b+a\!\circ\! x] \circ [x,c] ) 
} \\
\displaystyle{
\vphantom{\Big(}
=
\Tr( [x\!\circ\! a+a\!\circ\! x,x\!\circ\! b+b\!\circ\! x] \circ [x,c] )
+
\Tr( [x\!\circ\! b+b\!\circ\! x,x\!\circ\! c+c\!\circ\! x] \circ [x,a] ) 
} \\
\displaystyle{
\vphantom{\Big(}
+
\Tr( [x\!\circ\! c+c\!\circ\! x,x\!\circ\! a+a\!\circ\! x] \circ [x,b] )
\,.}
\end{array}
\end{equation}
By the invariance of the trace \eqref{eq:invariance}, the second term in the RHS of \eqref{20180409:eq7}
is
\begin{equation}\label{20180409:eq8}
\Tr( [x\!\circ\! b+b\!\circ\! x,x\!\circ\! c+c\!\circ\! x] \circ [x,a] )
=
\Tr( [[x,a],x\!\circ\! b+b\!\circ\! x] \circ (x\!\circ\! c+c\!\circ\! x) )
\,,
\end{equation}
while the third term in the RHS of \eqref{20180409:eq7}
is
\begin{equation}\label{20180409:eq9}
\Tr( [x\!\circ\! c+c\!\circ\! x,x\!\circ\! a+a\!\circ\! x] \circ [x,b] )
=
\Tr( [x\!\circ\! a+a\!\circ\! x,[x,b]] \circ (x\!\circ\! c+c\!\circ\! x) )
\,.
\end{equation}
Combining \eqref{20180409:eq8} and \eqref{20180409:eq9}, we get,
again by \eqref{eq:invariance},
\begin{equation}\label{20180409:eq10}
\begin{array}{l}
\displaystyle{
\vphantom{\Big(}
\Tr\Big(
\big(
[[x,a],x\!\circ\! b+b\!\circ\! x]
+
[x\!\circ\! a+a\!\circ\! x,[x,b]] 
\big)
\circ
(x\!\circ\! c+c\!\circ\! x)
\Big)
} \\
\displaystyle{
\vphantom{\Big(}
=
2\Tr\big(
[x,a\circ x\circ b-b\circ x\circ a]
\circ
(x\!\circ\! c+c\!\circ\! x)
\big)
} \\
\displaystyle{
\vphantom{\Big(}
=
-2\Tr\big(
(a\circ x\circ b-b\circ x\circ a)
\circ
(x\circ[x,c]+[x,c]\circ x)
\big)
} \\
\displaystyle{
\vphantom{\Big(}
=
-2\Tr\big(
(
x\circ a\circ x\circ b+a\circ x\circ b\circ x
-x\circ b\circ x\circ a-b\circ x\circ a\circ x
)
\circ
[x,c]
\big)
\,.}
\end{array}
\end{equation}
Finally, combining the first term in the RHS of \eqref{20180409:eq7}
and \eqref{20180409:eq10}, we get
$$
-\Tr(
[[x,a],[x,b]]
\circ
[x,c]
)
\,,
$$
proving claim (b).
\end{proof}
\begin{proof}[Proof of Theorem \ref{20180408:thm}]
The bracket $\{\cdot\,,\,\cdot\}^{R,\epsilon}$ is defined on $S(\mf g)$
by its value \eqref{20180408:eq1} on $a,b\in\mf g$, 
and it is extended to $S(\mf g)$ by the left and right Leibniz rules.
It is well known that, in this case, in order to prove the Lie algebra axioms 
for $\{\cdot\,,\,\cdot\}^{R,\epsilon}$,
it is enough to prove the skew-symmetry 
\begin{equation}\label{20180408:eq4}
\{a,b\}^{R,\epsilon}=-\{b,a\}^{R,\epsilon}
\,,
\end{equation}
and the Jacobi identity 
\begin{equation}\label{20180408:eq5}
\{a,\{b,c\}^{R,\epsilon}\}^{R,\epsilon}
+\text{ cycl. perm.'s } = 0
\,,
\end{equation}
on elements $a,b,c\in\mf g$.
The skew-symmetry \eqref{20180408:eq4}
is obvious by the definition \eqref{20180408:eq1}.
Let us prove the Jacobi identity \eqref{20180408:eq5}.
By \eqref{20180408:eq1} we have
\begin{equation}\label{20180408:eq7}
\begin{array}{l}
\displaystyle{
\vphantom{\Big(}
\{a,\{b,c\}^{R,\epsilon}\}^{R,\epsilon}
+
\text{ cycl. perm.'s }
} \\
\displaystyle{
\vphantom{\Big(}
=
\frac12 \sum_{h,k\in I}
\Big\{a,
(u^h + \epsilon\Tr(u^h))(u^k + \epsilon\Tr(u^k))
\Big(
[b,R(u_h \circ c \circ u_k)]
} \\
\displaystyle{
\vphantom{\Big(}
-
[c,R(u_h \circ b \circ u_k)]
\Big)
\Big\}^{R,\epsilon}
+
\text{ cycl. perm.'s }
} \\
\displaystyle{
\vphantom{\Big(}
=
\sum_{\sigma} \sign(\sigma)
\frac12 \sum_{h,k\in I}
\Big\{a,
(u^h+\epsilon\Tr(u^h))(u^k+\epsilon\Tr(u^k))
[b,R(u_h\circ c\circ u_k)]
\Big\}^{R,\epsilon}
\,.}
\end{array}
\end{equation}
In the RHS of \eqref{20180408:eq7} we are using the notation \eqref{eq:sums}.
We compute the bracket on the RHS of \eqref{20180408:eq7}
applying the Leibniz rule
%As a result, we get
%\begin{equation}\label{20180408:eq8}
%\begin{array}{l}
%\displaystyle{
%\vphantom{\Big(}
%\sum_{\sigma} \sign(\sigma)
%\frac12 \sum_{h,k\in I}
%\Bigg(
%\{a,u^h\}^{R,\epsilon}
%(u^k+\epsilon\Tr(u^k))
%[b,R(u_h\circ c\circ u_k)]
%} \\
%\displaystyle{
%\vphantom{\Big(}
%+
%\{a,u^k\}^{R,\epsilon}
%(u^h+\epsilon\Tr(u^h))
%[b,R(u_h\circ c\circ u_k)]
%} \\
%\displaystyle{
%\vphantom{\Big(}
%+
%\{a,[b,R(u_h\circ c\circ u_k)]\}^{R,\epsilon}
%(u^h+\epsilon\Tr(u^h))(u^k+\epsilon\Tr(u^k))
%\Bigg)
%\,.}
%\end{array}
%\end{equation}
%We then compute \eqref{20180408:eq8}
and using \eqref{20180408:eq1}.
As a result we get
\begin{equation}\label{20180408:eq9}
\begin{array}{l}
\displaystyle{
\vphantom{\Big(}
\sum_{\sigma} \sign(\sigma)
\frac14 \sum_{i,j,h,k\in I}
(u^i+\epsilon\Tr(u^i))
(u^j+\epsilon\Tr(u^j))
} \\
\displaystyle{
\vphantom{\Big(}
\Bigg(
(u^k+\epsilon\Tr(u^k))
[a,R(u_i\circ u^h\circ u_j)]
[b,R(u_h\circ c\circ u_k)]
} \\
\displaystyle{
\vphantom{\Big(}
-
(u^k+\epsilon\Tr(u^k))
[u^h,R(u_i\circ a\circ u_j)]
[b,R(u_h\circ c\circ u_k)]
} \\
\displaystyle{
\vphantom{\Big(}
+
(u^h+\epsilon\Tr(u^h))
[a,R(u_i\circ u^k\circ u_j)]
[b,R(u_h\circ c\circ u_k)]
} \\
\displaystyle{
\vphantom{\Big(}
-
(u^h+\epsilon\Tr(u^h))
[u^k,R(u_i\circ a\circ u_j)]
[b,R(u_h\circ c\circ u_k)]
} \\
\displaystyle{
\vphantom{\Big(}
+
(u^h+\epsilon\Tr(u^h))(u^k+\epsilon\Tr(u^k))
[a,R(u_i \circ [b,R(u_h \circ c \circ u_k)] \circ u_j)]
} \\
\displaystyle{
\vphantom{\Big(}
-
(u^h+\epsilon\Tr(u^h))(u^k+\epsilon\Tr(u^k))
[[b,R(u_h \circ c \circ u_k)],R(u_i \circ a \circ u_j)]
\Bigg)
.}
\end{array}
\end{equation}
We need to prove that \eqref{20180408:eq9} vanishes.
Since it lies in the symmetric algebra $S(\mf g)\simeq\mb F[\mf g^*]$, 
in order to prove that \eqref{20180408:eq9} vanishes, 
it suffices to prove that it vanishes when evaluated at 
an arbitrary point $\Tr(x\circ\,\cdot\,)\in\mf g^*$.
By completeness, we have 
$$
\sum_{i\in I}\big(\Tr(u^i\circ x)+\epsilon\Tr(u^i)\big)u_i=x+\epsilon\id
\,.
$$
Hence, the vanishing of \eqref{20180408:eq9}
is equivalent to the vanishing of
\begin{equation}\label{20180408:eq10}
\begin{array}{l}
\displaystyle{
\vphantom{\Big(}
\sum_{\sigma} \sign(\sigma)
} \\
\displaystyle{
\vphantom{\Big(}
\Bigg(
\sum_{h\in I}
\Tr\big( x \circ [a,R((x+\epsilon\id)\circ u^h\circ (x+\epsilon\id))] \big)
\Tr\big( x \circ [b,R(u_h\circ c\circ (x+\epsilon\id))] \big)
} \\
\displaystyle{
\vphantom{\Big(}
-
\sum_{h\in I}
\Tr\big( x \circ [u^h,R((x+\epsilon\id)\circ a\circ (x+\epsilon\id))] \big)
\Tr\big( x \circ [b,R(u_h\circ c\circ (x+\epsilon\id))] \big)
} \\
\displaystyle{
\vphantom{\Big(}
+
\sum_{k\in I}
\Tr\big( x \circ [a,R((x+\epsilon\id)\circ u^k\circ (x+\epsilon\id))] \big)
\Tr\big( x \circ [b,R((x+\epsilon\id)\circ c\circ u_k)] \big)
} \\
\displaystyle{
\vphantom{\Big(}
-
\sum_{k\in I}
\Tr\big( x \circ [u^k,R((x+\epsilon\id)\circ a\circ (x+\epsilon\id))] \big)
\Tr\big( x \circ [b,R((x+\epsilon\id)\circ c\circ u_k)] \big)
} \\
\displaystyle{
\vphantom{\Big(}
+
\Tr\big( x \circ [a,R((x+\epsilon\id) \circ [b,R((x+\epsilon\id) \circ c \circ (x+\epsilon\id))] \circ (x+\epsilon\id))] \big)
} \\
\displaystyle{
\vphantom{\Big(}
-
\Tr\big( x \circ [[b,R((x+\epsilon\id) \circ c \circ (x+\epsilon\id))],R((x+\epsilon\id) \circ a \circ (x+\epsilon\id))] \big)
\Bigg)
\,,}
\end{array}
\end{equation}
for every $x\in\mf g$.
By definition that the trace form $\Tr(\,\cdot\,)$ vanishes
on the derived subalgebra $[\mf g,\mf g]$.
Hence, \eqref{20180408:eq10}
can be rewritten by letting $x+\epsilon\id$ appear everywhere,
and, in order to prove its vanishing for every $x\in\mf g$, 
we can set $\epsilon=0$.
Moreover, by the invariance of the trace \eqref{eq:invariance}, we have
$$
\Tr( x \circ [a,R(x\circ u^h\circ x)] )
=
\Tr( u^h \circ x\circ R^*([x,a])\circ x )
\,,
$$
and
$$
\Tr( x \circ [u^h,R(x\circ a\circ x)] )
=
-\Tr( u^h \circ [x,R(x \circ a\circ x)] )
\,.
$$
Hence, by completeness, the vanishing of \eqref{20180408:eq10} is equivalent to the vanishing of 
\begin{equation}\label{20180408:eq12}
\begin{array}{l}
\displaystyle{
\vphantom{\Big(}
\sum_{\sigma}\! \sign(\!\sigma\!)
} \\
\displaystyle{
\vphantom{\Big(}
\Bigg(
\Tr\big( x \!\circ\! [b,R(x\!\circ\! R^*([x,a])\!\circ\! x\!\circ\! c\!\circ\! x)] \big)
%} \\
%\displaystyle{
%\vphantom{\Big(}
+
\Tr\big( x \!\circ\! [b,R([x,R(x\!\circ\! a\!\circ\! x)]\!\circ\! c\!\circ\! x)] \big)
} \\
\displaystyle{
\vphantom{\Big(}
+
\Tr\big( x \!\circ\! [b,R( x\!\circ\! c\!\circ\! x\!\circ\! R^*([x,a])\!\circ\! x )] \big)
%} \\
%\displaystyle{
%\vphantom{\Big(}
+
\Tr\big( x \!\circ\! [b,R(x\!\circ\! c\!\circ\! [x,R(x\!\circ\! a\!\circ\! x)] )] \big)
} \\
\displaystyle{
\vphantom{\Big(}
+
\Tr\big( x \!\circ\! [a,R(x \!\circ\! [b,R(x \!\circ\! c \!\circ\! x)] \!\circ\! x)] \big)
%} \\
%\displaystyle{
%\vphantom{\Big(}
-
\Tr\big( x \!\circ\! [[b,R(x \!\circ\! c \!\circ\! x)],R(x \!\circ\! a \!\circ\! x)] \big)
\Bigg)
\,.}
\end{array}
\end{equation}
The first summand in \eqref{20180408:eq12} is, by the invariance of the trace \eqref{eq:invariance},
\begin{equation}\label{20180408:eq13}
\Tr\big( x \circ [b,R(x\circ R^*([x,a])\circ x\circ c\circ x)] \big)
=
\Tr\big( [x,b] \circ R( x \circ R^*([x,a])\circ x\circ c\circ x ) \big)
\,,
\end{equation}
while the third summand in \eqref{20180408:eq12} is
\begin{equation}\label{20180408:eq14}
\begin{array}{l}
\displaystyle{
\vphantom{\Big(}
\Tr\big( x \!\circ\! [b,R( x\!\circ\! c\!\circ\! x\!\circ\! R^*([x,a])\!\circ\! x )] \big)
=
\Tr\big( [x,b] \!\circ\! R( x\!\circ\! c\!\circ\! x\!\circ\! R^*([x,a])\!\circ\! x ) \big)
} \\
\displaystyle{
\vphantom{\Big(}
=
\Tr\big( x \!\circ\! R^*([x,b])\!\circ\! x\!\circ\! c\!\circ\! x \!\circ\! R^*([x,a]) \big)
=
\Tr\big( [x,a] \!\circ\! R( x \!\circ\! R^*([x,b])\!\circ\! x\!\circ\! c\!\circ\! x ) \big)
\,.}
\end{array}
\end{equation}
Since \eqref{20180408:eq12} and \eqref{20180408:eq14}
are obtained one from another by exchanging $a$ with $b$,
their sum vanishes under the alternating sum over permutations.
Furthermore, the fifth summand in \eqref{20180408:eq12}
can be replaced, under the sum over permutation, by
$$
\Tr\big( x \circ [b,R(x \circ [c,R(x \circ a \circ x)] \circ x)] \big)
\,,
$$
and combining it with the second and fourth summands of \eqref{20180408:eq12},
we get
\begin{equation}\label{20180408:eq15}
\begin{array}{l}
\displaystyle{
\vphantom{\Big(}
\Tr\Big( [x,b] \circ
R\big( 
[x,R(x \!\circ\! a \!\circ\! x)]\circ c\circ x 
+
x\circ c\circ [x,R(x \!\circ\! a \!\circ\! x)]
} \\
\displaystyle{
\vphantom{\Big(}
+
x \circ [c,R(x \!\circ\! a \!\circ\! x)] \circ x
\big)
\Big)
=
\Tr\Big( [x,b] \circ
R\big( 
[x\circ c\circ x,R(x\circ a\circ x)] 
\big)
\Big)
} \\
\displaystyle{
\vphantom{\Big(}
=
\Tr\Big( x \circ \big[b,R\big([x\circ c\circ x,R(x\circ a\circ x)]\big)\big] \Big)
\,.}
\end{array}
\end{equation}
Under the alternating sum over permutations,
the RHS of \eqref{20180408:eq15} can be replaced by
\begin{equation}\label{20180408:eq15b}
\frac12
\Tr\Big( x \circ 
\big[b, - R\big([R(x\circ a\circ x),x\circ c\circ x]\big) -R\big([x\circ a\circ x,R(x\circ c\circ x)]\big) \big] \Big)
\,,
\end{equation}
while
the last summand in \eqref{20180408:eq12}
can be replaced by
\begin{equation}\label{20180408:eq16}
\begin{array}{l}
\displaystyle{
\vphantom{\Big(}
-
\frac12 
\Tr\big( x \!\circ\! [[b,R(x \!\circ\! c \!\circ\! x)],R(x \!\circ\! a \!\circ\! x)] \big)
+
\frac12
\Tr\big( x \!\circ\! [[b,R(x \!\circ\! a \!\circ\! x)],R(x \!\circ\! c \!\circ\! x)] \big)
} \\
\displaystyle{
\vphantom{\Big(}
=
\frac12 
\Tr\big( x \circ [b,[R(x \circ a \circ x),R(x \circ c \circ x)]] \big)
\,,}
\end{array}
\end{equation}
by the Jacobi identity.
Combining \eqref{20180408:eq15b} and \eqref{20180408:eq16},
we conclude that the vanishing of \eqref{20180408:eq12} is equivalent to the vanishing of
\begin{equation}\label{20180408:eq17}
\begin{array}{l}
\displaystyle{
\vphantom{\Big(}
\frac12
\sum_{\sigma} \sign(\sigma)
\Tr\Big( x \circ \Big[b, 
[R(x \circ a \circ x),R(x \circ c \circ x)]
-
R([R(x\circ a\circ x),x\circ c\circ x]) 
} \\
\displaystyle{
\vphantom{\Big(}
-
R([x\circ a\circ x,R(x\circ c\circ x)]) 
\Big] \Big)
} \\
\displaystyle{
\vphantom{\Big(}
=
-\frac12
\sum_{\sigma} \sign(\sigma)
\Tr\big( x \circ \big[b, 
[x \circ a \circ x,x \circ c \circ x]
\big] \big)
} \\
\displaystyle{
\vphantom{\Big(}
=
\frac12
\sum_{\sigma} \sign(\sigma)
\Tr\big(
[x \circ a \circ x,x \circ b \circ x] \circ [x,c]
\big)
\,.}
\end{array}
\end{equation}
For the first equality 
we used the modified Yang-Baxter equation \eqref{eq:mod-YB} on $R$.
By Lemma \ref{20180409:lem2}(a), the RHS of \eqref{20180408:eq17} vanishes,
proving the first claim of Theorem \ref{20180408:thm}.

Equations \eqref{20180408:eq2} and \eqref{20180408:eq3}
are immediately checked.
Letting $\epsilon=0$ in the Jacobi identity \eqref{20180408:eq5}
we get
$$
\{a,\{b,c\}^{R}_3\}^{R}_3
+\text{ cycl. perm.'s } = 0
\,,
$$
i.e. the $3$-rd bracket $\{\cdot\,,\,\cdot\}^R_3$ satisfies the Lie algebra axioms,
while taking the coefficient of $\epsilon^4$ in \eqref{20180408:eq5}
we get 
$$
\{a,\{b,c\}^{R}_1\}^{R}_1
+\text{ cycl. perm.'s } = 0
\,,
$$
i.e. the $1$-st bracket $\{\cdot\,,\,\cdot\}^R_1$
satisfies the Lie algebra axioms as well.
Moreover, taking the coefficient of $\epsilon$ and $\epsilon^3$ 
in the Jacobi identity \eqref{20180408:eq5},
we get 
$$
\{a,\{b,c\}^{R}_2\}^{R}_3+\{a,\{b,c\}^{R}_3\}^{R}_2
+\text{ cycl. perm.'s } = 0
\,,
$$
and
$$
\{a,\{b,c\}^{R}_1\}^{R}_2+\{a,\{b,c\}^{R}_2\}^{R}_1
+\text{ cycl. perm.'s } = 0
\,,
$$
i.e. the compatibility between the $2$-nd and $3$-rd brackets,
and between the $1$-st and $2$-nd brackets respectively
(provided that the $2$-nd bracket is a Lie algebra bracket).
Taking the coefficient of $\epsilon^2$ in \eqref{20180408:eq5},
we get
$$
\{a,\{b,c\}^{R}_1\}^{R}_3+\{a,\{b,c\}^{R}_3\}^{R}_1
+\{a,\{b,c\}^{R}_2\}^{R}_2
+\text{ cycl. perm.'s } = 0
\,,
$$
which shows that the $1$-st and $3$-rd Lie brackets are compatible
if and only if the $2$-nd bracket satisfies the Jacobi identity.

To complete the proof of the Theorem, we are left to prove the last assertion,
i.e. that the $2$-nd bracket satisfies the Jacobi identity
\begin{equation}\label{20180408:eq6}
\{a,\{b,c\}^{R}_2\}^{R}_2
+\text{ cycl. perm.'s } = 0
\,,
\end{equation}
provided that $\frac12(R-R^*)$ is an $R$-matrix over the Lie algebra $\mf g$.
By the definition \eqref{20180408:eq3} of the second bracket $\{\cdot\,,\,\cdot\}^R_2$,
and the Leibniz rules, we have
\begin{equation}\label{20180409:eq11}
\begin{array}{l}
\displaystyle{
\vphantom{\Big(}
\{a,\{b,c\}^{R}_2\}^{R}_2
+\text{ cycl. perm.'s }
} \\
\displaystyle{
\vphantom{\Big(}
=
\sum_\sigma\sign(\sigma)
\frac14 \sum_{j\in I}
\{a,
u^j
[b,R(u_j\circ c+c\circ u_j)]
\}^{R}_2
} \\
\displaystyle{
\vphantom{\Big(}
=
\sum_\sigma\sign(\sigma)
\frac1{16} \sum_{i,j\in I}
\Big(
u^i
[a,R(u_i\circ u^j+u^j\circ u_i)]
[b,R(u_j\circ c+c\circ u_j)]
} \\
\displaystyle{
\vphantom{\Big(}
-
u^i
[u^j,R(u_i\circ a+a\circ u_i)]
[b,R(u_j\circ c+c\circ u_j)]
} \\
\displaystyle{
\vphantom{\Big(}
+
u^iu^j
[a,R(u_i\circ [b,R(u_j\circ c+c\circ u_j)]+[b,R(u_j\circ c+c\circ u_j)]\circ u_i)]
} \\
\displaystyle{
\vphantom{\Big(}
-
u^iu^j
[[b,R(u_j\circ c+c\circ u_j)],R(u_i\circ a+a\circ u_i)]
\Big)
\,,}
\end{array}
\end{equation}
where, as before, we use the notation \eqref{eq:sums} for the alternating sums
over permutations of $a,b,c$.
As in \eqref{20180408:eq10},
in order to prove that \eqref{20180409:eq11} vanishes
we evaluate it at a generic point $\Tr( x \circ \,\cdot\, )\in\mf g^*$.
As a result, we get that the Jacobi identity \eqref{20180408:eq6}
is equivalent to the vanishing of
\begin{equation}\label{20180409:eq12}
\begin{array}{l}
\displaystyle{
\vphantom{\Big(}
\sum_\sigma\sign(\sigma)
\Big(
\sum_{j\in I}
\Tr\big( x \circ [a,R(x\circ u^j+u^j\circ x)] \big)
\Tr\big( x \circ [b,R(u_j\circ c+c\circ u_j)] \big)
} \\
\displaystyle{
\vphantom{\Big(}
-
\sum_{j\in I}
\Tr\big( x \circ [u^j,R(x\circ a+a\circ x)] \big)
\Tr\big( x \circ [b,R(u_j\circ c+c\circ u_j)] \big)
} \\
\displaystyle{
\vphantom{\Big(}
+
\Tr\Big( x \circ \Big[a,R\big(x\circ [b,R(x\circ c+c\circ x)]+[b,R(x\circ c+c\circ x)]\circ x\big) \Big] \Big)
} \\
\displaystyle{
\vphantom{\Big(}
-
\Tr\Big( x \circ \Big[[b,R(x\circ c+c\circ x)],R(x\circ a+a\circ x)\Big] \Big)
\Big)
\,.}
\end{array}
\end{equation}
By the invariance of the trace \eqref{eq:invariance}
and the completeness of the dual bases $\{u_i\}_{i\in I}$, $\{u^i\}_{i\in I}$,
the first summand in \eqref{20180409:eq12} is
\begin{equation}\label{20180409:eq13}
\begin{array}{l}
\displaystyle{
\vphantom{\Big(}
\sum_{j\in I}
\Tr\big( x \circ [a,R(x\circ u^j+u^j\circ x)] \big)
\Tr\big( x \circ [b,R(u_j\circ c+c\circ u_j)] \big)
} \\
\displaystyle{
\vphantom{\Big(}
=
\Tr\Big( R^*([x,b]) \circ
\big(
R^*([x,a]) \circ x \circ c
+
x \circ R^*([x,a]) \circ c
} \\
\displaystyle{
\vphantom{\Big(}
+
c \circ R^*([x,a]) \circ x
+
c \circ x \circ R^*([x,a])
\big)
\Big)
,}
\end{array}
\end{equation}
while 
the second summand in \eqref{20180409:eq12} is
\begin{equation}\label{20180409:eq14}
\begin{array}{l}
\displaystyle{
\vphantom{\Big(}
\sum_{j\in I}
\Tr\big( x \circ [u^j,R(x\circ a+a\circ x)] \big)
\Tr\big( x \circ [b,R(u_j\circ c+c\circ u_j)] \big)
} \\
\displaystyle{
\vphantom{\Big(}
=
-
\Tr\Big( R^*([x,b]) \circ 
\big(
[x,R(x\circ a+a\circ x)]\circ c+c\circ [x,R(x\circ a+a\circ x)] 
\big)
\Big)
\,.}
\end{array}
\end{equation}
Moreover, under the alternating sum over permutations,
we can replace the third summand in \eqref{20180409:eq12} by
\begin{equation}\label{20180409:eq15}
\Tr\Big(
R^*([x,b]) \circ 
\big( 
x\circ [c,R(x\circ a+a\circ x)]+[c,R(x\circ a+a\circ x)]\circ x 
\big)
\Big)
\,,
\end{equation}
and the fourth summand in \eqref{20180409:eq12} by
\begin{equation}\label{20180409:eq16}
\frac12
\Tr\big( [x,b] \circ [R(x\circ a+a\circ x),R(x\circ c+c\circ x)] \big) 
\,.
\end{equation}
(Here we used the Jacobi identity for the commutator $[\cdot\,,\,\cdot]$ on $\mf g$.)
Combining \eqref{20180409:eq12}, \eqref{20180409:eq13}, \eqref{20180409:eq14},
\eqref{20180409:eq15} and \eqref{20180409:eq16},
we get that the Jacobi identity \eqref{20180408:eq6}
is equivalent to the vanishing of
\begin{equation}\label{20180409:eq17}
\begin{array}{l}
\displaystyle{
\vphantom{\Big(}
\sum_\sigma\sign(\sigma)
\Bigg(
\Tr\Big( 
R^*([x,b]) 
\circ
} \\
\displaystyle{
\vphantom{\Big(}
\circ
\big(
R^*([x,a]) \!\circ\! x \!\circ\! c
+
x \!\circ\! R^*([x,a]) \!\circ\! c
+
c \!\circ\! R^*([x,a]) \!\circ\! x
+
c \!\circ\! x \!\circ\! R^*([x,a])
\big)
\Big)
} \\
\displaystyle{
\vphantom{\Big(}
+
\Tr\Big(
R^*([x,b]) \circ
\big(
[x,R(x\circ a+a\circ x)]\circ c+c\circ [x,R(x\circ a+a\circ x)]
\big)
\Big)
} \\
\displaystyle{
\vphantom{\Big(}
+
\Tr\Big(
R^*([x,b])
\circ
\big(
x\circ [c,R(x\circ a+a\circ x)]+[c,R(x\circ a+a\circ x)]\circ x
\big)
\Big)
} \\
\displaystyle{
\vphantom{\Big(}
+\frac12
\Tr\Big(
[x,b]
\circ 
[R(x\circ a+a\circ x),R(x\circ c+c\circ x)]
\Big)
\Bigg)
\,.}
\end{array}
\end{equation}
Note that
\begin{equation}\label{20180409:eq18}
\begin{array}{l}
\displaystyle{
\vphantom{\Big(}
\Tr\Big(
R^*([x,b]) 
\circ
\big(
x \circ R^*([x,a]) \circ c
+
c \circ R^*([x,a]) \circ x
\big)
\Big)
} \\
\displaystyle{
\vphantom{\Big(}
=
\Tr\Big(
R^*([x,b])
\circ
x \circ R^*([x,a]) \circ c
\Big)
+
\Tr\Big(
R^*([x,a])
\circ
x\circ R^*([x,b])\circ c
\Big)
\,,}
\end{array}
\end{equation}
and this expression is symmetric with respect to the exchange of $a$ and $b$,
hence it vanishes under the alternating sum over permutations.
Moreover, we have
\begin{equation}\label{20180409:eq19}
\begin{array}{l}
\displaystyle{
\vphantom{\Big(}
[x,R(x\circ a+a\circ x)]\circ c+c\circ [x,R(x\circ a+a\circ x)] 
} \\
\displaystyle{
\vphantom{\Big(}
+
x\circ [c,R(x\circ a+a\circ x)]+[c,R(x\circ a+a\circ x)]\circ x 
} \\
\displaystyle{
\vphantom{\Big(}
=
[x\circ c+c\circ x,R(x\circ a+a\circ x)]
\,.}
\end{array}
\end{equation}
Hence, by \eqref{20180409:eq18} and \eqref{20180409:eq19},
we can rewrite \eqref{20180409:eq17} as
\begin{equation}\label{20180409:eq20}
\begin{array}{l}
\displaystyle{
\vphantom{\Big(}
\sum_\sigma\sign(\sigma)
\bigg(
\Tr\Big( 
R^*([x,b]) 
\circ
\big(
R^*([x,a]) \!\circ\! x \!\circ\! c
+
c \!\circ\! x \!\circ\! R^*([x,a])
\big)
\Big)
} \\
\displaystyle{
\vphantom{\Big(}
+
\Tr\Big( 
R^*([x,b])
\circ
[x\circ c+c\circ x,R(x\circ a+a\circ x)]
\Big)
} \\
\displaystyle{
\vphantom{\Big(}
+\frac12
\Tr\Big( 
[x,b] 
\circ 
[R(x\circ a+a\circ x),R(x\circ c+c\circ x)]
\Big)
\bigg)
\,.}
\end{array}
\end{equation}
The first summand in \eqref{20180409:eq20} is
\begin{equation}\label{20180409:eq21}
\begin{array}{l}
\displaystyle{
\vphantom{\Big(}
\sum_\sigma\sign(\sigma)
\bigg(
\Tr\Big( 
R^*([x,b]) 
\circ
R^*([x,a]) \!\circ\! x \!\circ\! c
\Big)
+
\Tr\Big(
R^*([x,b]) 
\circ
c \!\circ\! x \!\circ\! R^*([x,a])
\Big)
\bigg)
} \\
\displaystyle{
\vphantom{\Big(}
=
\sum_\sigma\!\sign(\sigma)\!
\bigg(\!
\Tr\Big(
R^*([x,b]) 
\!\circ\!
R^*([x,a]) \!\circ\! x \!\circ\! c
\Big)
-
\Tr\Big(
R^*([x,a]) 
\!\circ\! c \!\circ\! x
\!\circ\!
R^*([x,b])
\Big)
\bigg)
} \\
\displaystyle{
\vphantom{\Big(}
=
\sum_\sigma\sign(\sigma)
\Tr\Big(
R^*([x,b]) 
\circ 
R^*([x,a]) 
\circ
[x,c]
\Big)
} \\
\displaystyle{
\vphantom{\Big(}
=
-\frac12\sum_\sigma\sign(\sigma)
\Tr\Big(
[R^*([x,a]),R^*([x,b])] 
\circ
[x,c]
\Big)
} \\
\displaystyle{
\vphantom{\Big(}
=
\Tr\big(
[[x,a],[x,b]] \circ [x,c]
\big)
\,.}
\end{array}
\end{equation}
For the last equality we used Lemma \ref{20180409:lem1}(c).
The second and third summands in \eqref{20180409:eq20} combined
give
\begin{equation}\label{20180409:eq22}
\begin{array}{l}
\displaystyle{
\vphantom{\Big(}
\sum_\sigma\sign(\sigma)
\Tr\Big( 
[x,b] 
\circ
\big(
R([x \circ c+c \circ x,R(x \circ a+a \circ x)])
} \\
\displaystyle{
\vphantom{\Big(}
\qquad
+\frac12
[R(x \circ a+a \circ x),R(x \circ c+c \circ x)] 
\big)
\Big)
} \\
\displaystyle{
\vphantom{\Big(}
=
\frac12
\sum_\sigma\sign(\sigma)
\Tr\Big(
[x,b]
\circ
\big(
[R(x \circ a+a \circ x),R(x \circ c+c \circ x)] 
} \\
\displaystyle{
\vphantom{\Big(}
\qquad
-
R([R(x \circ a+a \circ x),x \circ c+c \circ x])
-
R([x \circ a+a \circ x,R(x \circ c+c \circ x)])
\big)
\Big)
} \\
\displaystyle{
\vphantom{\Big(}
=
-\frac12
\sum_\sigma\sign(\sigma)
\Tr( [x,b] \circ 
[x \circ a+a \circ x,x \circ c+c \circ x] 
)
} \\
\displaystyle{
\vphantom{\Big(}
=
-\Tr( [[x,a],[x,b]] \circ [x,c] )
\,.}
\end{array}
\end{equation}
For the second equality we used the assumption \eqref{eq:mod-YB} on $R$,
while for the third equality we used Lemma \ref{20180409:lem2}(b).
Combining \eqref{20180409:eq21} and \eqref{20180409:eq22} we get $0$, 
proving the claim.
\end{proof}
\begin{remark}\label{20180408:rem1}
By \eqref{20180408:eq3},
the $1$-bracket $\{\cdot\,,\,\cdot\}^R_1$ coincides (up to a factor $\frac12$)
with the Lie bracket $[\cdot\,,\,\cdot]_R$ of $\mf g$ defined by Lemma \ref{20180405:rem1},
hence the associated Poisson bracket of $S(\mf g)$
corresponds to the Kirillov-Kostant Poisson structure on $\mf g^*$
with respect to Lie bracket $[\cdot\,,\,\cdot]_R$.
In particular, this structure only uses the Lie bracket of $\mf g$ and the $R$-matrix $R$,
and not the associative product of $\mf g$.
\end{remark}
\begin{remark}\label{20180408:rem2}
Identifying $S(\mf g)$ with the algebra of polynomial functions on $\mf g^*$,
we can write down the $\epsilon$-family of Poisson structures of $\mf g^*$,
corresponding to the Poisson brackets \eqref{20180408:eq1} by ($\epsilon\in\mb F$):
\begin{equation}\label{20180412:eq1}
\{f,g\}^{R,\epsilon}(L)
=
\sum_{i,j\in I}
\frac{\partial f}{\partial u_i}(L)
\frac{\partial g}{\partial u_j}(L)
\{u_i,u_j\}^{R,\epsilon} (L)
\,,\,\, L\in\mf g^*
\,,
\end{equation}
where $f$ and $g$ are polynomial functions on $\mf g^*$.
They are given by
\begin{equation}\label{eq:ragn3}
\begin{array}{r}
\displaystyle{
\vphantom{\Big(}
\{f,g\}^{R,\epsilon}(L)
=
\frac12\Tr\Big( L \circ [d_Lf,R((L+\epsilon\id)\circ d_Lg\circ (L+\epsilon\id))] \Big)
} \\
\displaystyle{
\vphantom{\Big(}
-
\frac12\Tr\Big( L \circ [d_Lg,R((L+\epsilon\id)\circ d_Lf\circ (L+\epsilon\id))] \Big)
\,,\,\,
\epsilon\in\mb F
\,,}
\end{array}
\end{equation}
where 
\begin{equation}\label{20180412:eq2}
d_Lf=\sum_{i\in I}\frac{\partial f}{\partial u_i}(L)u_i
\,.
\end{equation}
These are the same Poisson structures which appeared in \cite{OR89}.
In order to make sense of equation \eqref{eq:ragn3} we need to identify $\mf g^*\simeq\mf g$
via the inner product \eqref{20180412:eq3}.
Indeed, under this identification, $L\in\mf g^*$ can be thought of as an element of $\mf g$,
so it makes sense to take products $L\circ a$ or $a\circ L$ for $a\in\mf g$.
\end{remark}
\begin{remark}\label{20180411:rem}
Assuming that both $R$ and $\frac12(R-R^*)$ are $R$-matrices over $\mf g$,
by Theorem \ref{20180408:thm}
the second bracket $\{\cdot\,,\,\cdot\}^R_2$ is a Poisson bracket on $S(\mf g)$,
and it is obtained as the coefficient of $2\epsilon$ in \eqref{eq:ragn3}:
$$
\begin{array}{r}
\displaystyle{
\vphantom{\Big(}
\{f,g\}^{R}_2(L)
=
\frac14\Tr\big( L \circ [d_Lf,R(L\circ d_Lg+d_Lg\circ L)] \big)
} \\
\displaystyle{
\vphantom{\Big(}
-
\frac14\Tr\big( L \circ [d_Lg,R(L\circ d_Lf+d_Lf\circ L)] \big)
\,.}
\end{array}
$$
It has the following equivalent form:
\begin{equation}\label{20180411:eq1}
\begin{array}{l}
\displaystyle{
\vphantom{\Big(}
\{f,g\}^{R}_2(L)
} \\
\displaystyle{
\vphantom{\Big(}
=
\frac14\Tr\big( L \circ [d_Lf,(R-R^*)(L\circ d_Lg)] \big)
-
\frac14\Tr\big( L \circ [d_Lg,(R-R^*)(d_Lf\circ L)] \big)
} \\
\displaystyle{
\vphantom{\Big(}
+
\frac14\Tr\big( L\circ d_Lf \circ (R+R^*)(d_Lg\circ L) \big)
-
\frac14\Tr\big( L\circ d_Lg \circ (R+R^*)(d_Lf\circ L) \big)
\,,}
\end{array}
\end{equation}
which, for a skewadjoint $R$-matrix $R$,
reduces to \cite[Eq.(22)]{STS83}.
\end{remark}

%%%
\subsection{Hamiltonian equations and the triple Lenard-Magri scheme}

Recall that, given a Poisson structure on $\mf g^*$,
i.e. a Poisson bracket $\{\cdot\,,\,\cdot\}$ 
on the algebra of polynomial functions on $\mf g^*$,
and a Hamiltonian function $h$ on $\mf g^*$,
the corresponding Hamiltonian equation is,
in coordinates $x_j$, the following system of evolution equations
\begin{equation}\label{eq:ham1}
\frac{dx_j(t)}{dt}
=
\{h,u_j\}(L(t))
\,,\,\,j\in I
\,.
\end{equation}
It describes the time evolution of the point $L(t)=\sum_{i\in I}x_i(t)u^i\in\mf g^*$.
By Leibniz rule we get the corresponding evolution equation
for a function $f(L)$ on $\mf g^*$:
$\frac{df(L(t))}{dt}=\{h,f\}(L)$.
Using the identification of the symmetric algebra $S(\mf g)$ 
with the algebra of polynomial functions on $\mf g^*$,
we thus get the corresponding \emph{Hamiltonian equation}
on $S(\mf g)$:
\begin{equation}\label{eq:ham2}
\frac{df}{dt}
=
\{h,f\}
\,\,,\,\,\,\,
f\in S(\mf g)
\,.
\end{equation}

In particular, if $R$ and $\frac12(R-R^*)$ are $R$-matrices over $\mf g$,
we have, by Theorem \ref{20180408:thm},
the three Poisson brackets $\{\cdot\,,\,\cdot\}^R_i$, $i=1,2,3$,
and therefore, for every Hamiltonian function $h\in S(\mf g)$,
we have the corresponding three evolution equations:
\begin{equation}\label{eq:ham3}
\begin{array}{l}
\displaystyle{
\vphantom{\Big(}
\frac{du_j}{dt_{1}}
%=
%\{h,u_j\}^{R}_1
=
\frac12 
\sum_{i\in I}
\frac{\partial h}{\partial u_i}
\big(
[u_i,R(u_j)]-[u_j,R(u_i)]
\big)
\,,} \\
\displaystyle{
\vphantom{\Big(}
\frac{du_j}{dt_{2}}
%=
%\{h,u_j\}^{R}_2
=
\frac14 \sum_{i,h\in I}
\frac{\partial h}{\partial u_i}
u^h
\big(
[u_i,R(u_h\circ u_j+u_j\circ u_h)]-[u_j,R(u_h\circ u_i+u_i\circ u_h)]
\big)
\,,} \\
\displaystyle{
\vphantom{\Big(}
\frac{du_j}{dt_{3}}
%=
%\{h,u_j\}^{R}_3
=
\frac12 \sum_{i,h,k\in I}
\frac{\partial h}{\partial u_i}
u^hu^k
\big(
[u_i,R(u_h\circ u_j\circ u_h)]-[u_j,R(u_h\circ u_i\circ u_k)]
\big)
\,.}
\end{array}
\end{equation}

Recall that a \emph{Casimir} of $\mf g$
is an element $f\in S(\mf g)$ invariant with respect to the adjoint action of $\mf g$,
i.e. such that
\begin{equation}\label{eq:casimir}
\{x,f\}
:=
(\ad(x))(f)
=
\sum_{i\in I}\frac{\partial f}{\partial u_i}[x,u_i]
=
0\,\,,\text{ for all }\,\, x\in\mf g
\,.
\end{equation}
\begin{lemma}[\cite{OR89}]\label{20180410:lem}
If $R\in\End(\mf g)$ is an $R$-matrix over the Lie algebra $\mf g$,
and if $f,g\in S(\mf g)$ are Casimirs of $\mf g$,
then they Poisson commute with respect to the whole 
$\epsilon$-family of Poisson structures defined by \eqref{20180408:eq1}:
$\{f,g\}^{R,\epsilon}=0$, for every $\epsilon$. 
\end{lemma}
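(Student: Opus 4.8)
The plan is to reduce the vanishing of the Poisson bracket of two Casimirs to a statement about the $\epsilon$-family bracket evaluated at points of $\mf g^*$, using the expression \eqref{eq:ragn3}. Since $\{f,g\}^{R,\epsilon}$ is an element of $S(\mf g)$ and $S(\mf g)$ is identified with polynomial functions on $\mf g^*$, it suffices to show that $\{f,g\}^{R,\epsilon}(L)=0$ for every $L\in\mf g^*\cong\mf g$. By \eqref{eq:ragn3}, this amounts to showing that
\[
\Tr\Big( L \circ [d_Lf,R((L+\epsilon\id)\circ d_Lg\circ (L+\epsilon\id))] \Big)
\]
is symmetric under exchange of $f$ and $g$, for every $L$ and every $\epsilon$.

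First I would record what the Casimir condition \eqref{eq:casimir} says about the differential $d_Lf$: for $f$ a Casimir, $\sum_{i\in I}\frac{\partial f}{\partial u_i}(L)[x,u_i]=0$ for all $x$, and since $\Tr(x\circ[d_Lf,\cdot])=\Tr([x,d_Lf]\circ\cdot)$-type manipulations show this is equivalent to $[L,d_Lf]=0$ in $\mf g$ (using nondegeneracy of the trace form and the identification $\mf g^*\cong\mf g$). So the key input is: \emph{$d_Lf$ commutes with $L$} (in the associative algebra $\mf g$, hence also $d_Lf$ commutes with $L+\epsilon\id$). The same holds for $g$.

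Next I would expand the trace expression using invariance \eqref{eq:invariance} to move the bracket around:
\[
\Tr\big( L \circ [d_Lf,R(M)] \big)
=
\Tr\big( [L,d_Lf]\circ R(M)\big)+\cdots
\]
Wait — more carefully, $\Tr(L\circ[d_Lf,R(M)])=\Tr([L,d_Lf]\circ R(M))$ is not quite right; rather $\Tr(L\circ[A,B])=\Tr([L,A]\circ B)$, so with $A=d_Lf$ and $B=R(M)$ where $M=(L+\epsilon\id)\circ d_Lg\circ(L+\epsilon\id)$, we get $\Tr([L,d_Lf]\circ R(M))=0$ directly from $[L,d_Lf]=0$. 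That already kills the first term of \eqref{eq:ragn3}; by the same argument with $f\leftrightarrow g$ the second term vanishes too. So in fact each summand of \eqref{eq:ragn3} vanishes separately, and the symmetry/antisymmetry bookkeeping is not even needed.

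The main obstacle — really the only nontrivial point — is establishing that the Casimir condition \eqref{eq:casimir} translates into $[L,d_Lf]=0$. This requires carefully using the identification $\mf g\cong\mf g^*$ via the trace form, the definition $d_Lf=\sum_i\frac{\partial f}{\partial u_i}(L)u_i$, the invariance of the trace, and the nondegeneracy of $\langle\cdot|\cdot\rangle$: from $\sum_i\frac{\partial f}{\partial u_i}(L)[x,u_i]=0$ one pairs with an arbitrary $y\in\mf g$ to get $\Tr\big(y\circ\sum_i\frac{\partial f}{\partial u_i}(L)[x,u_i]\big)=0$, rewrite as $-\Tr\big([x,y]\circ d_Lf\big)=0$, i.e. $\Tr\big(y\circ[x,d_Lf]\big)=0$ for all $x,y$, whence $[x,d_Lf]=0$ for all $x$ — in particular $d_Lf$ is central, which is even stronger than $[L,d_Lf]=0$. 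Once this is in hand the theorem follows in one line from \eqref{eq:ragn3}, and the conclusion then propagates from $\{\cdot,\cdot\}^{R,\epsilon}$ to the individual brackets $\{\cdot,\cdot\}^R_i$ by taking the coefficients of $\epsilon^0,\epsilon^1,\epsilon^2$ in the expansion \eqref{20180408:eq2}.
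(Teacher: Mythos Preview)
Your overall strategy—show that each of the two terms in \eqref{eq:ragn3} vanishes separately when $f$ (resp.\ $g$) is a Casimir—is correct and is essentially the paper's argument transported to the pointwise picture. But the justification of the key step has a genuine gap. The Casimir condition \eqref{eq:casimir} is an identity in $S(\mf g)$; evaluating it at $L\in\mf g^*\cong\mf g$ yields the \emph{scalar} equation
\[
\sum_{i}\frac{\partial f}{\partial u_i}(L)\,\Tr\big(L\circ[x,u_i]\big)=\Tr\big(L\circ[x,d_Lf]\big)=0,
\]
not the vector equation $\sum_i\frac{\partial f}{\partial u_i}(L)[x,u_i]=0$ in $\mf g$ that you take as your starting point. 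Your ``pair with an arbitrary $y$'' argument is therefore circular, and the conclusion that $d_Lf$ is central is in fact false: already for $f=C_2$ one has $d_LC_2=L$ (see the remark after \eqref{eq:ham5b}), which is not central. What \emph{is} true, and all you need, is $[L,d_Lf]=0$: from the scalar identity above, invariance gives $\Tr(L\circ[x,d_Lf])=-\Tr(x\circ[L,d_Lf])=0$ for all $x$, and non-degeneracy yields $[L,d_Lf]=0$. With this corrected derivation your argument via \eqref{eq:ragn3} and $\Tr(L\circ[d_Lf,R(M)])=\Tr([L,d_Lf]\circ R(M))=0$ goes through.

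The paper's proof sidesteps this translation entirely by staying in $S(\mf g)$: expanding \eqref{20180408:eq1} by the Leibniz rule gives
\[
\{f,g\}^{R,\epsilon}
=\tfrac12\sum_{j,h,k}\tfrac{\partial g}{\partial u_j}(u^h+\epsilon\Tr u^h)(u^k+\epsilon\Tr u^k)\,\{f,R(u_h\circ u_j\circ u_k)\}\;-\;(f\leftrightarrow g),
\]
where $\{f,a\}$ denotes (minus) the adjoint action of $a\in\mf g$ on $S(\mf g)$. Then $\{f,R(\cdots)\}=0$ is literally \eqref{eq:casimir}, and the result follows without ever passing to $d_Lf$.
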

\begin{proof}
By \eqref{20180408:eq1} and \eqref{20180412:eq1}, we have
\begin{equation}\label{20180410:eq1}
\begin{array}{l}
\displaystyle{
\vphantom{\Big(}
\{f,g\}^{R,\epsilon}
=
\frac12 \sum_{j,h,k\in I}
\frac{\partial g}{\partial u_j}
(u^h+\epsilon\Tr(u^h))(u^k+\epsilon\Tr(u^k))
\big\{f,R(u_h\circ u_j\circ u_k)\big\}
} \\
\displaystyle{
\vphantom{\Big(}
-
\frac12 \sum_{i,h,k\in I}
\frac{\partial f}{\partial u_i}
(u^h+\epsilon\Tr(u^h))(u^k+\epsilon\Tr(u^k))
\big\{g,R(u_h\circ u_i\circ u_k)\big\}
\,.}
\end{array}
\end{equation}
If $f$ is a Casimir of $\mf g$, 
the first term of the RHS of \eqref{20180410:eq1} vanishes by \eqref{eq:casimir},
while if $g$ is a Casimir of $\mf g$, 
the second term of the RHS of \eqref{20180410:eq1} vanishes.
\end{proof}
If we take the Hamiltonian function to be a Casimir element $C\in S(\mf g)$,
the three evolution equation \eqref{eq:ham3}
greatly simplify thanks to equation \eqref{eq:casimir}.
They become
\begin{equation}\label{eq:ham4}
\begin{array}{l}
\displaystyle{
\vphantom{\Big(}
\frac{du_j}{dt_{1}}
=
\frac12 
\sum_{i\in I}
\frac{\partial C}{\partial u_i}
[R(u_i),u_j]
\,,} \\
\displaystyle{
\vphantom{\Big(}
\frac{du_j}{dt_{2}}
=
\frac12 \sum_{i,h\in I}
\frac{\partial C}{\partial u_i}
u^h
[R(u_i\circ u_h),u_j]
\,,} \\
\displaystyle{
\vphantom{\Big(}
\frac{du_j}{dt_{3}}
=
\frac12 \sum_{i,j,h,k\in I}
\frac{\partial C}{\partial u_i}
u^hu^k
[R(u_i\circ u_h\circ u_k),u_j]
\,.}
\end{array}
\end{equation}

An infinite collection of Casimirs is the following:
\begin{equation}\label{eq:ham5}
C_k
=
\frac1k
\sum_{i_1,\dots,i_k}
u_{i_1}\dots u_{i_k} \Tr( u^{i_1}\circ\dots\circ u^{i_k} )
\,,\,\,
k\geq 1
\,.
\end{equation}
Indeed, it is immediate to check that \eqref{eq:casimir} holds for all elements $C_k$.
Moreover, we have the following identity:
\begin{equation}\label{eq:ham5b}
\sum_{i\in I}\frac{\partial C_k}{\partial u_i}\otimes u_i
=
\sum_{i_1,\dots,i_{k-1}}
u_{i_1}\dots u_{i_{k-1}} \otimes (u^{i_1}\circ\dots\circ u^{i_{k-1}})
\,.
\end{equation}
Note that the functions on $\mf g^*$ corresponding to the Casimirs \eqref{eq:ham5}
are $C_k(L)=\frac1k\Tr(L^{\circ k})$, and we have $d_LC_k=L^{\circ(k-1)}$.

It immediately follows from \eqref{eq:ham4} and \eqref{eq:ham5b},
that the following ``triple Lenard-Magri scheme'' holds:
denoting $t_{k,i}$ the time evolution with respect to the Poisson structure $\{\cdot\,,\,\cdot\}^R_i$
and the Hamiltonian function $C_k$, we have
\begin{equation}\label{eq:ham6}
\begin{array}{l}
\displaystyle{
\vphantom{\Big(}
\frac{du_j}{dt_{k+1,1}}
=
\frac{du_j}{dt_{k,2}}
=
\frac{du_j}{dt_{k-1,3}}
} \\
\displaystyle{
\vphantom{\Big(}
=
\frac12 
\sum_{i_1,\dots,i_{k}}
u_{i_1}\dots u_{i_{k}} [R(u^{i_1}\circ\dots\circ u^{i_{k}}),u_j]
\,.}
\end{array}
\end{equation}

%
%\pecetta{
%{\bf Question:} 
%can we  explicitly quantize the
%Ragnisco Poisson structures, 
%to get the universal enveloping associative algebra $U_R(\mf g)$?
%}

%%%
\section{Algebraic setup:
the algebra $\mc V_\infty\!\!\widehat{}$\,\, and continuous PVA structures}
\label{sec:4.1}

%%%
\subsection{The algebra $A$}

Throughout the rest of the paper 
we let $A$ be a finite dimensional associative algebra over $\mb F$,
with a unit $\id$,
and with a non-degenerate trace form $\Tr(\cdot):\,A\to\mb F$
(recall the definition at the beginning of Section \ref{sec:3}).
The typical example is the algebra $A=\End (V)$ of endomorphisms
of a finite-dimensional vector space $V$,
with the usual trace form $\Tr_V(XY)$.
We fix dual bases $\{E_\alpha\}_{\alpha\in I}$, $\{E^\alpha\}_{\alpha\in I}$ of $A$:
\begin{equation}\label{eq:motiv7}
\Tr(E^\alpha E_\beta) = \delta_{\alpha,\beta}
\,.
\end{equation}

%%%
\subsection{The differential algebra $\mc V_\infty\!\!\widehat{}\,\,$}

Consider the infinite set of variables
\begin{equation}\label{20180417:eq2}
u_{p,\alpha}
\,\,\text{ for }\,\,
p\in\mb Z,\,\alpha\in I
\,.
\end{equation}
The reason for considering such set of variables will be clear from the
discussion in Section \ref{sec:4},
where we present the construction of the ``affine'' O-R Poisson structures.
Consider the increasing sequence of algebras of differential polynomials ($N\in\mb Z$)
\begin{equation}\label{20180417:eq1}
\dots\subset\mc V_N\subset\mc V_{N+1}\subset\dots
\subset
\mc V_\infty
\,,
\end{equation}
where
\begin{equation}\label{20180417:eq3}
\mc V_N
=
\mb F\Big[u_{p,\alpha}^{(n)}
\,\Big|\,
\substack{p\in \geq-N-1 \\ \alpha\in I \\ n\in\mb Z_{\geq0}}
\Big]
\,\,\text{ for }\,\, N\in\mb Z
\,\,,\text{ and }\,\,
\mc V_\infty
=
\mb F\Big[u_{p,\alpha}^{(n)}
\,\Big|\,
\substack{p\in\mb Z \\ \alpha\in I \\ n\in\mb Z_{\geq0}}
\Big]
\,.
\end{equation}
These are differential algebras, with derivation $\partial:\,\mc V_N\to\mc V_N$
defined by $\partial u_{p,\alpha}^{(n)}=u_{p,\alpha}^{(n+1)}$.
As usual, we denote $u_{p,\alpha}=u_{p,\alpha}^{(0)}$.

We have the corresponding sequence of projection maps
\begin{equation}\label{20180417:eq4}
\begin{tikzcd}
\dots & \arrow[->>]{l} \mc V_N & \arrow[->>]{l} \mc V_{N+1} & \arrow[->>]{l} \dots
& \arrow[->>]{l} \arrow[->>,bend right]{ll}{\pi_{N+1}} \arrow[->>,swap,bend right]{lll}{\pi_N} \mc V_\infty
\,,
\end{tikzcd}
\end{equation}
where $\pi_N$ is the differential algebra homomorphism
defined by setting $u_{p,\alpha}=0$ for $p<-N-1$ and for all $\alpha\in I$.
We then have the corresponding inverse limit algebra
\begin{equation}\label{20180417:eq5}
\mc V_\infty\yhat
=
\lim_{\substack{\longleftarrow \\ N}}
\mc V_N
\,.
\end{equation}
Its elements are infinite sums
\begin{equation}\label{20180417:eq6}
f=\sum_{s=0}^\infty f_s
\,\,\text{ with }\,\,
f_s\in\mc V_\infty
\,,
\end{equation}
with the property that, for all $N\in\mb Z$,
\begin{equation}\label{20180417:eq7}
\pi_N(f_s)=0
\,\,\text{ for }\,\, s>>0
\,.
\end{equation}
In other words,
for every $N\in\mb Z$, $\pi_N(f)$ becomes a finite sum of elements in $\mc V_N$.
\begin{proposition}\label{20220503:prop1}
$\mc V_\infty\yhat$ is a differential algebra extension of $\mc V_\infty$,
with uniquely defined derivations 
$$
\frac{\partial}{\partial u_{p,\alpha}^{(n)}}
:\,\mc V_\infty\yhat\to\mc V_\infty\yhat
\,\,\text{ for }\,\, p\in\mb Z,\,\alpha\in I,\,n\in\mb Z_{\geq0}
\,,
$$
extending the usual partial derivatives on $\mc V_\infty$ and 
such that 
$$
\Big[
\frac{\partial}{\partial u_{p,\alpha}^{(n)}},\partial
\Big]
=
\frac{\partial}{\partial u_{p,\alpha}^{(n-1)}}
\,.
$$
Moreover, we have uniquely defined maps
$$
\frac{\delta}{\delta u_{p,\alpha}}:\,\mc V_\infty\yhat\to\mc V_\infty\yhat
\,\,\text{ for }\,\, p\in\mb Z,\,\alpha\in I
\,,
$$
extending the usual variational derivatives on $\mc V_\infty$.
These variational derivatives $\frac{\delta}{\delta u_{p,\alpha}}$ vanish on total derivatives,
so they induce linear maps on the quotient space of ``continuous'' local functionals:
\begin{equation*}%\label{eq:varder}
\frac{\delta}{\delta u_{p,\alpha}}
\,:\,\,
\mc V_\infty\yhat/\partial\mc V_\infty\yhat
\to\mc V_\infty\yhat
\,.
\end{equation*}
\end{proposition}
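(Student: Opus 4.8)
The plan is to construct the partial derivatives $\frac{\partial}{\partial u_{p,\alpha}^{(n)}}$ on $\mc V_\infty\yhat$ directly by acting term-by-term on the expansions \eqref{20180417:eq6}, then to check that the result lies in $\mc V_\infty\yhat$ using the continuity condition \eqref{20180417:eq7}, and finally to derive the properties of the variational derivatives as formal consequences. First I would note that for a fixed $(p,\alpha,n)$ the ordinary partial derivative $\frac{\partial}{\partial u_{p,\alpha}^{(n)}}$ on $\mc V_\infty$ restricts to a derivation of each $\mc V_N$ with $N\geq -p-1$, and is \emph{compatible with the projections}: $\pi_N\circ\frac{\partial}{\partial u_{p,\alpha}^{(n)}} = \frac{\partial}{\partial u_{p,\alpha}^{(n)}}\circ\pi_N$ for all $N\geq-p-1$ (both sides kill a monomial iff it contains a variable $u_{q,\beta}^{(m)}$ with $q<-N-1$). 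Hence the family $\bigl(\frac{\partial}{\partial u_{p,\alpha}^{(n)}}\bigr)_N$ passes to the inverse limit and defines a map on $\mc V_\infty\yhat$. Concretely, for $f=\sum_s f_s$ as in \eqref{20180417:eq6} one sets $\frac{\partial}{\partial u_{p,\alpha}^{(n)}}f := \sum_s \frac{\partial f_s}{\partial u_{p,\alpha}^{(n)}}$; the key point is that this is again a legitimate element of $\mc V_\infty\yhat$, which follows because $\pi_N(f_s)=0$ forces $\pi_N\bigl(\frac{\partial f_s}{\partial u_{p,\alpha}^{(n)}}\bigr)=\frac{\partial}{\partial u_{p,\alpha}^{(n)}}\pi_N(f_s)=0$ by the compatibility just noted, so the vanishing property \eqref{20180417:eq7} is preserved. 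Uniqueness of the extension is clear since $\mc V_\infty$ is dense (any element of $\mc V_\infty\yhat$ is determined by its images $\pi_N$, which lie in $\mc V_N\subset\mc V_\infty$) and a derivation extending the partial derivative on $\mc V_\infty$ is forced on each $\pi_N$-image.

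Next I would verify the two algebraic identities. The Leibniz/derivation property on $\mc V_\infty\yhat$ follows from the Leibniz property on each $\mc V_N$ together with the fact that multiplication in $\mc V_\infty\yhat$ is computed compatibly with the $\pi_N$. For the commutator relation $\bigl[\frac{\partial}{\partial u_{p,\alpha}^{(n)}},\partial\bigr]=\frac{\partial}{\partial u_{p,\alpha}^{(n-1)}}$ (with the convention that the right side is $0$ when $n=0$), one checks it holds in $\mc V_\infty$ by the definition $\partial u_{q,\beta}^{(m)}=u_{q,\beta}^{(m+1)}$ and the chain rule, and then it transfers to $\mc V_\infty\yhat$ because $\partial$ on $\mc V_\infty\yhat$ is itself defined term-by-term and both sides of the relation commute with all $\pi_N$.

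For the variational derivatives I would simply define, for $f\in\mc V_\infty\yhat$,
$$
\frac{\delta f}{\delta u_{p,\alpha}}
:=
\sum_{n\in\mb Z_{\geq0}}(-\partial)^n\frac{\partial f}{\partial u_{p,\alpha}^{(n)}}
\,,
$$
the point being that this sum makes sense in $\mc V_\infty\yhat$: for a fixed element $f$ of bounded differential order (note that a general $f\in\mc V_\infty\yhat$ need not have bounded order, but each $\pi_N(f)\in\mc V_N$ does), the sum is finite after applying any $\pi_N$, because $\pi_N\frac{\delta f}{\delta u_{p,\alpha}}=\frac{\delta}{\delta u_{p,\alpha}}\pi_N(f)$ by the compatibility of both $\frac{\partial}{\partial u_{p,\alpha}^{(n)}}$ and $\partial$ with $\pi_N$, and the right side is the usual finite-sum variational derivative on $\mc V_N$. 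This shows $\frac{\delta}{\delta u_{p,\alpha}}$ is well defined on $\mc V_\infty\yhat$, extends the usual variational derivative, and is unique. That it kills $\partial\mc V_\infty\yhat$ is then immediate: apply $\pi_N$ and use that $\frac{\delta}{\delta u_{p,\alpha}}$ kills $\partial\mc V_N$ in the classical setting (this is the standard identity $\frac{\delta}{\delta u_{p,\alpha}}\partial g=0$ in a differential polynomial algebra), together with $\pi_N\partial=\partial\pi_N$. Hence it descends to $\mc V_\infty\yhat/\partial\mc V_\infty\yhat$.

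The main obstacle, and the place requiring genuine care rather than routine verification, is the well-definedness of the infinite sums: one must confirm that acting term-by-term on an expansion \eqref{20180417:eq6} really lands back in $\mc V_\infty\yhat$ — i.e. that the continuity property \eqref{20180417:eq7} survives the operation — and that the result is independent of the chosen decomposition $f=\sum_s f_s$. Both reduce to the single structural fact that the ordinary operators $\frac{\partial}{\partial u_{p,\alpha}^{(n)}}$, $\partial$, and multiplication on $\mc V_\infty$ all commute with the projections $\pi_N$ for $N$ large enough, so that everything can be checked one ``level'' $\mc V_N$ at a time; once this compatibility is established cleanly at the outset, the rest is bookkeeping.
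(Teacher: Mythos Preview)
Your proposal is correct and follows essentially the same approach as the paper: define the partial and variational derivatives term-by-term on expansions $f=\sum_s f_s$, and use the compatibility $\pi_N\circ\frac{\partial}{\partial u_{p,\alpha}^{(n)}}=\frac{\partial}{\partial u_{p,\alpha}^{(n)}}\circ\pi_N$ (valid for $N\ge -p-1$) to show the result lies in $\mc V_\infty\yhat$. The only point the paper spells out more explicitly is the passage from arbitrary $N$ to $\tilde N=\max\{N,-p-1\}$ (so the compatibility applies at level $\tilde N$, and then $\pi_N$ factors through $\pi_{\tilde N}$), which you gesture at with ``for $N$ large enough'' but do not make fully precise in the main argument.
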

\begin{proof}
Clearly, $\mc V_\infty\yhat$ is an algebra extension of $\mc V_\infty$,
and the derivation $\partial$ uniquely extends to a derivation of $\mc V_\infty\yhat$
defined by $\partial f=\sum_{s=0}^\infty \partial f_s\,\in\mc V_\infty\yhat$,
for $f$ as in \eqref{20180417:eq6}.

Next, we show that the partial derivatives $\frac{\partial}{\partial u_{p,\alpha}^{(n)}}$
uniquely extend to well-defined derivations
$\frac{\partial}{\partial u_{p,\alpha}^{(n)}}:\,\mc V_\infty\yhat\to\mc V_\infty\yhat$,
defined, for $f=\sum_{s=0}^\infty f_s$ as in \eqref{20180417:eq6}-\eqref{20180417:eq7}, by
\begin{equation}\label{20180418:eq13}
\frac{\partial f}{\partial u_{p,\alpha}^{(n)}}
=
\sum_{s=0}^\infty
\frac{\partial f_s}{\partial u_{p,\alpha}^{(n)}}
\,\in\mc V_\infty\yhat 
\,.
\end{equation}
For this, we need to check that the RHS of \eqref{20180418:eq13} lies in $\mc V_\infty\yhat$.
By the definition of the projection maps $\pi_N$,
we have, for $\bar f\in\mc V_\infty$
\begin{equation}\label{20180418:eq12}
\pi_N\big(
\frac{\partial \bar f}{\partial u_{p,\alpha}^{(n)}}
\big)
=
\frac{\partial}{\partial u_{p,\alpha}^{(n)}}
\pi_N(\bar f)
\,\,\text{ if }\,\,
p\geq -N-1
\,.
\end{equation}
For $N\in\mb Z$, let $\tilde{N}=\max\{N,-p-1\}$.
By \eqref{20180417:eq7}
there exists $S_{\tilde{N}}\in\mb Z_{\geq0}$ such that
$\pi_{\tilde{N}}(f_s)=0$ for all $s> S_{\tilde{N}}$.
Since $p\geq -\tilde{N}-1$, we have, for $s> S_{\tilde{N}}$,
$$
\pi_{\tilde{N}}\big(\frac{\partial f_s}{\partial u_{p,\alpha}^{(n)}}\big)
=
\frac{\partial}{\partial u_{p,\alpha}^{(n)}}\pi_{\tilde{N}}(f_s)=0
\,,
$$
which implies 
$\pi_N\big(\frac{\partial f_s}{\partial u_{p,\alpha}^{(n)}}\big)=0$, since $N\leq\tilde{N}$.
Hence,
applying $\pi_N$ to the RHS of \eqref{20180418:eq13} we get
a finite sum in $\mc V_N$:
$$
\sum_{s=0}^{S_{\tilde{N}}}
\pi_N\big(
\frac{\partial f_s}{\partial u_{p,\alpha}^{(n)}}
\big)
\,.
$$
The facts that the maps  $\frac{\partial}{\partial u_{p,\alpha}^{(n)}}$
are derivations of the commutative associative product of $\mc V_\infty\yhat$,
and that they satisfy the usual commutation relations with $\partial$,
follow by construction, since the same properties hold in $\mc V_\infty$.

In the same way one can prove that the variational derivatives in $\mc V_\infty$
$$
\frac{\delta \bar f}{\delta u_{p,\alpha}}
:=
\sum_{n\in\mb Z_{\geq0}}(-\partial)^n\frac{\partial\bar f}{\partial u_{p,\alpha}^{(n)}}
\,\,,\,\,\,\,\bar f\in\mc V_\infty
\,,
$$
uniquely extend to linear maps $\frac{\delta}{\delta u_{p,\alpha}}:\,\mc V_\infty\yhat\to\mc V_\infty\yhat$,
given, for $f$ as in \eqref{20180417:eq6}-\eqref{20180417:eq7}, by
\begin{equation}\label{eq:varder}
\frac{\delta f}{\delta u_{p,\alpha}}
=
\sum_{s=0}^\infty
\frac{\delta f_s}{\delta u_{p,\alpha}}
\,\in\mc V_\infty\yhat
\,.
\end{equation}
The last assertion of the proposition is obvious.
\end{proof}

%%%
\subsection{Continuous differential and pseudodifferential operators}

We have the corresponding increasing sequence of the algebras
of differential operators
\begin{equation}\label{20180417:eq1b}
\dots\subset\mc V_N[\partial]\subset\mc V_{N+1}[\partial]\subset\dots
\subset
\mc V_\infty[\partial]
\,,
\end{equation}
with the corresponding projection maps (commuting with $\partial$)
\begin{equation}\label{20180417:eq4b}
\pi_N\,:\,\,\mc V_\infty[\partial]\twoheadrightarrow\mc V_N[\partial]
\,,\,\, N\in\mb Z
\,.
\end{equation} 
Hence, we can consider the inverse limit
\begin{equation}\label{20180417:eq8}
\mc V_\infty[\partial]\xhat
=
\lim_{\substack{\longleftarrow \\ N}}
\mc V_N[\partial]
\,.
\end{equation}
We will call an element $P(\partial)\in\mc V_\infty[\partial]\xhat$
a \emph{continuous} differential operator over $\mc V_\infty$.
It is, by definition, an infinite sum
\begin{equation}\label{20180417:eq6b}
P(\partial)=\sum_{s=0}^\infty P_s(\partial)
\,\,\text{ with }\,\,
P_s(\partial)\in\mc V_\infty[\partial]
\,,
\end{equation}
with the property that, for all $N\in\mb Z$,
\begin{equation}\label{20180417:eq7b}
\pi_N(P_s(\partial))=0
\,\,\text{ for }\,\, s>>0
\,.
\end{equation}
Hence, for every $N\in\mb Z$, $\pi_N(P(\partial))$ is a well-defined element of $\mc V_N[\partial]$.
Note that $\mc V_{\infty}[\partial]\xhat$ is of course larger than $\mc V_\infty\yhat[\partial]$,
as $P(\partial)$ in \eqref{20180417:eq6b} might have unbounded powers of $\partial$.

Clearly, $\mc V_\infty[\partial]\xhat$ is an algebra extension of $\mc V_\infty[\partial]$:
given $P(\partial)=\sum_{s=0}^\infty P_s(\partial)$ and $Q(\partial)=\sum_{t=0}^\infty Q_t(\partial)$
as in \eqref{20180417:eq6b}-\eqref{20180417:eq7b},
their $\circ$ product is
\begin{equation}\label{20180418:eq4}
P(\partial)\circ Q(\partial)
=
\sum_{s,t=0}^\infty P_s(\partial)\circ Q_t(\partial)
\,,
\end{equation}
which lies in $\mc V_\infty[\partial]\xhat$ since, for $N\in\mb Z$, we have
$\pi_N(P_s(\partial)\circ Q_t(\partial))
=\pi_N(P_s(\partial))\circ\pi_N(Q_t(\partial))=0$
for all but finitely many values of $s$ and $t$.
For the same reason, we have a natural action of a continuous differential operator
$P(\partial)\in\mc V_\infty[\partial]\xhat$
on an element $f\in\mc V_\infty\yhat$ in the obvious way:
if $f$ is as in \eqref{20180417:eq6}-\eqref{20180417:eq7}
and $P(\partial)$ is as in \eqref{20180417:eq6b}-\eqref{20180417:eq7b},
then $P(\partial)f=\sum_{s,t=0}^\infty P_s(\partial)f_t$,
and this sum becomes finite once we apply the projection map $\pi_N$.

Passing from differential operators to symbols,
we have the corresponding sequence of projection maps
$\pi_N:\,\mc V_\infty[\lambda]\twoheadrightarrow\mc V_N[\lambda]$,
$N\in\mb Z$, commuting with $\lambda$,
and the associated inverse limit
\begin{equation}\label{20180417:eq8e}
\mc V_\infty[\lambda]\xhat
=
\lim_{\substack{\longleftarrow \\ N}}
\mc V_N[\lambda]
\,.
\end{equation}
It is an algebra extension of $\mc V_\infty[\lambda]$,
and taking symbols summand by summand in \eqref{20180417:eq6b}
we have the corresponding symbol map 
$\mc V_\infty[\partial]\xhat\stackrel{\sim}{\longrightarrow}\mc V_\infty[\lambda]\xhat$.
Similarly, we denote by $\mc V_\infty[\lambda,\mu]\xhat$ the inverse limit
of the sequence of projection maps of the algebras of polynomials in two variables,
$\pi_N:\,\mc V_\infty[\lambda,\mu]\twoheadrightarrow\mc V_N[\lambda,\mu]$.
Obviously, if $P(\lambda)\in\mc V_\infty[\lambda]\xhat$,
then $P(\lambda+\mu)\in\mc V_\infty[\lambda,\mu]\xhat$.

\medskip
%%%
%\subsection{Inverse limits of algebras of pseudodifferential operators
%and Laurent series}

In the same way as for polynomials, 
we can extend the projection maps $\pi_N$
to the algebras of pseudodifferential operators, or Laurent series 
(i.e. the symbols of pseudodifferential operators):
\begin{equation}\label{20180417:eq1c}
\pi_N\,:\,\,\mc V_\infty((\partial^{-1}))\twoheadrightarrow\mc V_N((\partial^{-1}))
\,\,\text{ or }\,\,
\pi_N\,:\,\,\mc V_\infty((z^{-1}))\twoheadrightarrow\mc V_N((z^{-1}))
\,,
\end{equation}
letting $\pi_N$ commute with $\partial$ or $z$.
We have the associated inverse limits
\begin{equation}\label{20180417:eq8c}
\mc V_\infty((\partial^{-1}))\xhat
=
\lim_{\substack{\longleftarrow \\ N}}
\mc V_N((\partial^{-1}))
\,\,\text{ and }\,\,
\mc V_\infty((z^{-1}))\xhat
=
\lim_{\substack{\longleftarrow \\ N}}
\mc V_N((z^{-1}))
\,.
\end{equation}
In particular, a \emph{continuous} pseudodifferential operator over $\mc V_\infty$
is an infinite sum
\begin{equation}\label{20180418:eq2}
P(\partial)
=
\sum_{s=0}^\infty
P_s(\partial)
\,\in\mc V_\infty((\partial^{-1}))\xhat
\,,
\end{equation}
with $P_s(\partial)\in\mc V_\infty((\partial^{-1}))$
such that, for every $N\in\mb Z$, 
$\pi_N(P_s(\partial))=0$ for all but finitely many values of $s$.
As for differential operators, $\mc V_\infty((\partial^{-1}))\xhat$ is an algebra extension
of $\mc V_\infty((\partial^{-1}))$,
with $\circ$ product defined as in \eqref{20180418:eq4}.

We can define the adjoint $P^*(\partial)$
of a continuous pseudodifferential operator $P(\partial)\in\mc V_\infty((\partial^{-1}))\xhat$
by taking the adjoint of each summand in \eqref{20180418:eq2}.
Also, 
the \emph{residue} of a continuous pseudodifferential operator $P(\partial)$ is defined in the
obvious way:
if $P(\partial)\in\mc V_\infty((\partial^{-1}))\xhat$ is as in \eqref{20180418:eq2},
then
\begin{equation}\label{20180418:eq5}
\res_\partial P(\partial)
=
\sum_{s=0}^\infty \res_\partial(P_s(\partial))
\,\in\mc V_\infty\yhat
\,.
\end{equation}

\medskip

Throughout the paper, we will use the following standard notation:
for a continuous pseudodifferential operator  $P(\partial)\in\mc V_\infty((\partial^{-1}))\xhat$ 
and elements $f,g\in\mc V_\infty\yhat$,
we let
\begin{equation}\label{eq:notation}
P(z+x)\big(\big|_{x=\partial}f\big)g
=
gP(z+\partial)f
\,.
\end{equation}
In other words, if $P(\partial)=\sum_{s=0}^\infty \sum_{n=-\infty}^{N_s}p_{s,n}\partial^n$,
$f=\sum_{t=0}^\infty f_t$ and $g=\sum_{r=0}^\infty g_r$, then
$$
P(z+x)\big(\big|_{x=\partial}f\big)g
=
\sum_{s,t,r=0}^\infty\sum_{n=-\infty}^{N_s}\sum_{k\in\mb Z_{\geq0}}\binom{n}{k}
p_{s,n}f_t^{(k)}g_r z^{n-k}\in\mc V_\infty((z^{-1}))\xhat
\,.
$$

\medskip
%%%
%\subsection{Laurent series in two variables}

%When considering Laurent series in two (or more) variables,
%we need to specify if we consider 
%$\mc V((z^{-1}))((w^{-1}))$
%or $\mc V((w^{-1}))((z^{-1}))$.
%%
%Indeed, an element of $\mc V((z^{-1}))((w^{-1}))$ has powers of $w$ bounded above
%but it may have powers of $z$ unbounded in both directions,
%and the opposite happens for the elements in $\mc V((w^{-1}))((z^{-1}))$.
%
For an algebra $\mc V$, we denote by $\mc V((z^{-1},w^{-1}))$ the algebra:
\begin{equation}\label{20180418:eq1}
\mc V((z^{-1},w^{-1}))
:=
\mc V[[z^{-1},w^{-1}]][z,w]
\,.
\end{equation}
As above, we consider the sequence of projection maps
$\pi_N:\,\mc V_\infty((z^{-1},w^{-1}))\twoheadrightarrow\mc V_N((z^{-1},w^{-1}))$,
and the corresponding inverse limit
\begin{equation}\label{20180417:eq8d}
\mc V_\infty((z^{-1},w^{-1}))\xhat
=
\lim_{\substack{\longleftarrow \\ N}}
\mc V_N((z^{-1},w^{-1}))
\,.
\end{equation}

\medskip
%%%
%\subsection{The generatng series $L(z)$}

The key object in the forthcoming Sections
will be the following \emph{Lax operator}:
\begin{equation}\label{20180417:eq9}
L(z)
:=
\sum_{p\in\mb Z,\alpha\in I}
u_{p,\alpha}z^{-p-1}E^\alpha
\,\in
\mc V_\infty((z^{-1}))\xhat\otimes A
\,.
\end{equation}
It can be viewed as the generating series of all the variables $u_{p,\alpha}$, $p\in\mb Z,\alpha\in I$.
Here and further we omit the tensor product sign
between elements of $\mc V_\infty\yhat$ (or any its polynomial or Laurent series extension)
and elements of the algebra $A$.

%%%%%%%%%%%%%%%%%%%%%%%%%%%%%%%%
%%%
\subsection{Continuous PVA $\lambda$-brackets on $\mc V_\infty\!\!\widehat{}\,\,$}

Recall from \cite{BDSK09} the definition of $\lambda$-bracket
and PVA $\lambda$-bracket on a differential algebra $\mc V$.
Here we introduce its continuous analogue on $\mc V_\infty\yhat$.
\begin{definition}\label{20180418:def}
A \emph{continuous} $\lambda$-\emph{bracket} on $\mc V_\infty\yhat$
is a bilinear over $\mb F$ map 
\begin{equation}\label{eq:pva1}
\{\cdot\,_\lambda\,\cdot\}
\,:\,\,
\mc V_\infty\yhat\times\mc V_\infty\yhat
\,\longrightarrow\,
\mc V_\infty[\lambda]\xhat
\,,
\end{equation}
satisfying the following axioms:
\begin{enumerate}[(i)]
\item
continuity: for every $N\in\mb Z$, there exists $M\in\mb Z$ (sufficiently large) 
such that
($f,g\in\mc V_\infty\yhat$):
\begin{equation}\label{eq:pvalim}
\pi_M(f)=0
\text{ or }
\pi_M(g)=0
\,\Rightarrow\,
\pi_N\{f_\lambda g\}=0
\,;
\end{equation}
\item
sesquilinearity ($f,g\in\mc V_\infty\yhat$):
\begin{equation}\label{eq:pva4}
\{\partial f_\lambda g\}
=
-\lambda \{f_\lambda g\}
\,\,,\,\,\,\,
\{f_\lambda \partial g\}
=
(\lambda+\partial) \{f_\lambda g\}
\,;
\end{equation}
\item
Leibniz rules ($f,g,h\in\mc V_\infty\yhat$):
\begin{equation}\label{eq:pva2}
\begin{array}{l}
\displaystyle{
\vphantom{\big(}
\{f_\lambda gh\}
=
\{f_\lambda g\}h+\{f_\lambda h\}g
\,;} \\
\displaystyle{
\vphantom{\big(}
\{fg_\lambda h\}
=
\{f_{\lambda+x} h\}\big|_{x=\partial}g+\{g_\lambda h\}\big|_{x=\partial}f
\,.}
\end{array}
\end{equation}
\end{enumerate}
In the second Leibniz rule we use the notation introduced in \eqref{eq:notation}.

A \emph{continuous Poisson vertex algebra} (PVA) $\lambda$-\emph{bracket} on $\mc V_\infty\yhat$
is a continuous $\lambda$-bracket
satisfying the following two extra axioms:
\begin{enumerate}[(i)]
\setcounter{enumi}{3}
\item
skewsymmetry ($f,g\in\mc V_\infty\yhat$):
\begin{equation}\label{eq:pva5}
\{f_\lambda g\}
=
-\big|_{x=\partial}\{g_{-\lambda-x}f\}
\,\in\mc V_\infty[\lambda]\xhat
\,;
\end{equation}
\item
Jacobi identity ($f,g,h\in\mc V_\infty\yhat$):
\begin{equation}\label{eq:pva3}
\{f_\lambda\{g_\mu h\}\}
-
\{g_\mu\{f_\lambda h\}\}
=
\{\{f_\lambda g\}_{\lambda+\mu}h\}
\,\in\mc V_\infty[\lambda,\mu]\xhat
\,.
\end{equation}
\end{enumerate}
\end{definition}
The Jacobi identity requires some explanation.
By definition of $\mc V_\infty[\lambda]\xhat$, we have
$$
\{g_\mu h\}=\sum_{t=0}^\infty Q_t(\mu)
\,,
$$
where $Q_t(\mu)\in\mc V_\infty[\mu]$ 
and, for every $M\in\mb Z$, we have $\pi_M(Q_t(\mu))=0$
for every $t> T_M$.
Moreover, for every given $t\in\mb Z_{\geq0}$, we have
$$
\{f_\lambda Q_t(\mu)\}=\sum_{s=0}^\infty P_{s,t}(\lambda,\mu)
\,,
$$
where $P_{s,t}(\lambda,\mu)\in\mc V_\infty[\lambda,\mu]$
and, for every $N\in\mb Z$, we have $\pi_N(P_{s,t}(\mu))=0$
for all $s>S_{t,N}$.
Then,
$$
\{f_\lambda \{g_\mu h\}\}=\sum_{s,t=0}^\infty P_{s,t}(\lambda,\mu)
\,,
$$
and we need to explain why this infinite sum lies in $\mc V_\infty[\lambda,\mu]\xhat$.
Fix $N\in\mb Z$,
and, by the continuous assumption (i),
let $M\in\mb Z$ be such that \eqref{eq:pvalim} holds.
Note that, for $t>T_M$,
we have by assumption that $\pi_M(Q_t(\mu))=0$,
and therefore by the continuity axiom \eqref{eq:pvalim} $\pi_N(\{f_\lambda Q_t(\mu)\})=0$.
Therefore, 
$$
\pi_N\big(\{f_\lambda \{g_\mu h\}\}\big)
=
\sum_{t=0}^{T_M}\pi_N(\{f_\lambda Q_t(\mu)\})
=
\sum_{t=0}^{T_M}\sum_{s=0}^{S_{t,N}}\pi_N(P_{s,t}(\lambda,\mu))
\,,
$$
which is a finite sum, thus lying in $\mc V_N[\lambda,\mu]$, as needed.
Similarly, all three terms in the Jacobi identity \eqref{eq:pva3}
lie in $\mc V_\infty[\lambda,\mu]\xhat$,
so the Jacobi identity makes sense.

\medskip
%%%
%\subsection{Continuous PVA $\lambda$-bracket on generators}

We want to show that, as for the usual Poisson vertex algebra \cite{BDSK09},
also a continuous PVA $\lambda$-bracket is uniquely defined by its values
\begin{equation}\label{20180418:eq6}
\{{u_{p,\alpha}}_\lambda{u_{q,\beta}}\}
\,\in\mc V_\infty[\lambda]\xhat
\,\,,\,\,\,\, p,q\in\mb Z,\,\alpha,\beta\in I
\,,
\end{equation}
on the set of generators \eqref{20180417:eq2}.
Recall that, by the definition of the projection maps $\pi_N$ in \eqref{20180417:eq4},
$\pi_N(u_{p,\alpha})=0$ for all $p<-N-1$.
Hence, the continuity condition \eqref{eq:pvalim}
implies, on the $\lambda$-brackets \eqref{20180418:eq6},
that, for every $N\in\mb Z$, there exists $M\in\mb Z$ such that
\begin{equation}\label{20180418:eq7}
\pi_N\big(
\{{u_{p,\alpha}}_\lambda{u_{q,\beta}}\}
\big)
=
0
\,\,\text{ if either }\,\,
p<-M-1
\,\,\text{ or }\,\,
q<-M-1
\,.
\end{equation}
We claim that if the continuity conditions on generators \eqref{20180418:eq7} hold,
then there is a unique way to extend the $\lambda$-bracket on generators \eqref{20180418:eq6} 
to a continuous $\lambda$-bracket on $\mc V_\infty\yhat$,
by the following Master Formula \cite{BDSK09}:
if $f=\sum_{s=0}^\infty f_s,\,g=\sum_{t=0}^\infty g_t\,\in\mc V_\infty\yhat$
are as in \eqref{20180417:eq6}-\eqref{20180417:eq7}, then
\begin{equation}\label{20180418:eq8}
\{f_\lambda g\}
=
\sum_{s,t=0}^\infty
\sum_{p,q\in\mb Z}
\sum_{\alpha,\beta\in I}
\sum_{m,n\in\mb Z_{\geq0}}
\frac{\partial g_t}{\partial u_{q,\beta}^{(n)}}
(\lambda+\partial)^n
\{{u_{p,\alpha}}_{\lambda+\partial}{u_{q,\beta}}\}_\to
(-\lambda-\partial)^m
\frac{\partial f_s}{\partial u_{p,\alpha}^{(m)}}
\,.
\end{equation}
Indeed, let us check that the RHS of \eqref{20180418:eq8} lies in $\mc V_\infty[\lambda]\xhat$.
In other words, we fix $N$ and we need to show that, after applying $\pi_N$,
the RHS of \eqref{20180418:eq8} becomes a finite sum.
Let $M\in\mb Z$ be such that \eqref{20180418:eq7} holds
and let $\tilde{N}=\max\{M,N\}$.
By the definition of $\mc V_\infty\yhat$,
there exists $S$ such that
\begin{equation}\label{20180418:eq9}
\pi_{\tilde{N}}(f_s)=0
\,,\,\,
\pi_{\tilde{N}}(g_t)=0
\,\,\text{ for all }\,\,
s,t\geq S.
\end{equation}
Moreover, if either $p<-\tilde{N}-1\leq-M-1$ or $q<-\tilde{N}-1$,
we have, by \eqref{20180418:eq7},
$$
\pi_N\big(
\{{u_{p,\alpha}}_\lambda{u_{q,\beta}}\}
\big)
=
0
\,.
$$
On the other hand, if $p\geq-\tilde{N}-1$ and $s>S$,
we have, by \eqref{20180418:eq12},
$$
\pi_{\tilde{N}}\big(
\frac{\partial f_s}{\partial u_{p,\alpha}}
\big)
=
\frac{\partial}{\partial u_{p,\alpha}}
\pi_{\tilde{N}}(f_s)
=0
\,\,\text{ which implies }\,\,
\pi_N\big(
\frac{\partial f_s}{\partial u_{p,\alpha}}
\big)
=
0
\,.
$$
And analogously, 
if $q\geq-\tilde{N}-1$ and $t>S$,
we have
$\pi_N\big(\frac{\partial g_t}{\partial u_{q,\beta}}\big)=0$.
Hence, applying $\pi_N$ to the RHS of \eqref{20180418:eq8},
we are left with the finite sum in $\mc V_N[\lambda]$
$$
\!\sum_{s,t=0}^S\!
\!\sum_{p,q=-\tilde{N}-1}^\infty\!
\sum_{\alpha,\beta\in I}\!
\sum_{m,n\in\mb Z_{\geq0}}\!
\!\!\!
\pi_N\!\big(\frac{\partial g_t}{\partial u_{q,\beta}^{(n)}}\big)
(\lambda\!+\!\partial)^n
\pi_N\!\big(\{{u_{p,\alpha}}_{\lambda+\partial}{u_{q,\beta}}\}\big)_\to
(\!-\!\lambda\!-\!\partial)^m
\pi_N\!\big(\frac{\partial f_s}{\partial u_{p,\alpha}^{(m)}}\big)
\,.
$$
(It is a finite sum since, for every $s$ and $t$, $f_s,g_t\in\mc V_\infty$
are polynomials in the variables $u_{p,\alpha}^{(n)}$.)

The proof that the Master Formula \eqref{20180418:eq8}
satisfies the sesquilinearity axioms \eqref{eq:pva4}
and the Leibniz rules \eqref{eq:pva2} is as in the usual PVA case \cite{BDSK09}.
Likewise one can show, as in the usual case, that the continuous $\lambda$-bracket given
by \eqref{20180418:eq8} is a continuous PVA $\lambda$-bracket,
i.e. the skewsymmetry axiom \eqref{eq:pva5} and the Jacobi identity \eqref{eq:pva3} hold,
if and only if they hold on generators:
skewsymmetry:
\begin{equation}\label{eq:pva5b}
\{{u_{p,\alpha}}_\lambda {u_{q,\beta}}\}
=
-\big|_{x=\partial}\{{u_{q,\beta}}_{-\lambda-x}{u_{p,\alpha}}\}
\,\in\mc V_\infty[\lambda]\xhat
\,,
\end{equation}
and Jacobi identity:
\begin{equation}\label{eq:pva3b}
\{{u_{p,\alpha}}_\lambda\{{u_{q,\beta}}_\mu {u_{r,\gamma}}\}\}
-
\{{u_{q,\beta}}_\mu\{{u_{p,\alpha}}_\lambda {u_{r,\gamma}}\}\}
=
\{\{{u_{p,\alpha}}_\lambda {u_{q,\beta}}\}_{\lambda+\mu}{u_{r,\gamma}}\}
\,\in\mc V_\infty[\lambda,\mu]\xhat
\,.
\end{equation}
Summarizing the above observations, we have the following Theorem, whose details of the proof are
left to the reader:
\begin{theorem}\label{20180418:thm1}
Every choice for the $\lambda$-brackets 
$\{{u_{p,\alpha}}_\lambda{u_{q,\beta}}\}\in\mc V_\infty[\lambda]\xhat$
among the generators $u_{p,\alpha}$, $p\in\mb Z$, $\alpha\in I$,
of $\mc V_\infty\yhat$,
satisfying the continuity conditions \eqref{20180418:eq7},
extends uniquely to a continuous $\lambda$-bracket on $\mc V_\infty\yhat$,
and it is given by the Master Formula \eqref{20180418:eq8}.
Moreover, the continuous $\lambda$-bracket \eqref{20180418:eq8}
is a continuous PVA $\lambda$-bracket on $\mc V_\infty\yhat$
if and only if skewsymmetry and Jacobi identity hold on generators,
i.e. \eqref{eq:pva5b} and \eqref{eq:pva3b} hold.
\end{theorem}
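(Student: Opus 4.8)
The plan is to follow the standard Poisson-vertex-algebra argument of \cite{BDSK09}, adapted to the inverse limit $\mc V_\infty\yhat$; all the verifications are formally identical to the classical ones, the only new ingredient being the bookkeeping needed to pass from each $\mc V_N$ to $\mc V_\infty\yhat$.

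\emph{Well-definedness, existence and uniqueness of the extension.} That the right-hand side of the Master Formula \eqref{20180418:eq8} lies in $\mc V_\infty[\lambda]\xhat$ has already been checked in the discussion preceding the theorem, using the continuity conditions \eqref{20180418:eq7} together with \eqref{20180418:eq12}. For existence, I would verify that \eqref{20180418:eq8} satisfies sesquilinearity \eqref{eq:pva4} and the Leibniz rules \eqref{eq:pva2}: this is done summand by summand exactly as in \cite{BDSK09}, and the identities survive the inverse limit because they are preserved by every $\pi_N$. For uniqueness, let $\{\cdot\,_\lambda\,\cdot\}$ be any continuous $\lambda$-bracket with the prescribed values on generators. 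On $\mc V_\infty$ (finitely many variables) an induction on the number of factors, using the two Leibniz rules and sesquilinearity, forces $\{f_\lambda g\}$ to equal the corresponding finite sum in \eqref{20180418:eq8} --- the classical computation. For general $f=\sum_s f_s$, $g=\sum_t g_t$ as in \eqref{20180417:eq6}--\eqref{20180417:eq7}, bilinearity gives $\{f_\lambda g\}=\sum_{s,t}\{{f_s}_\lambda{g_t}\}$; the continuity axiom \eqref{eq:pvalim} ensures that, after applying any $\pi_N$, only finitely many summands survive, so $\{f_\lambda g\}$ is determined and equals \eqref{20180418:eq8}.

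\emph{Reduction of skewsymmetry and Jacobi to generators.} The ``only if'' directions are trivial. For ``if'', the key point is the classical observation that, once the $\lambda$-bracket is given by the Master Formula, the skewsymmetry defect $\{f_\lambda g\}+\big|_{x=\partial}\{g_{-\lambda-x}f\}$ and the Jacobi defect $\{f_\lambda\{g_\mu h\}\}-\{g_\mu\{f_\lambda h\}\}-\{\{f_\lambda g\}_{\lambda+\mu}h\}$ are again expressions of ``multi-derivation'' type in each of their arguments (a purely formal manipulation with \eqref{20180418:eq8}, carried out in \cite{BDSK09}). Consequently, by the same kind of argument as for uniqueness --- first on $\mc V_\infty$ by induction on the number of factors, then on $\mc V_\infty\yhat$ termwise using continuity --- each defect vanishes on all of $\mc V_\infty\yhat$ as soon as it vanishes on the generators $u_{p,\alpha}$, which is precisely \eqref{eq:pva5b}, resp. \eqref{eq:pva3b}. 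For the Jacobi identity one must first invoke, as in the paragraph following Definition \ref{20180418:def}, the continuity axiom to make sense of the nested brackets $\{f_\lambda\{g_\mu h\}\}$ as elements of $\mc V_\infty[\lambda,\mu]\xhat$; with this in place the argument proceeds verbatim.

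\emph{Expected main obstacle.} The only genuine subtlety, beyond routine calculation, is the legitimacy of interchanging the (possibly infinite) summations $f=\sum_s f_s$, $g=\sum_t g_t$, $h=\sum_r h_r$ with the infinite summations inside the Master Formula and inside the nested $\lambda$-brackets of the Jacobi identity. The continuity axiom \eqref{eq:pvalim}, the stabilization property \eqref{20180418:eq7} on generators, and the compatibility \eqref{20180418:eq12} of partial derivatives with the $\pi_N$ are exactly what is needed to make all such sums finite after applying a fixed $\pi_N$, hence to lift the classical identities, valid in each $\mc V_N$, to $\mc V_\infty\yhat$. I would therefore isolate a single lemma of the form ``apply $\pi_N$, reduce to a finite sum in $\mc V_N$, invoke the classical PVA identity'' and apply it uniformly to each of sesquilinearity, the Leibniz rules, uniqueness, skewsymmetry and the Jacobi identity.
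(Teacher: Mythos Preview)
Your proposal is correct and follows essentially the same approach as the paper: the authors explicitly state that the details are ``left to the reader,'' summarizing that well-definedness was already checked in the preceding discussion, that sesquilinearity and the Leibniz rules follow ``as in the usual PVA case \cite{BDSK09},'' and that ``likewise'' skewsymmetry and Jacobi reduce to generators as in the classical case. Your additional remarks on uniqueness and on the role of continuity in justifying the infinite-sum interchanges are exactly the bookkeeping the paper implicitly invokes but does not spell out.
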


\medskip
%%%
\subsection{Continuous PVA $\lambda$-bracket on $\mc V_\infty\!^{\widehat{}}$
in terms of the generating series $L(z)$}

The generating series $L(z)\in\mc V_\infty((z^{-1}))\xhat\otimes A$
defined by \eqref{20180417:eq9}
encodes all the generators $\{u_{p,\alpha}\}_{p\in\mb Z,\alpha\in I}$ of $\mc V_\infty\yhat$.
Hence, all the $\lambda$-brackets $\{{u_{p,\alpha}}_\lambda{u_{q,\beta}}\}$
among the generators can be encoded in
\begin{equation}\label{20180418:eq10}
\{L_1(z)_\lambda L_2(w)\}
=
\sum_{p,q\in\mb Z}
\sum_{\alpha,\beta\in I}
\{{u_{p,\alpha}}_\lambda{u_{q,\beta}}\} z^{-p-1}w^{-q-1} E^\alpha\otimes E^\beta
\,,
\end{equation}
where
\begin{equation}\label{eq:yang-notation}
L_1(z)=L(z)\otimes\id
\,\,\text{ and }\,\,
L_2(z)=\id\otimes L(z)
\,\,\text{ in }\,\,
\mc V_\infty((z^{-1}))\xhat\otimes A^{\otimes2}
\,.
\end{equation}
(As usual, we omit the tensor product sign for the factors in $\mc V_\infty((z^{-1}))$.)
The continuity conditions \eqref{20180418:eq7}
translate into saying that,
for every $N\in\mb Z$, there exists $M\in\mb Z$ such that
\begin{equation}\label{20180418:eq11b}
\pi_N\big(\{L_1(z)_\lambda L_2(w)\})
\in
(\mc V_N[\lambda])[[z^{-1},w^{-1}]]z^Mw^M\otimes A^{\otimes2}
\,.
\end{equation}
In other words,
\begin{equation}\label{20180418:eq11}
\{L_1(z)_\lambda L_2(w)\}
\,\in
(\mc V_\infty[\lambda])((z^{-1},w^{-1}))\xhat\otimes A^{\otimes2}
\,,
\end{equation}
Moreover, all the skewsymmetry conditions \eqref{eq:pva5b} are encoded in
the single identity
\begin{equation}\label{eq:pva5c}
\{L_1(z)_\lambda L_2(w)\}
=
-\big|_{x=\partial}
\{L_2(w)_{-\lambda-x}L_1(z)\}
\,,
\end{equation}
in $(\mc V_\infty[\lambda])((z^{-1},w^{-1}))\xhat\otimes A^{\otimes2}$,
while the Jacobi identities \eqref{eq:pva3b} are encoded in
\begin{equation}\label{eq:pva3c}
\{L_1(z)_\lambda\{L_2(w)_\mu L_3(v)\}\}
-
\{L_2(w)_\mu\{L_1(z)_\lambda L_3(v)\}\}
=
\{\{L_1(z)_\lambda L_2(w)\}_{\lambda+\mu} L_3(v)\}
\,,
\end{equation}
in $(\mc V_\infty[\lambda,\mu])((z^{-1},w^{-1},v^{-1}))\xhat\otimes A^{\otimes3}$,
where $L_1(z), L_2(z), L_3(z)\in\mc V_\infty((z^{-1}))\xhat\otimes A^{\otimes3}$
are defined as in \eqref{eq:yang-notation}.
We can thus translate Theorem \ref{20180418:thm1}
in terms of generating series as follows:
\begin{theorem}\label{20180418:thm2}
A continuous PVA $\lambda$-bracket on $\mc V_\infty\yhat$
is uniquely determined by an element
$$
\{L_1(z)_\lambda L_2(w)\}
\,\in
(\mc V_\infty[\lambda])((z^{-1},w^{-1}))\xhat\otimes A^{\otimes2}
\,,
$$
satisfying the skewsymmetry condition \eqref{eq:pva5c}
and the Jacobi identity \eqref{eq:pva3c}.
\end{theorem}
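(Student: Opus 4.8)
The plan is to deduce Theorem \ref{20180418:thm2} from Theorem \ref{20180418:thm1} by setting up a precise dictionary between families of $\lambda$-brackets on generators and single elements built from the Lax operator $L(z)$. First I would observe that the assignment \eqref{20180418:eq10} is a bijection between families $\big(\{{u_{p,\alpha}}_\lambda{u_{q,\beta}}\}\big)_{p,q\in\mb Z,\,\alpha,\beta\in I}$ of elements of $\mc V_\infty[\lambda]\xhat$ and elements of $(\mc V_\infty[\lambda])[[z^{-1},w^{-1}]][z,w]\otimes A^{\otimes2}$: since $\{E^\alpha\}_{\alpha\in I}$ is a basis of $A$ (by non-degeneracy of the trace form) so is $\{E^\alpha\otimes E^\beta\}$ of $A^{\otimes2}$, and each $\{{u_{p,\alpha}}_\lambda{u_{q,\beta}}\}$ is recovered as the coefficient of $z^{-p-1}w^{-q-1}E^\alpha\otimes E^\beta$. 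Under this bijection one checks that the continuity conditions \eqref{20180418:eq7} on generators translate exactly into membership of $\{L_1(z)_\lambda L_2(w)\}$ in the inverse-limit subspace $(\mc V_\infty[\lambda])((z^{-1},w^{-1}))\xhat\otimes A^{\otimes2}$, i.e.\ into \eqref{20180418:eq11b}: the condition that $\pi_N$ of the series be supported in $z^M w^M\mc V_N[\lambda][[z^{-1},w^{-1}]]$ means precisely that its $z^{-p-1}w^{-q-1}$-coefficient vanishes whenever $-p-1>M$ or $-q-1>M$, that is whenever $p<-M-1$ or $q<-M-1$.

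With this dictionary in place, Theorem \ref{20180418:thm1} says that every such choice of $\{L_1(z)_\lambda L_2(w)\}$ extends uniquely to a continuous $\lambda$-bracket on $\mc V_\infty\yhat$ via the Master Formula \eqref{20180418:eq8}, and that the resulting $\lambda$-bracket is a continuous PVA $\lambda$-bracket iff skewsymmetry \eqref{eq:pva5b} and the Jacobi identity \eqref{eq:pva3b} hold on generators. So it remains only to see that the infinite family \eqref{eq:pva5b} is equivalent to the single identity \eqref{eq:pva5c}, and \eqref{eq:pva3b} to \eqref{eq:pva3c}. Each of these is again a coefficient-extraction argument: to go from generators to generating series one multiplies \eqref{eq:pva5b} by $z^{-p-1}w^{-q-1}E^\alpha\otimes E^\beta$ and sums over $p,q,\alpha,\beta$, using that the operation $g\mapsto-\big|_{x=\partial}\{h_{-\lambda-x}g\}$ is $\mb F[\lambda]$-linear and continuous, hence commutes with the (completed) sums defining $L_1(z)$ and $L_2(w)$; similarly for Jacobi, multiplying by $z^{-p-1}w^{-q-1}v^{-r-1}E^\alpha\otimes E^\beta\otimes E^\gamma$ and using that each of the three operations appearing in \eqref{eq:pva3c} is continuous. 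Conversely, extracting the $z^{-p-1}w^{-q-1}(v^{-r-1})$-coefficients from \eqref{eq:pva5c} and \eqref{eq:pva3c} recovers \eqref{eq:pva5b} and \eqref{eq:pva3b}.

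The only genuinely delicate point — and the step I would treat most carefully — is checking that all the infinite sums entering these manipulations actually lie in the advertised completed spaces, so that the coefficient extraction is legitimate. In particular one must verify that each of the three terms in \eqref{eq:pva3c}, e.g.\ $\{L_1(z)_\lambda\{L_2(w)_\mu L_3(v)\}\}$, lies in $(\mc V_\infty[\lambda,\mu])((z^{-1},w^{-1},v^{-1}))\xhat\otimes A^{\otimes3}$; this is done exactly as in the discussion following Definition \ref{20180418:def}, by iterating the continuity bound \eqref{eq:pvalim} twice so that, after applying a fixed $\pi_N$, the nested bracket reduces to finitely many summands. Granting these completeness verifications, the theorem is simply the generating-series repackaging of Theorem \ref{20180418:thm1}, and the remaining details are routine bookkeeping.
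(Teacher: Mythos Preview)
Your proposal is correct and follows essentially the same approach as the paper: the paper presents Theorem \ref{20180418:thm2} as the generating-series translation of Theorem \ref{20180418:thm1}, with the surrounding discussion (equations \eqref{20180418:eq10}--\eqref{eq:pva3c}) spelling out exactly the dictionary you describe between brackets on generators and the single series $\{L_1(z)_\lambda L_2(w)\}$, including the equivalence of continuity \eqref{20180418:eq7} with \eqref{20180418:eq11}, of skewsymmetry \eqref{eq:pva5b} with \eqref{eq:pva5c}, and of Jacobi \eqref{eq:pva3b} with \eqref{eq:pva3c}. Your write-up is in fact more explicit than the paper's, which simply states the translation without a formal proof.
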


%%%
\subsection{Continuous local Poisson bracket on 
$\mc V_\infty\!\!\widehat{}\,\,/\partial\mc V_\infty\!\!\widehat{}\,\,$}\label{sec:locPoisson}

As in the usual PVA case, a continuous PVA $\lambda$-bracket on $\mc V_\infty\yhat$
induces a Lie algebra bracket on 
the space of local functionals 
$$
\{\cdot\,,\,\cdot\}:\,
\big(\mc V_\infty\yhat/\partial\mc V_\infty\yhat\big)
\times
\big(\mc V_\infty\yhat/\partial\mc V_\infty\yhat\big)
\to
\big(\mc V_\infty\yhat/\partial\mc V_\infty\yhat\big)
\,,
$$
given by 
$$
\{\tint f,\tint g\}=\tint \{f_\lambda g\}\big|_{\lambda=0}
\,.
$$
This bracket is clearly a well-defined element of $\mc V_\infty\yhat$,
and, as in the usual PVA case, 
the Lie algebra axioms are an immediate consequence
of the skewsymmetry and Jacobi identity axioms \eqref{eq:pva5}-\eqref{eq:pva3}
of a continuous PVA $\lambda$-bracket.

By Theorem \ref{20180418:thm1},
any continuous $\lambda$-bracket on $\mc V_\infty\yhat$
is given by the Master Formula \eqref{20180418:eq8}.
It follows that the induced local Poisson bracket on the space of local functionals is,
for $f,g\in\mc V_\infty\yhat$,
\begin{equation}\label{eq:motiv22}
\{\tint f,\tint g\}
=
\sum_{p,q\in\mb Z,\alpha,\beta\in I}
\int
\frac{\delta g}{\delta u_{q,\beta}}
\{{u_{p,\alpha}}{\,_{}}_\partial\, {u_{q,\beta}}\}_{\to}
\frac{\delta f}{\delta u_{p,\alpha}}
\,\in\mc V_\infty\yhat/\partial\mc V_\infty\yhat
\,,
\end{equation}
where the arrow means that $\partial$ is moved to the right,
and the variational derivatives are defined in \eqref{eq:varder}.

%%%%%%%%%%%%%%%%%%%%%%%%%%%%%%%%
\section{Affinization of the O-R construction and the three Adler-Oevel-Ragnisco (AOR) identites}
\label{sec:4}

We want to find the ``infinite-dimensional analogue'' of the Oevel-Ragnisco (O-R)
Poisson structures \eqref{20180408:eq1}.

%%% FINITE CONSTRUCTION %%%
%%%
\subsection{O-R construction in finite dimension: summary}

Let us first summarize the finite-dimensional construction presented in Section \ref{sec:3}.
%1
The starting point is a finite dimensional associative algebra $\mf g$ over $\mb F$,
with a non-degenerate trace form $\Tr(\cdot):\,\mf g\to\mb F$
(recall the definition at the beginning of Section \ref{sec:3}),
and an $R$-matrix $R\in\End(\mf g)$.
If we fix dual bases $\{u_i\}_{i=1}^N$, $\{u^i\}_{i=1}^N$,
and identify $\mf g^*\simeq\mf g$ via the bilinear form \eqref{20180412:eq3},
an arbitrary element $L\in\mf g^*$ is
\begin{equation}\label{eq:motiv1}
L=\sum_{i=1}^Nx_iu^i
\,,\,\, x_i\in\mb F
\,,
\end{equation}
and the pairing $\langle\cdot\,|\,\cdot\rangle$ is, in coordinates,
\begin{equation}\label{eq:motiv2}
\langle L | a \rangle 
=
\Tr(L \circ a)
=
\sum_{i=1}^Nx_i\alpha_i
\,\,,\,\,\,\,
\text{ where }
a=\sum_{j=1}^N \alpha_ju_j\,\in\mf g
\,.
\end{equation}
%2
The O-R construction
provides a Poisson algebra structure on the algebra of polynomial functions on $\mf g^*$,
which can be identified with $S(\mf g)$, the symmetric algebra over $\mf g$:
\begin{equation}\label{eq:motiv0}
\Big\{\text{polyn. functions on } \mf g^*\Big\}
\,\simeq\,
S(\mf g)
\,.
\end{equation}
This identification is clear:
an element $f\in S(\mf g)$, which can be expanded as
\begin{equation}\label{eq:motiv3}
f = \sum \text{coeff. } u_1^{k_1}\dots u_N^{k_N}\,\in S(\mf g)
\,,
\end{equation}
corresponds to the polynomial function on $\mf g^*$ given, in coordinates, by
the same polynomial of $x_1,\dots,x_N$:
\begin{equation}\label{eq:motiv4}
f(L) = \sum \text{coeff. } x_1^{k_1}\dots x_N^{k_N}\,\in\mb F
\,,
\end{equation}
if $L\in\mf g^*$ is as in \eqref{eq:motiv1}.
%3
The differential of the function $f$ at a point $L\in\mf g^*$ 
is defined as the element $d_Lf\in\mf g$ such that
\begin{equation}\label{eq:motiv5}
f(L+\varepsilon Y)
=
f(L)+\varepsilon \Tr( Y \circ d_Lf )+O(\varepsilon^2)
\,.
\end{equation}
We can use Taylor's formula
and the equation \eqref{eq:motiv2}, 
to get the following explicit formula for the differential of $f\in S(\mf g)$:
\begin{equation}\label{eq:motiv6}
d_Lf
=
\sum_{i=1}^N
\frac{\partial f}{\partial u_i}(L) u_i
\,\in\mf g
\,.
\end{equation}

%4
At this point, we consider the O-R Poisson bracket \eqref{eq:ragn3}:
$$
\begin{array}{r}
\displaystyle{
\vphantom{\Big(}
\{f,g\}^{R,\epsilon}(L)
=
\frac12\Tr( L \circ [d_Lf,R((L+\epsilon\id)\circ d_Lg\circ (L+\epsilon\id))] )
} \\
\displaystyle{
\vphantom{\Big(}
-
\frac12\Tr( L \circ [d_Lg,R((L+\epsilon\id)\circ d_Lf\circ (L+\epsilon\id))] )
\,.}
\end{array}
$$
We expand $L$ via \eqref{eq:motiv1}, 
$\id=\sum_{i=1}^N\Tr(u_i)u^i$,
and the differentials $d_Lf$ and $d_Lg$ via \eqref{eq:motiv6}.
As a result, we get
\begin{align*}
\{f,g\}^{R,\epsilon}(L)
&=
\frac12
\sum_{i,j,h,k,\ell=1}^N
\frac{\partial f}{\partial u_i}(L)
\frac{\partial g}{\partial u_j}(L)
(x_h+\epsilon\Tr(u_h))(x_k+\epsilon\Tr(u_k))x_\ell\times \\
& \times\Tr \big(
u^\ell \circ\big(
[u_i,R(u^h\circ u_j\circ u^k)]
-
[u_j,R(u^h\circ u_i\circ u^k)] 
\big)
\big)
\,.
\end{align*}
Identifying the polynomial functions on $\mf g^*$ and the elements of $S(\mf g)$
as in \eqref{eq:motiv3}-\eqref{eq:motiv4},
this corresponds to the Poisson bracket on $S(\mf g)$ given by
\begin{align*}
\{f,g\}^{R,\epsilon}
& =
\frac12
\sum_{i,j,h,k=1}^N
\frac{\partial f}{\partial u_i}
\frac{\partial g}{\partial u_j}
(u^h+\epsilon\Tr(u^h))
(u^k+\epsilon\Tr(u^k))\times \\
& \times\big(
[u_i,R(u_h\circ u_j\circ u_k)]
-
[u_j,R(u_h\circ u_i\circ u_k)]
\big)
\,,
\end{align*}
or, equivalently, to the Lie algebra bracket \eqref{20180408:eq1} on $\mf g$.

%%%
\subsection{Setup for the construction in the affine case}

We now proceed to describe the ``affine analogue'' of the O-R construction,
which we shall call the Adler-Oevel-Ragnisco (AOR) construction.
We let $\mc F$ be an algebra over $\mb F$ of ``test functions''.
By this we mean an algebra of functions $f(x)$ in one (space) variable $x\in M$,
which we can integrate: $\tint_M f(x) dx\,\in\mb F$,
and which we can differentiate: $f'(x)=\frac{\partial f(x)}{\partial x}\in\mc F$.
The only assumptions on the linear map $\int_M \, dx:\,\mc F\to\mb F$, 
are the following. 
First, we require the validity of the fundamental theorem of calculus, 
which has the form
\begin{equation}\label{eq:motiv8b}
\int_M f'(x) dx
=
0
\,\,
\text{ for every }
f(x)\in\mc F
\,,
\end{equation}
(we are assuming that the manifold $M$ has no boundaries).
In particular
we have the rule of integration by parts:
\begin{equation}\label{eq:motiv8}
\int_M f(x)g'(x) dx 
=
-\int_M f'(x)g(x) dx
\,,\,\,
\text{ for every }
f(x),g(x)\in\mc F
\,.
\end{equation}
Moreover, we require the non-degeneracy condition:
\begin{equation}\label{eq:motiv8c}
\int_M f(x)g(x) dx 
=
0
\,\,\text{ for all }\,\,
g(x)\in\mc F
\,\,\text{ implies }\,\,
f(x)=0
\,.
\end{equation}
A typical example of such an algebra of test functions
is the algebra of smooth functions on the circle $S^1$.

\medskip
%%%
%\subsubsection*{The Lie algebra $\mf g$ and the space of local functionals on $\mf g^*$}

The starting point of the AOR construction 
is the following associative algebra:
\begin{equation}\label{eq:motiv9}
\mf g
=
\mc F((\partial^{-1}))\otimes A
\,,
\end{equation}
where $A$ is, as in Section \ref{sec:4.1},
a finite dimensional associative algebra with a trace form $\Tr$, 
and $\mc F((\partial^{-1}))$ is the associative algebra
of pseudodifferential operators over $\mc F$.
The associative product $\circ$ on it is defined by the rule
\begin{equation}\label{eq:motiv10}
\partial^p\circ f(x)
=
\sum_{n\in\mb Z_{\geq0}}
\binom{p}{n}
f^{(n)}(x)\partial^{p-n}
\,,\,\,
p\in\mb Z,\,f(x)\in\mc F
\,,
\end{equation}
where $f^{(n)}(x)=\frac{\partial^n f(x)}{\partial x^n}$.
For simplicity, when writing an element of $\mf g$ we drop the tensor product sign:
for $P(x;\partial)\in\mc F((\partial^{-1}))$ and $X\in A$, 
we let $P(x;\partial)X$ be the corresponding element of $\mf g$.
The associative product of $\mf g$, defined componentwise, will be denoted by $\circ$:
\begin{equation}\label{eq:motiv12}
(P(x;\partial)X)\circ (Q(x;\partial)Y)
:=
(P(x;\partial)\circ Q(x;\partial)) (XY)
\,,
\end{equation}
for
$P(x;\partial),Q(x;\partial)\in\mc F((\partial^{-1}))$, $X,Y\in A$.
We define the following trace form $\langle\,\cdot\,\rangle:\,\mf g\to\mb F$
on $\mf g$ (recall the definition at the beginning of Section \ref{sec:3}):
\begin{equation}\label{eq:motiv11}
\langle P(x;\partial) X\rangle
:=
\int_M \res_\partial P(x;\partial) dx\, \Tr(X)
\,,
\end{equation}
where, as usual, the residue $\res_\partial P(x;\partial)$ 
of a pseudodifferential operator $P(x;\partial) = \sum_{p<\infty} f_p(x)\partial^{-p-1}$ 
denotes the coefficient $f_0(x)$ of $\partial^{-1}$.
It is easy to check, using integration by parts \eqref{eq:motiv8},
that the linear function $\langle\,\cdot\,\rangle$ defined by \eqref{eq:motiv11}
vanishes on commutators, 
and it is non-degenerate by \eqref{eq:motiv8c},
hence it is indeed a trace form on $\mf g$.

Using the pairing associated to this trace form (cf. \eqref{20180412:eq3}),
we can identify the associative algebra $\mf g$
with the (restricted) dual $\mf g^*$.
Namely, the generic point $L\in\mf g^*$ is written, in coordinates, as
\begin{equation}\label{eq:motiv13}
L
=
\sum_{p\in\mb Z,\,\alpha\in I}
x_{p,\alpha}(x)\partial^{-p-1}\, E^\alpha
\,\in\mf g^*
\,,
\end{equation}
where the coordinate functions $x_{p,\alpha}(x)\in\mc F$ are $0$ for $p<<0$.
We can also write the pairing associated to \eqref{eq:motiv11} in coordinates,
thus obtaining the ``affine analogue'' of formula \eqref{eq:motiv2}:
\begin{equation}\label{eq:motiv14}
\langle L \circ a \rangle 
=
\sum_{p\in\mb Z,\,\alpha\in I}
\int_M x_{p,\alpha}(x) y_{p\alpha}(x) dx
\,,
\end{equation}
where
\begin{equation}\label{eq:motiv15}
a=\sum_{q\in\mb Z,\beta\in I}\partial^q\circ y_{q,\beta}(x) \, E_\beta
\,\in\mf g
\,.
\end{equation}
(Here the ``dual'' coordinate functions $y_{q,\beta}(x)\in\mc F$ are $0$ for $q>>0$.)

%2
Next, we want to describe the correct space of functions on $\mf g^*$,
on which we will define the AOR Poisson bracket.
We consider the space of \emph{local polynomial functionals} on $\mf g^*$.
These are functions $F:\mf g^*\to\mb F$ which, 
for $L\in\mf g^*$ as in \eqref{eq:motiv13},
have the form
\begin{equation}\label{eq:motiv16}
F(L)
=
\int_M
f(L(x))\, dx
=
\int_M
f\big(\big\{x_{p,\alpha}^{(n)}(x)\big\}_{p\in\mb Z,\alpha\in I,n\in\mb Z_{\geq0}}\big)\, dx
\,,
\end{equation}
where the \emph{density function} $f$ 
is a differential polynomial in the variables $x_{p,\alpha}$, $p\in\mb Z,\,\alpha\in I$
(and $L=L(x)$ is as in \eqref{eq:motiv13}).
In fact, we only need $f$ to be a polynomial when it is computed at a given point $L$,
i.e. when $x_{p,\alpha}=0$ for $p<<0$.

Here the inverse limit algebra $\mc V_\infty\yhat$, defined by \eqref{20180417:eq5},
comes into play.
Indeed, 
any $f\in\mc V_\infty\yhat$,
when evaluated at the point $L(x)$ as in \eqref{eq:motiv13},
i.e. at $u_{p,\alpha}^{(n)}=\frac{\partial^n x_{p,\alpha}(x)}{\partial x^n}\in\mc F$ ($=0$ for $p<<0$),
becomes a finite sum, i.e. a well-defined element of $\mc F$.
Hence 
$F(L)=\int_M f(L(x)) dx\in\mb F$ is well defined.

Clearly, when we evaluate $f\in\mc V_\infty\yhat$ at the point $L(x)\in\mf g^*$,
i.e. at $u_{p,\alpha}^{(n)}=\frac{\partial^n x_{p,\alpha}(x)}{\partial x^n}$,
the derivation $\partial$ corresponds to the derivative in $x$:
\begin{equation}\label{eq:motiv18}
(\partial f)(L(x))
=
\frac{\partial f(L(x))}{\partial x}
\,\in\mc F
\,.
\end{equation}
In particular, by \eqref{eq:motiv8b} we have 
$\tint_M(\partial f)(L(x)) dx = 0$.
Hence, 
we can identify 
\begin{equation}\label{eq:motiv0b}
\Big\{\text{local polyn. functionals on } \mf g^*\Big\}
\,\simeq\,
\mc V_\infty\yhat/\partial\mc V_\infty\yhat
\,,
\end{equation}
by associating the element $\tint f\in\mc V_\infty\yhat/\partial\mc V_\infty\yhat$,
where $f\in\mc V_\infty\yhat$,
with the local polynomial functional $F:\,\mf g^*\to\mb F$
given by \eqref{eq:motiv16}.
This is the ``affine analogue'' of the identification \eqref{eq:motiv0}.

%3
Next, we need to find a formula for the differential $d_LF$
of a local functional $F=\tint f\in\mc V_\infty\yhat/\partial\mc V_\infty\yhat$
at a point $L\in\mf g^*$.
Recalling \eqref{eq:motiv5},
we let $d_LF\in\mf g$ be defined by
\begin{equation}\label{eq:motiv18b}
F(L+\varepsilon Y)
=
F(L)+\varepsilon \langle Y\circ d_LF \rangle
+O(\varepsilon^2)
\,.
\end{equation}
Let $L(x)\in\mf g^*$ be as in \eqref{eq:motiv13}
and let $Y(x)=\sum_{p\in\mb Z,\alpha\in I}y_{p,\alpha}(x)\partial^{-p-1}E^\alpha$,
with $y_{p,\alpha}(x)\in\mc F$ vanishing for $p<<0$.
By Taylor's formula and integration by parts, we have
$$
\begin{array}{l}
\displaystyle{
\vphantom{\Big(}
F(L+\varepsilon Y)
=
\int_M
f\big(L(x)+\varepsilon Y(x))dx
} \\
\displaystyle{
\vphantom{\Big(}
\simeq
\int_M
\Big(
f\big(L(x))
+\varepsilon
\sum_{p\in\mb Z,\alpha\in I,m\in\mb Z_{\geq0}}
\frac{\partial f}{\partial u_{p,\alpha}^{(m)}}(L(x))y_{p,\alpha}^{(m)}(x)
\Big)
\,dx
} \\
\displaystyle{
\vphantom{\Big(}
\simeq
F(L)
+\epsilon
\sum_{p\in\mb Z,\alpha\in I}
\int_M
y_{p,\alpha}(x)
\sum_{m\in\mb Z_{\geq0}}
(-\frac{\partial}{\partial x})^m
\frac{\partial f}{\partial u_{p,\alpha}^{(m)}}
(L(x))
dx
\,.}
\end{array}
$$
Recalling \eqref{eq:motiv14}-\eqref{eq:motiv15},
we are thus lead to define
\begin{equation}\label{eq:motiv19}
d_L F
=
\sum_{p\in\mb Z,\alpha\in I}
\partial^p\circ
\frac{\delta F}{\delta u_{p,\alpha}}
(L(x))
\,E_\alpha
\,\in\mf g
\,,
\end{equation}
where the variational derivatives of $F=\tint f\in\mc V_\infty\yhat/\partial\mc V_\infty\yhat$
are defined by \eqref{eq:varder}.

%%%
\subsection{$R$-matrices over $\mf g$}\label{sec:R}

In order to introduce the affine analogue of the OR bracket,
we need to fix an $R$-matrix 
over $\mf g=\mc F((\partial^{-1}))\otimes A$
viewed as a Lie algebra,
i.e. a linear map 
\begin{equation}\label{eq:black8}
R\,:\,\,
\mc F((\partial^{-1}))\otimes A\,\longrightarrow\,\mc F((\partial^{-1}))\otimes A
\,,
\end{equation}
satisfying the modified Yang-Baxter equation \eqref{eq:mod-YB}.

We construct $R$-matrices on the Lie algebra $\mf g=\mc F((\partial^{-1}))\otimes A$
as special cases of Example \ref{ex:R}.
Note that $\mc F[\partial]\partial^k\subset\mc F((\partial^{-1}))$
is an associative (hence Lie) subalgebra of $\mc F((\partial^{-1}))$ for every $k\geq0$,
while
$\mc F[[\partial^{-1}]]\partial^k\subset\mc F((\partial^{-1}))$
is an associative (hence Lie) subalgebra of $\mc F((\partial^{-1}))$ for every $k\leq0$,
and it is a Lie subalgebra of $\mc F((\partial^{-1}))$ for $k\leq1$.
For arbitrary $k\in\mb Z$, we have the direct sum decomposition 
as left $\mc F$-modules
\begin{equation}\label{20180416:eq1}
\mc F((\partial^{-1}))
=
\mc F[\partial]\partial^k\oplus\mc F[[\partial^{-1}]]\partial^{k-1}
\,,
\end{equation}
and we denote by 
$\Pi_{\geq k}:\,\mc F((\partial^{-1}))\twoheadrightarrow\mc F[\partial]\partial^k$
and 
$\Pi_{<k}:\,\mc F((\partial^{-1}))\twoheadrightarrow\mc F[[\partial^{-1}]]\partial^{k-1}$
the corresponding projection maps.
Hence, according to Example \ref{ex:R}
we have the following $R$-matrices 
over the Lie algebra $\mf g=\mc F((\partial^{-1}))\otimes A$:
\begin{enumerate}[(i)]
\item
$R^{(0)}=(\Pi_{\geq0}-\Pi_{<0})\otimes\id\,\big(=\frac12(R^{(0)}-(R^{(0)})^*)\big)$;
\begin{equation}\label{eq:affineR}
\vspace{-20pt}
\end{equation}
\item
$R^{(1)}=(\Pi_{\geq1}-\Pi_{<1})\otimes\id$;
\item
$R^{(2)}=\Pi_{\geq2}-\Pi_{<2}$, for $A=\mb F$.
\end{enumerate}
Note that only the first of these three examples is such that $\frac12(R-R^*)$
is an $R$-matrix.
These examples of $R$-matrices have been considered in \cite{KO93}.
\begin{remark}\label{rem:affineR}
For $k=1,2$,
we could also replace the subalgebras 
$\mc F[\partial]\partial^k$ and $\mc F[[\partial^{-1}]]\partial^{k-1}$,
in the decomposition \eqref{20180416:eq1}
with their adjoints $\partial^k\circ \mc F[\partial]$ and $\partial^{k-1}\circ \mc F[[\partial^{-1}]]$,
to get two new $R$-matrices.
\end{remark}

Throughout the remainder of Section \ref{sec:4}
we shall assume that $R$
is one of the $R$-matrices $R^{(0)}$, $R^{(1)}$, $R^{(2)}$ listed above,
and we will focus most of our attention to the case $R=R^{(0)}$.

Note that in all examples \eqref{eq:affineR}(i)-(iii) $R$
acts as the identity on the $A$-factor of $\mf g=\mc F((\partial^{-1}))\otimes A$, 
and it is left $\mc F$-linear
($f\in\mc F$, $P(\partial)\in\mc F((\partial^{-1}))$, $X\in A$):
\begin{equation}\label{eq:black2+}
R(fP(\partial) X)
=
fR(P(\partial)) X
\,.
\end{equation}
As a consequence, the dual $R^*$ 
(with respect to the pairing $\langle\cdot\,\circ\,\cdot\rangle$)
is right $\mc F$-linear:
\begin{equation}\label{eq:black3}
R^*(P(\partial)\circ f X)
=
R^*(P(\partial))\circ f\,X
\,.
\end{equation}
Moreover we have
\begin{equation}\label{eq:black4}
R(\partial^n\id)=r_n\partial^n\id\,\in\mb F\partial^n \otimes A
\,,
\quad\text{for }n\in\mb Z
\,.
\end{equation}
(In fact, $r_n=\pm1$ for all $n$, in all examples \eqref{eq:affineR}(i)-(iii).)

%
%\pecetta{Questa si può evitare di assumere per ottenere \eqref{20180414:eq3}.}
%

Clearly, the constants $r_n$, $n\in\mb Z$, uniquely determine $R$,
and $R$ can be uniquely extended, by left $\mc V_\infty\yhat$-linearity,
to a map
$R\,:\,\,
\mc V_\infty((\partial^{-1}))\xhat\otimes A\,\longrightarrow\,\mc V_\infty((\partial^{-1}))\xhat\otimes A$,
or, in terms of symbols,
to a map
\begin{equation}\label{eq:black7}
R_\xi\,:\,\,
\mc V_\infty((\xi^{-1}))\xhat\otimes A\,\longrightarrow\,\mc V_\infty((\xi^{-1}))\xhat\otimes A
\,.
\end{equation}

Let us introduce the formal $\delta$-function
\begin{equation}\label{eq:delta}
\delta(z-w)=\sum_{n\in\mb Z}z^nw^{-n-1}
\,.
\end{equation}
Recall that it is defined by the following properties
\begin{equation}\label{eq:delta1}
a(z)\delta(z-w)
=
a(w)\delta(z-w)
\,,
\end{equation}
and 
\begin{equation}\label{eq:delta2}
\Res_za(z)\delta(z-w)
=
a(w)
\,.
\end{equation}
In the sequel we consider the following generating series for the pseudodifferential operators in \eqref{eq:black4} (and their symbols):
\begin{equation}\label{Rdelta}
R_\xi(\delta(z-\xi))=\sum_{n\in\mb Z} r_n
z^{-n-1} \id \in\mb F((\xi^{-1}))[[z,z^{-1}]]\otimes A
\,,
\end{equation}
and similarly for $R^*_\xi(\delta(z-\xi))$. Using the property \eqref{eq:delta1} of the $\delta$-function, the properties \eqref{eq:black2+}-\eqref{eq:black3} of $R$ and $R^*$,
and denoting by $\lambda$ the action of $\partial$ on $\Theta$,
we have, for $P(\partial),Q(\partial)\in\mc V_\infty((\partial^{-1}))\xhat$:
\begin{equation}\label{20180416:eq2}
\begin{array}{l}
\displaystyle{
\vphantom{\Big(}
R\big(P(\partial)\delta(z-\partial)\circ\Theta Q(\partial)\big)
=
R\big(P(z)\big(\big|_{\lambda=\partial}\Theta\big)
\big(\big|_{\zeta=\partial}Q^*(\lambda-z)\big)\delta(z-\lambda-\zeta-\partial)\big)
} \\
\displaystyle{
\vphantom{\Big(}
=
\big(\big|_{\lambda=\partial}\Theta\big)
P(z)
R_\xi\big(
\delta(z-\lambda-\zeta-\xi)\big)
\big(\big|_{\zeta=\partial}Q^*(\lambda-z)\big)
\big|_{\xi=\partial}
\,,}
\end{array}
\end{equation}
and
\begin{equation}\label{20180416:eq3}
\begin{array}{l}
\displaystyle{
\vphantom{\Big(}
R^*\big(P(\partial)\delta(z-\partial)\circ\Theta Q(\partial)\big)
=
R^*\big(
\delta(z+\zeta-\partial)\circ
\big(\big|_{\zeta=\partial}P(z)\big)
\big(\big|_{\lambda=\partial}\Theta\big)
Q^*(\lambda-z)
\big)
} \\
\displaystyle{
\vphantom{\Big(}
=
\big(\big|_{\lambda=\partial}\Theta\big)
\big(\big|_{\zeta=\partial}P(z)\big)
R^*_\xi\big(
\delta(z+\zeta-\xi)\big)
\big|_{\xi=\zeta+\lambda+\partial}
\circ
Q^*(\lambda-z)
\,.}
\end{array}
\end{equation}
Here and further, when negative powers of a sum of variables appear,
if we do not specify how to expand,
it means that there is a unique way to make sense of it
requiring that when $\partial$ acts on functions
it can appear only in non-negative powers;
the same for the symbol $\lambda$ in equations \eqref{20180414:eq4} and below,
or $\mu$ in Section \ref{sec:4.3} and below.
For example, in the RHS of \eqref{20180416:eq2}
when negative powers of $z-\lambda-\zeta$ appear,
they must be expanded in the region $|z|>|\lambda+\zeta|$,
since $\lambda$ acts as $\partial$ applied to $\Theta$ and $\zeta$ acts as $\partial$ applied
to the coefficients of $Q^*(\lambda-z)$.
Finally, in terms of generating series, we have the following relation between $R$ and $R^*$.
\begin{lemma}\label{20181105:lem1a}
Let $R$ be one of the $R$-matrices $R^{(0)}$, $R^{(1)}$ or $R^{(2)}$ from \eqref{eq:affineR}.
Then
\begin{equation}\label{20181105:eq1}
R_w(\delta(z-w))=R^*_z(\delta(z-w))
\,.
\end{equation}
\end{lemma}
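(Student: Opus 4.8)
The plan is to reduce the identity to the (known) diagonal action of $R$ on the monomials $\partial^n\id$ together with the corresponding diagonal action of its adjoint. By \eqref{eq:black4} we have $R(\partial^n\id)=r_n\partial^n\id$ with $r_n\in\{\pm1\}$; hence, unwinding the definition of $R_z(\delta(z-w))$ as the symbol of $R(\delta(\partial-w))$ and of $R^*_z(\delta(z-w))$ as the symbol of $R^*(\delta(\partial-w))$, the two sides of \eqref{20181105:eq1} become
$$
R_w(\delta(z-w))=\sum_{n\in\mb Z}r_n\,w^n z^{-n-1}\,\id,
\qquad
R^*_z(\delta(z-w))=\sum_{n\in\mb Z}r^*_n\,z^n w^{-n-1}\,\id ,
$$
where $R^*(\partial^n\id)=r^*_n\partial^n\id$. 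After the substitution $n\mapsto-n-1$ in the second series (legitimate for the formal $\delta$-function, using $\delta(z-w)=\delta(w-z)$), the identity \eqref{20181105:eq1} becomes equivalent to the single numerical relation $r^*_n=r_{-n-1}$ for all $n\in\mb Z$.

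So the heart of the proof is to compute $R^*$ on the monomials $\partial^n\id$. First I would use that $R$ is left $\mc F$-linear, \eqref{eq:black2+}, and diagonal, \eqref{eq:black4}, so that $R(PX)=\big(\sum_j r_j p_j\partial^j\big)X$ for $P=\sum_j p_j\partial^j\in\mc F((\partial^{-1}))$ and $X\in A$. Then, by the definition \eqref{eq:motiv11} of the trace form on $\mf g$ and the fact that $\res_\partial(P\circ\partial^n)$ is the coefficient of $\partial^{-1}$ in $\sum_j p_j\partial^{j+n}$, namely $p_{-n-1}$, I get for all $P$ and all $X,Y\in A$
$$
\langle R(PX)\circ(\partial^nY)\rangle
=r_{-n-1}\int_M p_{-n-1}\,dx\;\Tr(XY)
=\big\langle (PX)\circ(r_{-n-1}\partial^nY)\big\rangle .
$$
Comparing with the defining property $\langle R(a)\circ b\rangle=\langle a\circ R^*(b)\rangle$ of the adjoint, we conclude that $\langle (PX)\circ\big(R^*(\partial^nY)-r_{-n-1}\partial^nY\big)\rangle=0$ for every $P$ and $X$; since the elements $PX$ span $\mf g$ and the bilinear form $(a,b)=\langle a\circ b\rangle$ is non-degenerate and symmetric, this forces $R^*(\partial^nY)=r_{-n-1}\partial^nY$, and in particular $R^*(\partial^n\id)=r_{-n-1}\partial^n\id$, i.e. $r^*_n=r_{-n-1}$. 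Feeding this into the two series above and reindexing, as explained in the first paragraph, yields \eqref{20181105:eq1}. As a consistency check, for $R=R^{(0)}$ one has $r_{-n-1}=-r_n$, matching the skewadjointness $(R^{(0)})^*=-R^{(0)}$ recorded after \eqref{eq:affineR}.

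I expect the only point requiring care to be this computation of $R^*$ on $\partial^n\id$: one must test against arbitrary elements $PX$ of $\mf g$, not merely against $P\id$, before invoking non-degeneracy, because $\Tr$ need not be nonzero on individual basis vectors of $A$; and one must keep in mind that, in contrast to $R$, the adjoint $R^*$ is only right $\mc F$-linear, \eqref{eq:black3}, so its diagonality on the constant-coefficient monomials $\partial^n\id$ is a genuine (if short) computation rather than something inherited formally from that of $R$. Everything else---the passage to generating series and the reindexing of the formal $\delta$-function---is routine, and the argument is uniform in $k$, so it covers all three $R$-matrices $R^{(0)},R^{(1)},R^{(2)}$ of \eqref{eq:affineR} at once.
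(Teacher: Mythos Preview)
Your proof is correct. Both your argument and the paper's ultimately rest on the same mechanism---computing a trace pairing involving $R$ in two ways and invoking non-degeneracy---but you take a more elementary, coefficient-based route. The paper works at the level of generating series: it evaluates $\langle R(P(\partial)\delta(z-\partial))X\circ Q(\partial)\delta(w-\partial)Y\rangle$ using the prepared identities \eqref{20180416:eq2} and \eqref{20180416:eq3}, obtaining expressions in $R_\xi(\delta(z-\xi))$ and $R^*_\xi(\delta(w+\zeta-\xi))$, and then appeals to non-degeneracy of the bilinear form together with a lemma from \cite{BDSK09}. You instead isolate the numerical content of the identity as $r^*_n=r_{-n-1}$ and verify this directly by pairing $R(PX)$ against the single monomial $\partial^nY$. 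Your approach is shorter and makes the underlying relation between the eigenvalues of $R$ and $R^*$ completely transparent; the paper's approach has the advantage of fitting seamlessly into the generating-series formalism used throughout Section~\ref{sec:4}, and in particular exercises the formulas \eqref{20180416:eq2}--\eqref{20180416:eq3} that are needed anyway for the $\epsilon$-Adler identity.
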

\begin{proof}
Of course, one can prove the claim by writing explicitly $R_w(\delta(z-w))$ separately in the three
cases $R^{(0)}$, $R^{(1)}$ and $R^{(2)}$.
We provide here a unified arguments which only uses the properties
\eqref{eq:black2+}-\eqref{eq:black4} of these $R$-matrices.
Let $P(\partial)X,Q(\partial)Y\in\mc V_\infty((\partial^{-1}))\xhat\otimes A$. Using the definition \eqref{eq:motiv11} 
of the trace form $\langle\,\cdot\,\rangle$ we have
\begin{equation}
\begin{split}\label{20181105:eq2}
&\langle R(P(\partial)\delta(z-\partial))X\circ Q(\partial)\delta(w-\partial)Y\rangle
\\
&=\int \res_\partial R(P(\partial)\delta(z-\partial))\circ Q(\partial)\delta(w-\partial)\Tr(XY)
\\
&
=\int\res_\partial P(z)R_\xi(\delta(z-\xi))\big|_{\xi=\partial}\circ Q(\partial)\delta(w-\partial)\Tr(XY)
\\
&=\int P(z)R_{\xi}(\delta(z-\xi))\big|_{\xi=w+\partial}Q(w)\Tr(XY)
\,.
\end{split}
\end{equation}
In the second identity we used equation \eqref{20180416:eq2} and in the last identity we used equation \eqref{eq:delta2}.
Moreover, using again the definition \eqref{eq:motiv11} 
of the trace form $\langle\,\cdot\,\rangle$, we also have 
\begin{equation}
\begin{split}\label{20181105:eq3}
&\langle P(\partial)\delta(z-\partial)X\circ R^*(Q(\partial)\delta(w-\partial))Y\rangle
\\
&=\int \res_\partial P(\partial)\delta(z-\partial)\circ R^*(Q(\partial)\delta(w-\partial))\Tr(XY)
\\
&
=\int\res_\partial P(\partial)\delta(z-\partial)\circ (\big|_{\zeta=\partial}Q(w))R_\xi^*(\delta(w+\zeta-\xi))\big|_{\xi=\zeta+\partial}\Tr(XY)
\\
&=\int P(z)R^*_z(\delta(z-w-\xi))\big|_{\xi=\partial}Q(w)
\Tr(XY)\,.
\end{split}
\end{equation}
In the second identity we used equation \eqref{20180416:eq3} and in the last identity we used equation \eqref{eq:delta2}
and the facts that $\delta(w+\xi-z)=\delta(z-w-\xi)$ and $\partial\left(R^*_z(\delta(z-w-\xi))\right)=0$.
From equations \eqref{20181105:eq2} and \eqref{20181105:eq3}, the definition of adjoint operator 
and the non-degeneracy of the trace we get that
$$ 
\int P(z)\left( R_{\xi}(\delta(z-\xi))\big|_{\xi=w+\partial} - R^*_z(\delta(z-w-\xi))\big|_{\xi=\partial}\right) Q(w)=0\,,
$$
for every $P(\partial),Q(\partial)\in\mc V_\infty((\partial^{-1}))\xhat$. Hence, equation
\eqref{20181105:eq1} follows by \cite[Lemma 1.36]{BDSK09}.

\end{proof}

%%%
\subsection{AOR Poisson brackets}

Note that, if $L$ is as in \eqref{eq:motiv13}, then
\begin{equation}\label{eq:black9}
L+\epsilon\id
=
\sum_{p\in\mb Z,\alpha\in I}
(x_{p,\alpha}(x)+\epsilon\,\eta_{p,\alpha})\partial^{-p-1}E^\alpha
\,,
\end{equation}
where $\eta_{p,\alpha}``=\frac1{\text{Vol}(M)}\langle\partial^pE_\alpha\rangle$ '' is defined by the property
\begin{equation}\label{eq:black10}
\sum_{p\in\mb Z,\alpha\in I}
\eta_{p,\alpha}\partial^{-p-1}E^\alpha=1
\,,
\end{equation}
the unit element of $\mf g$ (given by the tensor product of the function $1\in\mc F((\partial^{-1}))$
and the unit element $\id\in A$).
We next compute the O-R $\epsilon$-Poisson brackets $\{\int f,\int g\}^{R,\epsilon}(L(x))$,
corresponding to \eqref{eq:ragn3},
for one of the $R$-matrices \eqref{eq:affineR}(i)-(iii).
In order to compute the RHS of \eqref{eq:ragn3}
we use the definition \eqref{eq:motiv11} 
of the trace form $\langle\,\cdot\,\rangle$ on $\mc F((\partial^{-1}))\otimes A$,
%and we expand $L(x)+\epsilon$ as in \eqref{eq:black10},
and $d_L\tint f$ and $d_L\tint g$ as in \eqref{eq:motiv19}.
As a result, we get
\begin{equation}\label{eq:black11}
\begin{split}
\{\tint f,\tint g\}^{R,\epsilon}(L(x))
& =
\frac12
\Big\langle L \circ [d_L(\tint f),R((L+\epsilon\id)\circ d_L(\tint g)\circ (L+\epsilon\id))] \Big\rangle
\\
& -
\frac12
\Big\langle L \circ [d_L(\tint g),R((L+\epsilon\id)\circ d_L(\tint f)\circ (L+\epsilon\id))] \Big\rangle
\\
& =
\frac12
\Big\langle d_L(\tint g)\circ (L+\epsilon\id) \circ R^*[L ,d_L(\tint f)] \circ (L+\epsilon\id) \Big\rangle
\\
& +
\frac12
\Big\langle d_L(\tint g) \circ [L , R((L+\epsilon\id)\circ d_L(\tint f)\circ (L+\epsilon\id))] \Big\rangle
\,,
\end{split}
\end{equation}
where we used the cyclic property of the trace form.
By expanding $d_L(\tint f)$ and $d_L(\tint g)$ as in \eqref{eq:motiv19},
the RHS of \eqref{eq:black11} becomes
\begin{align*}
\frac12
\sum_{\substack{p,q\in\mb Z \\ \alpha,\beta\in I}}
\Big\langle
& \partial^q\circ
\frac{\delta\tint g}{\delta u_{q,\beta}}(L(x))
E_\beta
\circ
\Big(
(L+\epsilon\id) \circ R^*
\Big[L ,
\partial^p\circ \frac{\delta\tint f}{\delta u_{p,\alpha}}(L(x)) E_\alpha
\Big] \circ (L+\epsilon\id) 
\\
& +
\Big[L , 
R\Big(
(L+\epsilon\id)\circ 
\partial^p\circ \frac{\delta\tint f}{\delta u_{p,\alpha}}(L(x)) E_\alpha
\circ (L+\epsilon\id)
\Big)
\Big] 
\Big)
\Big\rangle
\,.
\end{align*}
Recall the general formula \eqref{eq:motiv22} relating the Poisson bracket 
on $\mc V_\infty\yhat/\partial\mc V_\infty\yhat$
to the PVA $\lambda$-bracket on $\mc V_\infty\yhat$.
Recall also that, in the identification \eqref{eq:motiv0b} of local functionals on $\mf g^*$
with elements of $\mc V_\infty\yhat/\partial\mc V_\infty\yhat$,
we simply replace the coordinate functions $x_{p,\alpha}(x)\in\mc F$ 
with the corresponding differential variables $u_{p,\alpha}\in\mc V_\infty\yhat$.
We therefore expand $L+\epsilon\id$ as in \eqref{eq:black9}
and use the definition \eqref{eq:motiv11} of the trace form,
to deduce the following formula defining the $\lambda$-brackets on $\mc V_\infty\yhat$
corresponding to the O-R Poisson brackets \eqref{eq:ragn3}
(where $\Theta\in\mc V_\infty\yhat$):
\begin{equation}\label{eq:bblack}
\begin{array}{l}
\displaystyle{
\vphantom{\Big(}
\{{u_{p,\alpha}} _\partial {u_{q,\beta}}\}^{R,\epsilon}_\to \Theta
=
\frac12
\sum_{\substack{i,j,k\in\mb Z \\ \gamma,\delta,\zeta\in I}}
} \\
\displaystyle{
\vphantom{\Big(}
%1
\Bigg\{
\res_\partial\Big(
u_{j,\delta}\!(\epsilon)\partial^{-j-1}
\!\!
\circ
R^*\Big(
u_{k,\zeta}\partial^{-k-1+p}
\!\!\circ
\Theta
\Big)
\!\circ\!
u_{i,\gamma}\!(\epsilon)\partial^{-i-1+q}
\Big)
\Tr\big(
E_\beta
E^\delta
E^\zeta
E_\alpha
E^\gamma
\big)
} \\
\displaystyle{
\vphantom{\Big(}
%2
-
\res_\partial\!\Big(
u_{j,\delta}\!(\epsilon)\partial^{-j-1}
\!\!\circ
R^*\Big(
\partial^p
\!\circ
\Theta\,
u_{k,\zeta}\partial^{-k-1}\!
\Big)
\!\circ\!
u_{i,\gamma}\!(\epsilon)\partial^{-i-1+q}
\Big)
\!\Tr\!\big(
E_\beta
E^\delta
E_\alpha
E^\zeta
E^\gamma
\big)
} \\
\displaystyle{
\vphantom{\Big(}
%4
+
\res_\partial\!\Big(
u_{k,\zeta}\partial^{-k-1}\!\!
\circ
R\Big(
u_{i,\gamma}\!(\epsilon)\partial^{-i-1+p}\!\!
\circ
\Theta
\circ
u_{j,\delta}\!(\epsilon)\partial^{-j-1}
\Big)
\partial^q
\Big)
\Tr\big(
E_\beta
E^\zeta
E^\gamma
E_\alpha
E^\delta
\big)
} \\
\displaystyle{
\vphantom{\Big(}
%3
-
\res_\partial\!\Big(
R\Big(
u_{i,\gamma}\!(\epsilon)\partial^{-i-1+p}
\!\!\circ
\Theta
u_{j,\delta}\!(\epsilon)\partial^{-j-1}
\Big)
\!\circ\!
u_{k,\zeta}\partial^{-k-1+q}
\Big)
\Tr\big(
E_\beta
E^\gamma
E_\alpha
E^\delta
E^\zeta
\big)
\Bigg\}
}
\end{array}
\end{equation}
where we introduced the notation $u_{i,\gamma}\!(\epsilon)=u_{i,\gamma}+\epsilon\eta_{i,\gamma}$.
Here we used that $R$ acts as the identity on the second factor of $\mf g=\mc F((\partial^{-1}))\otimes A$,
as remarked before equation \eqref{eq:black2+}.

\begin{remark}\label{20180419:rem}
From equation \eqref{eq:bblack}
it is not clear how to check the continuity condition \eqref{20180418:eq7}
only using the properties \eqref{eq:black2+}-\eqref{eq:black4}.
In fact, 
we will be able to check continuity only using the explicit expressions
of $R_z(\delta(z-w)$ for the three $R$-matrices $R^{(0)}$, $R^{(1)}$ and $R^{(2)}$
that we are considering in the present section.
\end{remark}

%%%
\subsection{Generating series and $\epsilon$-Adler identities}

We encode all the variables $u_{p,\alpha}\in\mc V_\infty\yhat$ in a generating series
as follows:
\begin{equation}\label{20180414:eq2}
L(z)
=
\sum_{p\in\mb Z,\alpha\in I}
u_{p,\alpha}z^{-p-1} E^\alpha
\,\in\mc V_\infty((z^{-1}))\xhat\otimes A
\,.
\end{equation}
Then, all $\lambda$-brackets $\{{u_{p,\alpha}} _\lambda {u_{q,\beta}}\}$
are encoded in 
$$
\{L_1(z)_\lambda L_2(w)\}^{R,\epsilon}\,\in\,
(\mc V_\infty\yhat[[\lambda]])[[z,z^{-1},w,w^{-1}]]\otimes A^{\otimes2}
\,,
$$
where we use the notation \eqref{eq:yang-notation} for $L_1(z)$ and $L_2(w)$.
In fact, multiplying both sides of \eqref{eq:bblack}
by $z^{-p-1}w^{-q-1}E^\alpha\otimes E^\beta$
and summing over $p,q\in\mb Z$ and $\alpha,\beta\in I$, we get
\begin{equation}\label{20180414:eq3}
\begin{array}{l}
\displaystyle{
\vphantom{\Big(}
\{L_1(z)_\partial L_2(w)\}^{R,\epsilon}_\to\Theta
=
\sum_{p,q\in\mb Z,\alpha,\beta\in I}
\{{u_{p,\alpha}} _\partial {u_{q,\beta}}\}^{R,\epsilon}_\to\Theta\,
z^{-p-1}w^{-q-1}E^\alpha\otimes E^\beta
} \\
\displaystyle{
\vphantom{\Big(}
=
\frac12
\Omega
\res_\partial\bigg(
%1
(L_1(\partial)+\epsilon\id)
\circ
R^*\big(
L_1(\partial)
\delta(z-\partial)
\circ
\Theta
\big)
\circ
(L_2(\partial)+\epsilon\id)
\delta(w-\partial)
} \\
\displaystyle{
\vphantom{\Big(}
%2
-
(L_1(\partial)+\epsilon\id)
\circ
R^*\Big(
\delta(z-\partial)
\circ
\Theta\,
L_2(\partial)
\Big)
\circ
(L_2(\partial)+\epsilon\id)
\delta(w-\partial)
} \\
\displaystyle{
\vphantom{\Big(}
%4
+
L_1(\partial)
\circ
R\Big(
(L_1(\partial)+\epsilon\id)
\delta(z-\partial)
\circ
\Theta
\circ
(L_2(\partial)+\epsilon\id)
\Big)
\delta(w-\partial)
} \\
\displaystyle{
\vphantom{\Big(}
%3
-
R\Big(
(L_1(\partial)+\epsilon\id)
\circ
\delta(z-\partial)
\circ
\Theta
(L_2(\partial)+\epsilon\id)
\Big)
\circ
L_2(\partial)
\delta(w-\partial)
\bigg)
\,.}
\end{array}
\end{equation}
Here, $\id$ stands for $\id\otimes\id\in A\otimes A$,
and we denote
\begin{equation}\label{eq:omega}
\Omega=\sum_{\alpha\in I}E_\alpha\otimes E^\alpha
\,\in A\otimes A
\,.
\end{equation}
It satisfies the following basic property
\begin{equation}\label{eq:switch}
\Omega(X\otimes Y)=(Y\otimes X)\Omega
\,\,\text{ for all }\,\,
X,Y\in A
\,,
\end{equation}
which is easily checked by \eqref{eq:motiv7}.
Moreover, for $a=A(\partial)X,b=B(\partial)Y\in\mc V_\infty\yhat$, we are denoting $a_1\circ b_2=A(\partial)B(\partial)\otimes X\otimes Y$.
Using \eqref{20180416:eq2}, \eqref{20180416:eq3} and \eqref{20181105:eq1}
we can rewrite \eqref{20180414:eq3} as
\begin{equation}\label{20180414:eq4}
\begin{array}{l}
\displaystyle{
\vphantom{\Big(}
\{L_1(z)_\lambda L_2(w)\}^{R,\epsilon}
} \\
\displaystyle{
\vphantom{\Big(}
=
\frac12
\Omega
\bigg(
%1
(L_1(w\!+\!\lambda\!+\!\partial)\!+\!\epsilon\id)
\big(\big|_{\zeta=z+\partial}\!L_1(z)\big)
R_\zeta\big(
\delta(\zeta\!-\!\xi)\big)
(\big|_{\xi=\zeta\!-\!z+\!w\!+\!\lambda\!+\!\partial}
L_2(w)\!+\!\epsilon\id)
} \\
\displaystyle{
\vphantom{\Big(}
%2
-
(L_1(w+\lambda+\partial)+\epsilon\id)
R_z\big(
\delta(z-\xi)\big)
\big|_{\xi=w+\lambda+\partial}
L_2^*(\lambda-z)
(L_2(w)+\epsilon\id)
} \\
\displaystyle{
\vphantom{\Big(}
%3
+
L_1(w+\lambda+\partial)
(L_1(z)+\epsilon\id)
R_w\big(
\delta(\zeta-w)\big)
\big(\big|_{\zeta=z-\lambda-\partial}
L_2^*(\lambda-z)+\epsilon\id\big)
} \\
\displaystyle{
\vphantom{\Big(}
%4
-
(L_1(z)+\epsilon\id)
R_\xi\big(
\delta(\zeta-\xi)\big)
\big(\big|_{\zeta=z-\lambda-\partial}
L_2^*(\lambda-z)+\epsilon\id\big)
\big|_{\xi=w+\partial}
L_2(w)
\bigg)
\,.}
\end{array}
\end{equation}
This formula, that we call the $\epsilon$-\emph{Adler identity} associated to the $R$-matrix $R$,
encodes the whole PVA structure of $\mc V_\infty\yhat$
associated to the affine analogue of the O-R Poisson brackets.

If we expand as in \eqref{20180408:eq2}:
\begin{equation}\label{20180408:eq2b}
\{\cdot\,_\lambda\,\cdot\}^{R,\epsilon}
=
\{\cdot\,_\lambda\,\cdot\}^{R}_3
+2\epsilon \{\cdot\,_\lambda\,\cdot\}^{R}_2
+\epsilon^2\{\cdot\,_\lambda\,\cdot\}^{R}_1
\,,
\end{equation}
we get the $3$-\emph{Adler identity}
\begin{equation}\label{eq:3adler}
\begin{array}{l}
\displaystyle{
\vphantom{\Big(}
\{L_1(z)_\lambda L_2(w)\}^{R}_3
} \\
\displaystyle{
\vphantom{\Big(}
=
\frac12
\Omega
\bigg(
%1
L_1(w\!+\!\lambda\!+\!\partial)
\big(\big|_{\zeta=z+\partial}L_1(z)\big)
R_\zeta\big(
\delta(\zeta-\xi)\big)
\big|_{\xi=\zeta-z+w+\lambda+\partial}
L_2(w)
} \\
\displaystyle{
\vphantom{\Big(}
%2
-
L_1(w+\lambda+\partial)
R_z\big(
\delta(z-\xi)\big)
\big|_{\xi=w+\lambda+\partial}
L_2^*(\lambda-z)
L_2(w)
} \\
\displaystyle{
\vphantom{\Big(}
%4
+
L_1(w+\lambda+\partial)
L_1(z)
R_w\big(
\delta(\zeta-w)\big)
\big|_{\zeta=z-\lambda-\partial}
L_2^*(\lambda-z)
} \\
\displaystyle{
\vphantom{\Big(}
%3
-
L_1(z)
R_\xi\big(
\delta(\zeta-\xi)\big)
\big(\big|_{\zeta=z-\lambda-\partial}
L_2^*(\lambda-z)\big)
\big|_{\xi=w+\partial}
L_2(w)
\bigg)
\,.}
\end{array}
\end{equation}
the $2$-Adler identity:
\begin{equation}\label{eq:2adler}
\begin{array}{l}
\displaystyle{
\vphantom{\Big(}
\{L_1(z)_\lambda L_2(w)\}^{R}_2
} \\
\displaystyle{
\vphantom{\Big(}
=
\frac14
\Omega
\bigg(
%1
L_1(w\!+\!\lambda\!+\!\partial)
\big(\big|_{\zeta=z+\partial}L_1(z)\big)
R_\zeta\big(
\delta(\zeta-\xi)\big)
\big|_{\xi=\zeta-z+w+\lambda}
} \\
\displaystyle{
\vphantom{\Big(}
+
\big(\big|_{\zeta=z+\partial}L_1(z)\big)
R_\zeta\big(
\delta(\zeta-\xi)\big)
\big|_{\xi=\zeta-z+w+\lambda+\partial}
L_2(w)
} \\
\displaystyle{
\vphantom{\Big(}
%2
-
L_1(w+\lambda+\partial)
R_z\big(
\delta(z-\xi)\big)
\big|_{\xi=w+\lambda+\partial}
L_2^*(\lambda-z)
} \\
\displaystyle{
\vphantom{\Big(}
-
R_z\big(
\delta(z-\xi)\big)
\big|_{\xi=w+\lambda+\partial}
L_2^*(\lambda-z)
L_2(w)
} \\
\displaystyle{
\vphantom{\Big(}
%4
+
L_1(w+\lambda+\partial)
L_1(z)
R_w\big(
\delta(\zeta-w)\big)\big|_{\zeta=z-\lambda}
} \\
\displaystyle{
\vphantom{\Big(}
+
L_1(w+\lambda+\partial)
R_w\big(
\delta(\zeta-w)\big)
\big(\big|_{\zeta=z-\lambda-\partial}
L_2^*(\lambda-z)\big)
} \\
\displaystyle{
\vphantom{\Big(}
%3
-
L_1(z)
R_\xi\big(
\delta(\zeta-\xi)\big)\big|_{\zeta=z-\lambda}
\big|_{\xi=w+\partial}
L_2(w)
} \\
\displaystyle{
\vphantom{\Big(}
-
R_\xi\big(
\delta(\zeta-\xi)\big)
\big(\big|_{\zeta=z-\lambda-\partial}
L_2^*(\lambda-z)\big)
\big|_{\xi=w+\partial}
L_2(w)
\bigg)
\,,}
\end{array}
\end{equation}
and the $1$-Adler identity:
\begin{equation}\label{eq:1adler}
\begin{array}{l}
\displaystyle{
\vphantom{\Big(}
\{L_1(z)_\lambda L_2(w)\}^{R}_1
=
\frac12
\Omega
} \\
\displaystyle{
\vphantom{\Big(}
\times\!\bigg(\!
%1
\big(\big|_{\zeta=z+\partial}
L_1(z)
\big)
R_\zeta\big(
\delta(\zeta\!-\!\xi)\big)
\big|_{\xi=\zeta-z+w+\lambda}
%2
-
R_z\big(
\delta(z-\xi)\big)
\big|_{\xi=w+\lambda+\partial}
L_2^*(\lambda-z)
} \\
\displaystyle{
\vphantom{\Big(}
%4
+
L_1(w+\lambda+\partial)
R_w\big(
\delta(\zeta-w)
\big)\big|_{\zeta=z-\lambda}
%3
-
R_\xi\big(
\delta(\zeta-\xi)\big)
\big|_{\zeta=z-\lambda}
\big|_{\xi=w+\partial}
L_2(w)
\bigg)
\,.}
\end{array}
\end{equation}

%%%
\subsection{The Adler identities for the standard $R$-matrix $R=R^{(0)}$}

Next, we specialize the Adler identities \eqref{eq:3adler}-\eqref{eq:1adler}
for the $R$-matrix $R^{(0)}$ in \eqref{eq:affineR}(i).

Recall that the $\delta$-function \eqref{eq:delta} admits the decomposition
\begin{equation}\label{eq:delta3}
\delta(z-w)
=
\iota_z(z-w)^{-1}
-
\iota_w(z-w)^{-1}
\,,
\end{equation}
where $\iota_z$ denotes the geometric expansion in the domain $|z|>>0$,
i.e. $\iota_z(z-w)^{-1}=\sum_{n\geq0}z^{-n-1}w^n$,
while $\iota_w$ denotes the geometric expansion in the domain $|w|>>0$,
i.e. $\iota_w(z-w)^{-1}=-\sum_{n\geq0}z^{n}w^{-n-1}$.

For $R=R^{(0)}=\Pi_{\geq0}-\Pi_{<0}$, we have
\begin{equation}\label{20180416:eq4}
R^{(0)}=-(R^{(0)})^*
\,\text{ and }\,
R^{(0)}_w(\delta(z-w))=
\iota_z(z-w)^{-1}+\iota_w(z-w)^{-1}
\,.
\end{equation}
Hence, in this case the $\epsilon$-Adler identity \eqref{20180414:eq4} becomes
\begin{equation}\label{eq:3adler-eps}
\begin{array}{l}
\displaystyle{
\vphantom{\Big(}
\{L_1(z)_\lambda L_2(w)\}^{R^{(0)},\epsilon}
} \\
\displaystyle{
\vphantom{\Big(}
=
\Omega
\bigg(
%1
(L_1(w+\lambda+\partial)+\epsilon\id)
(z-w-\lambda-\partial)^{-1}
L_2^*(\lambda-z)(L_2(w)+\epsilon\id)
} \\
\displaystyle{
\vphantom{\Big(}
%2
-(L_1(w+\lambda+\partial)+\epsilon\id)
L_1(z)
(z-w-\lambda-\partial)^{-1}
(L_2(w)+\epsilon\id)
} \\
\displaystyle{
\vphantom{\Big(}
%3
+
L_1(w+\lambda+\partial)
(L_1(z)+\epsilon\id)
(z-w-\lambda-\partial)^{-1}
(L_2^*(\lambda-z)+\epsilon\id)
}
\\
\displaystyle{
\vphantom{\Big(}
%4
-
(L_1(z)+\epsilon\id)
(z-w-\lambda-\partial)^{-1}
(L_2^*(\lambda-z)+\epsilon\id)
L_2(w)
\bigg)
\,.}
\end{array}
\end{equation}
Here $(z-w-\lambda-\partial)^{-1}$
can be interpreted as either its $\iota_z$ expansion, or its $\iota_w$ expansion:
both choices give the same answer.
Indeed, if we replace everywhere $(z-w-\lambda-\partial)^{-1}$
by $\delta(z-w-\lambda-\partial)$,
the RHS of \eqref{eq:3adler-0} vanishes by \eqref{eq:delta1}.
More explicitly, the $3$-Adler identity \eqref{eq:3adler} becomes
\begin{equation}\label{eq:3adler-0}
\begin{array}{l}
\displaystyle{
\vphantom{\Big(}
\{L_1(z)_\lambda L_2(w)\}^{(0)}_3
} \\
\displaystyle{
\vphantom{\Big(}
=
\Omega
\bigg(
%1
L_1(w\!+\!\lambda\!+\!\partial)
L_1(z)
(z-w-\lambda-\partial)^{-1}
\big(
L_2^*(\lambda-z)-L_2(w)
\big)
} \\
\displaystyle{
\vphantom{\Big(}
%2
+
\big(
L_1(w+\lambda+\partial)-L_1(z)
\big)
(z-w-\lambda-\partial)^{-1}
L_2^*(\lambda-z)
L_2(w)
\bigg)
\,.}
\end{array}
\end{equation}
Similarly, the $2$-Adler identity \eqref{eq:2adler} becomes
\begin{equation}\label{eq:2adler-0}
\begin{array}{l}
\displaystyle{
\vphantom{\Big(}
\{L_1(z)_\lambda L_2(w)\}^{(0)}_2
=
\Omega
\bigg(
L_1(w\!+\!\lambda\!+\!\partial)
(z-w-\lambda-\partial)^{-1}
L_2^*(\lambda-z)
} \\
\displaystyle{
\vphantom{\Big(}
-
L_1(z)
(z-w-\lambda-\partial)^{-1}
L_2(w)
\bigg)
\,,}
\end{array}
\end{equation}
and the $1$-Adler identity \eqref{eq:1adler} becomes
\begin{equation}\label{eq:1adler-0}
\begin{array}{l}
\displaystyle{
\vphantom{\Big(}
\{L_1(z)_\lambda L_2(w)\}^{(0)}_1
=
\Omega
\bigg(
%1
\big(
L_1(w+\lambda)-L_1(z)
\big)
(z-w-\lambda)^{-1}
} \\
\displaystyle{
\vphantom{\Big(}
+
(z-w-\lambda-\partial)^{-1}
\big(
L_2^*(\lambda-z)-L_2(w)
\big)
\bigg)
\,.}
\end{array}
\end{equation}
Equation \eqref{eq:2adler-0} is the same as the Adler identity for $\mf{gl}_N$
which first appeared in \cite{DSKV16} and \cite{DSKV17}.

%%%
\subsection{The Adler identities corresponding to $R=R^{(1)}$}

Next, we specialize the Adler identities \eqref{eq:3adler}-\eqref{eq:1adler}
for the $R$-matrix $R^{(1)}$ in \eqref{eq:affineR}(ii).
We have
\begin{equation}\label{20180416:eq5}
R^{(1)}
=
\Pi_{\geq1}-\Pi_{<1}
=
R^{(0)}-2\Pi_0
\,,
\end{equation}
where $\Pi_0:\,\mb F((z^{-1}))\to\mb F$ denotes the projection to the coefficient of $z^0$,
and
$$%\begin{equation}\label{20180416:eq6}
(R^{(1)})^*
=
-R^{(0)}-2\Pi_{-1}
\,.
$$%\end{equation}
Applying \eqref{20180416:eq5},% and \eqref{20180416:eq6},
we can compute the $3$-Adler identity \eqref{eq:3adler} for $R=R^{(1)}$,
to get
\begin{equation}\label{eq:3adler-1}
\begin{array}{l}
\displaystyle{
\vphantom{\Big(}
\{L_1(z)_\lambda L_2(w)\}^{(1)}_3
=
\{L_1(z)_\lambda L_2(w)\}^{(0)}_3
} \\
\displaystyle{
\vphantom{\Big(}
+
\Omega
\bigg(
%1
-
L_1(w+\lambda+\partial)
\iota_w(w+\lambda+\partial)^{-1}
L_1(z)L_2(w)
} \\
\displaystyle{
\vphantom{\Big(}
+
L_1(w+\lambda+\partial)
\iota_w(w+\lambda+\partial)^{-1}
L_2^*(\lambda-z)
L_2(w)
} \\
\displaystyle{
\vphantom{\Big(}
-
L_1(w+\lambda+\partial)
L_1(z)
\iota_z(z-\lambda-\partial)^{-1}
L_2^*(\lambda-z)
} \\
\displaystyle{
\vphantom{\Big(}
+
L_1(z)
\big(
\iota_z(z-\lambda-\partial)^{-1}
L_2^*(\lambda-z)
\big)
L_2(w)
\bigg)
\,.}
\end{array}
\end{equation}
Since $\frac12(R^{(1)}-(R^{(1)})^*)$ does not satisfy the modified Yang-Baxter equation \eqref{eq:mod-YB},
the corresponding $2$-Adler identity will not define a PVA structure on $\mc V_\infty\yhat$,
while the $1$-st Adler identity will. It is
\begin{equation}\label{eq:1adler-1}
\begin{array}{l}
\displaystyle{
\vphantom{\Big(}
\{L_1(z)_\lambda L_2(w)\}^{(1)}_1
=
\{L_1(z)_\lambda L_2(w)\}^{(0)}_1
} \\
\displaystyle{
\vphantom{\Big(}
+
\Omega
\Big(
-
\iota_w(w+\lambda+\partial)^{-1}
\big(L_1(z)
-
L_2^*(\lambda-z)
\big)
} \\
\displaystyle{
\vphantom{\Big(}
+
\iota_z(z-\lambda)^{-1}
\big(
L_2(w)
-
L_1(w+\lambda)
\big)
\Big)
\,.}
\end{array}
\end{equation}

In a similar way one can compute the Adler identities corresponding to the $R$-matrix $R=R^{(2)}$
from \eqref{eq:affineR} (in the scalar case $A=\mb F$).
We leave this exercise to the interested reader.

%%%
\subsection{The $\epsilon$-Adler identities and the corresponding continuous 
PVA $\lambda$-brackets on $\mc V_\infty\!\!\widehat{}$\,\,}
\label{sec:4.3}

As before, throughout this section we let $R$ be one of the $R$-matrices $R^{(0)},\,R^{(1)},\,R^{(2)}$ 
defined in \eqref{eq:affineR}.
In fact, apart for the proof of the continuity of the $\lambda$-bracket in Proposition \ref{prop:cont}
(where we use the explicit expression for $R_z(\delta(z-w))$),
all other arguments only use properties \eqref{eq:black2+}-\eqref{eq:black4}.
\begin{proposition}\label{prop:cont}
The $\epsilon$-Adler identity \eqref{20180414:eq4} associated to $R$ defines a continuous
$\lambda$-bracket on $\mc V_\infty\yhat$, for every $\epsilon\in\mb F$.
\end{proposition}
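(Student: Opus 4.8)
The plan is to appeal to Theorem~\ref{20180418:thm2}: a continuous $\lambda$-bracket on $\mc V_\infty\yhat$ is precisely the datum of an element $\{L_1(z)_\lambda L_2(w)\}\in(\mc V_\infty[\lambda])((z^{-1},w^{-1}))\xhat\otimes A^{\otimes2}$, the sesquilinearity and Leibniz axioms, as well as continuity, being then automatic via the Master Formula~\eqref{20180418:eq8}. Hence the entire content of the proposition is that the right-hand side of~\eqref{20180414:eq4} lies in that space, i.e. that it is (i) polynomial in $\lambda$, and (ii) satisfies the continuity estimate~\eqref{20180418:eq11b}: for every $N\in\mb Z$ there is $M\in\mb Z$ with $\pi_N\{L_1(z)_\lambda L_2(w)\}^{R,\epsilon}\in(\mc V_N[\lambda])[[z^{-1},w^{-1}]]z^Mw^M\otimes A^{\otimes2}$. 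The argument is uniform in $\epsilon\in\mb F$, so I keep $\epsilon$ arbitrary (equivalently, by~\eqref{20180408:eq2b} it is enough to treat $\{\cdot_\lambda\cdot\}^R_1$, $\{\cdot_\lambda\cdot\}^R_2$, $\{\cdot_\lambda\cdot\}^R_3$ separately, a finite $\mb F$-linear combination of continuous $\lambda$-brackets being one).

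I would first do the case $R=R^{(0)}$. Plugging the explicit expression~\eqref{20180416:eq4} for $R^{(0)}_w(\delta(z-w))$ into~\eqref{20180414:eq4} turns it into~\eqref{eq:3adler-eps}, a sum of four terms, each a product of: at most two Laurent series in $z$ taken among $L_1(z)$, $L_2^*(\lambda-z)$, $\id$; at most two Laurent series in $w$ taken among $L_1(w+\lambda+\partial)$, $L_2(w)$, $\id$; and a single resolvent $(z-w-\lambda-\partial)^{-1}$, which may equally be expanded as $\iota_z$ or as $\iota_w$ (replacing it by $\delta(z-w-\lambda-\partial)$ annihilates the whole right-hand side by~\eqref{eq:delta1}). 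For each such term I would verify (i) and (ii) directly. For (i), the only negative powers of $\lambda$ that could arise come from the negative powers of $(z-w-\lambda)$, $(w+\lambda)$ and $(\lambda-z)$ inside the three $\lambda$-dependent factors, and they cancel after summation by the same partial-fraction identities that yield the $\mf{gl}_N$ Adler identity of~\cite{DSKV16,DSKV17}; indeed the coefficient of $2\epsilon$ in~\eqref{eq:3adler-eps} is that identity, cf.~\eqref{eq:2adler-0}. For (ii), the key point is that $\pi_N L(z)=\sum_{p\geq-N-1,\,\alpha\in I}u_{p,\alpha}z^{-p-1}E^\alpha\in\mc V_N[[z^{-1}]]z^N$ has positive $z$-degree at most $N$, while $\pi_N L_1(w+\lambda+\partial)$ has positive $w$- and $\lambda$-degree at most $N$; choosing in each term the $\iota_z$ or $\iota_w$ expansion of the resolvent so that it carries only negative powers of whichever of $z,w$ is not already bounded, the product of these truncations has bounded positive $z$- and $w$-degree and, in each monomial in $z,w,\lambda$, involves only finitely many of the $u_{p,\alpha}$, so $\pi_N$ of the term lies in $(\mc V_N[\lambda])[[z^{-1},w^{-1}]]z^Mw^M\otimes A^{\otimes2}$ for a suitable $M=M(N)$.

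It remains to transfer this to $R^{(1)}$ and $R^{(2)}$. Using $R^{(1)}=R^{(0)}-2\Pi_0$ (and $R^{(2)}=R^{(0)}-2\Pi_0-2\Pi_1$ in the scalar case), the $\epsilon$-Adler bracket for $R^{(1)}$ equals the one for $R^{(0)}$ plus a correction given explicitly by~\eqref{eq:3adler-1} for the $3$-bracket, by~\eqref{eq:1adler-1} for the $1$-bracket (and similarly for the $2$-bracket and for $R^{(2)}$); each correction term is again a finite product of series of the types already listed times a single geometric series $\iota_w(w+\lambda+\partial)^{-1}$ or $\iota_z(z-\lambda-\partial)^{-1}$, i.e. of the same shape handled in the $R^{(0)}$ step, so (i) and (ii) hold for these terms as well. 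By Theorem~\ref{20180418:thm2} the proposition follows. The main obstacle is the bookkeeping in the $R^{(0)}$ step: one must fix, consistently across the four terms of~\eqref{eq:3adler-eps} (and across the terms of the $1$- and $2$-Adler identities), the expansion convention for every negative power of $z-w-\lambda-\partial$, $w+\lambda+\partial$ and $z-\lambda-\partial$, and then check the partial-fraction cancellations producing a $\lambda$-polynomial whose $\pi_N$-image has bounded positive $z,w$-degree -- which is exactly why, as noted in Remark~\ref{20180419:rem}, the explicit form of $R_z(\delta(z-w))$, rather than only properties~\eqref{eq:black2+}--\eqref{eq:black4}, is needed.
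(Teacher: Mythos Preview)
Your overall strategy matches the paper's: pass to the explicit form~\eqref{eq:3adler-eps} for $R^{(0)}$, verify~\eqref{20180418:eq11b} after applying $\pi_N$, and note that the cases $R^{(1)},R^{(2)}$ differ from $R^{(0)}$ by finitely many correction terms of the same shape. Two points in the execution, however, are off.

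First, your treatment of (i) is misdirected. By the paper's standing expansion convention (stated right after~\eqref{20180416:eq3}), every factor such as $L_1(w+\lambda+\partial)$, $L_2^*(\lambda-z)$, $(z-w-\lambda-\partial)^{-1}$ is \emph{already} expanded in non-negative powers of $\lambda$; there are no negative powers of $\lambda$ to cancel, and no partial-fraction identities are involved here. What actually has to be checked is that, after $\pi_N$, the coefficient of each $z^aw^b$ is a \emph{polynomial} (not merely a power series) in $\lambda$; this is immediate once one knows the $z$- and $w$-degrees are bounded, since then only finitely many summands of the geometric expansion of the resolvent and finitely many terms of $\pi_N(L)$ contribute.

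Second, your argument for (ii) is not usable as written. You propose ``choosing in each term the $\iota_z$ or $\iota_w$ expansion'' of the resolvent so as to bound the remaining variable. But no single term of~\eqref{eq:3adler-eps} has both its $z$- and its $w$-degree bounded under either expansion: for instance $(L_1(w+\lambda+\partial)+\epsilon)(z-w-\lambda-\partial)^{-1}L_2^*(\lambda-z)(L_2(w)+\epsilon)$ has unbounded positive $w$-degree under $\iota_z$ (coming from $(w+\lambda+\partial)^k$, $k\geq0$) and unbounded positive $z$-degree under $\iota_w$. The paper's argument, which is the one you should use, runs as follows: compute the \emph{entire} right-hand side with $\iota_z$ everywhere to see that the $z$-degree is $\leq 2N-1$; compute it again with $\iota_w$ everywhere to see that the $w$-degree is $\leq 2N-1$; then invoke the invariance of the full sum under the choice of expansion (the $\delta$-function argument you already cite) to conclude that the common value satisfies both bounds simultaneously. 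You mention this invariance but then do not deploy it in the right place.
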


\begin{proof}
We need to prove the continuity condition \eqref{20180418:eq11}
(or, equivalently, \eqref{20180418:eq11b}). 
Fix $N\in\mb Z$.
Note that, by the definition of the projection maps 
$\pi_N:\,\mc V_\infty\to\mc V_N$ in \eqref{20180417:eq4},
$$
\pi_N(L(z))
=
\sum_{p=-N-1}^\infty u_{p,\alpha}z^{-p-1}E^\alpha
\,,
$$
has powers $z^{\leq N}$.
Applying $\pi_N$ to the RHS of \eqref{eq:3adler-eps}, we get
\begin{equation}\label{20180419:eq1}
\begin{array}{l}
\displaystyle{
\vphantom{\Big(}
\Omega
\bigg(
%2
(\pi_N(L_1(w+\lambda+\partial))+\epsilon\id)
(z\!-\!w\!-\!\lambda\!-\!\partial)^{-1}
\pi_N(L_2^*(\lambda-z))(\pi_N(L_2(w))+\epsilon\id)
\big)
} \\
\displaystyle{
\vphantom{\Big(}
%1
-(\pi_N(L_1(w+\lambda+\partial))+\epsilon\id)
\pi_N(L_1(z))
(z\!-\!w\!-\!\lambda\!-\!\partial)^{-1}
(\pi_N(L_2(w))+\epsilon\id)
\big)
} \\
\displaystyle{
\vphantom{\Big(}
%3
+
\pi_N(L_1(w+\lambda+\partial))
(\pi_N(L_1(z))+\epsilon\id)
(z\!-\!w\!-\!\lambda\!-\!\partial)^{-1}
(\pi_N(L_2^*(\lambda-z))+\epsilon\id)
}
\\
\displaystyle{
\vphantom{\Big(}
%4
-
(\pi_N(L_1(z))+\epsilon\id)
(z\!-\!w\!-\!\lambda\!-\!\partial)^{-1}
(\pi_N(L_2^*(\lambda-z))+\epsilon\id)
\pi_N(L_2(w))
\bigg)
\,.}
\end{array}
\end{equation}
If we expand $(z-w-\lambda-\partial)^{-1}$ in negative powers of $z$,
we observe that the powers of $z$ in \eqref{20180419:eq1}
are bounded above by $M=2N-1$.
If instead we expand $(z-w-\lambda-\partial)^{-1}$ in negative powers of $w$,
we get that also the powers of $w$ in \eqref{20180419:eq1}
are bounded above by $M=2N-1$.
Recall that, by the observation after formula \eqref{eq:3adler-0}, 
the RHS of \eqref{eq:3adler-0} is unchanged
if we expand $(z-w-\lambda-\partial)^{-1}$ in either negative powers of $z$
or negative powers of $w$.
Hence, the continuity condition \eqref{20180418:eq11b}
for the $\epsilon$-Adler identity \eqref{eq:3adler-eps} of $R^{(0)}$ holds.
The proof for $R^{(1)}$ and $R^{(2)}$ is similar.
\end{proof}

\begin{proposition}\label{prop:skew}
The $\epsilon$-Adler identity \eqref{20180414:eq4} associated to $R$ 
implies the skewsymmetry condition
\eqref{eq:pva5c}.
\end{proposition}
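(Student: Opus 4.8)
The plan is to verify the skewsymmetry identity \eqref{eq:pva5c} directly from the explicit formula \eqref{20180414:eq4} for the $\epsilon$-Adler identity. Recall that \eqref{eq:pva5c} asserts
$$
\{L_1(z)_\lambda L_2(w)\}^{R,\epsilon}
=
-\big|_{x=\partial}\{L_2(w)_{-\lambda-x}L_1(z)\}^{R,\epsilon}
\,.
$$
The right-hand side is obtained by writing out \eqref{20180414:eq4} with the roles of the two tensor factors exchanged (so $L_1 \leftrightarrow L_2$, $z \leftrightarrow w$), replacing $\lambda$ by $-\lambda-\partial$ (with $\partial$ then acting on everything to its left in the standard convention of the paper), and prefixing an overall minus sign. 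First I would use the basic property \eqref{eq:switch} of $\Omega$, namely $\Omega(X\otimes Y) = (Y\otimes X)\Omega$, to move the $\Omega$ appearing in the swapped expression back to the front: this converts the ``$2\leftrightarrow 1$'' ordering of the $A$-factors produced by the exchange into the original ``$1,2$'' ordering, at the cost of reversing the order of the scalar pseudodifferential factors. The upshot is that term-by-term the swapped right-hand side becomes a sum of four expressions of the same shape as the four terms in \eqref{20180414:eq4}.

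The key computational step is then a matching of the four terms. The heuristic is that under $\lambda \mapsto -\lambda-\partial$ and $z\leftrightarrow w$, each of the four summands in \eqref{20180414:eq4} gets sent, up to sign, to another of the four summands: the first and fourth terms get interchanged, and the second and third terms get interchanged. Concretely, I would check that the substitution turns $L_2^*(\lambda-z)$ into $L_1^*(-\lambda-\partial-w)$ and, after re-expanding the shifted arguments, into the factor $L_1(z)$-type object appearing in the partner term; similarly $L_1(w+\lambda+\partial)$ becomes $L_2(w)$-type. The relation \eqref{20181105:eq1}, $R_w(\delta(z-w)) = R^*_z(\delta(z-w))$ from Lemma \ref{20181105:lem1a}, is exactly what is needed to match the $R_\zeta$ (resp. $R_z$, $R_w$, $R_\xi$) factors between a term and its image, since the swap interchanges the roles of the ``free'' spectral variable and the variable inside the $\delta$-function. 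Throughout, the auxiliary variables $\zeta, \xi$ are dummy and can be relabeled, and the bookkeeping of which shifted argument ($z+\partial$, $w+\lambda+\partial$, $\zeta-z+w+\lambda+\partial$, etc.) goes where is a matter of carefully tracking the substitution rules spelled out after \eqref{20180416:eq3} for expanding negative powers of sums.

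The main obstacle I anticipate is precisely this bookkeeping of the spectral-variable shifts and the expansion conventions: one must be scrupulous about the convention that $\partial$ (and $\lambda$) appear only in nonnegative powers when acting on functions, because the substitution $\lambda \mapsto -\lambda-\partial$ interacts nontrivially with how $(z-w-\lambda-\partial)^{-1}$-type factors are expanded, and a careless relabeling can silently change the expansion region. I would mitigate this by first carrying out the verification in the special case $R = R^{(0)}$, where \eqref{20180414:eq4} collapses to the much simpler \eqref{eq:3adler-eps}, and where the identity $R^{(0)} = -(R^{(0)})^*$ makes the term-matching transparent; then I would observe that the general argument uses only the formal properties \eqref{eq:black2+}--\eqref{eq:black4} together with Lemma \ref{20181105:lem1a}, hence applies verbatim to $R^{(1)}$ and $R^{(2)}$. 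Since $\epsilon$ enters only through the uniform replacements $L_i(\cdot) \rightsquigarrow L_i(\cdot)+\epsilon\id$, and $\id$ is fixed by all substitutions, the skewsymmetry for general $\epsilon$ follows from the $\epsilon$-independent bookkeeping, i.e. it holds simultaneously for all three graded pieces $\{\cdot\,_\lambda\,\cdot\}^R_1, \{\cdot\,_\lambda\,\cdot\}^R_2, \{\cdot\,_\lambda\,\cdot\}^R_3$ in the expansion \eqref{20180408:eq2b}.
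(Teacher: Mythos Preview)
Your proposal is correct and follows essentially the same approach as the paper's proof. The paper packages the tensor-factor swap by introducing the flip $\sigma$ on $A^{\otimes2}$ and rewriting \eqref{eq:pva5c} in the equivalent form $\{L_1(z)_\lambda L_2(w)\}=-\big|_{x=\partial}\{L_1(w)_{-\lambda-x}L_2(z)\}^\sigma$, then presents the swapped expression (your four-term computation) and observes that applying $\sigma$, together with $\Omega^\sigma=\Omega$, recovers \eqref{20180414:eq4}; your use of \eqref{eq:switch} is exactly the same maneuver, and your predicted pairing $1\leftrightarrow4$, $2\leftrightarrow3$ is what actually occurs.
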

\begin{proof}
First, note that the skewsymmetry condition \eqref{eq:pva5c} can be rewritten as
\begin{equation}\label{eq:pva5c-bis}
\{L_1(z)_\lambda L_2(w)\}=-\big|_{x=\partial}\{L_1(w)_{-\lambda -x}L_2(z)\}^\sigma\,,
\end{equation}
where $\sigma$ is the endomorphism of $A^{\otimes2}$ defined by $(X\otimes Y)^\sigma=Y\otimes X$.
Using the $\epsilon$-Adler identity \eqref{20180414:eq4}, by a straightforward 
computation we have
\begin{equation}\label{20181007:eq2}
\begin{array}{l}
\displaystyle{
\vphantom{\Big(}
-\big|_{x=\partial}\{L_{1}(w)_{-\lambda-x}L_2(z)\}^{R,\epsilon}
} \\
\displaystyle{
\vphantom{\Big(}
=
\frac12
\Omega
\bigg(
%1
-(\big|_{\xi=z-\lambda-\partial}L_1^*(\lambda -z)+\epsilon\id)
(\big|_{\zeta=w+\partial}L_1(w))
R_{\xi}\big(
\delta(\zeta-\xi)\big)
\big(
L_2(z)+\epsilon\id\big)
} \\
\displaystyle{
\vphantom{\Big(}
%2
+
(\big|_{\zeta=\partial} L_1^*(\lambda-z)+\epsilon\id)
R_w(\delta(w-\xi))\big|_{\xi=z-\lambda-\zeta}
L_2(w+\lambda+\zeta+\partial)
(L_2(z)+\epsilon\id)
} \\
\displaystyle{
\vphantom{\Big(}
%3
-
\big(\big|_{\xi=\partial} L_1^*(\lambda-z)(L_1(w)+\epsilon\id)\big)
R_{\zeta}\big(\delta(\zeta-z)\big)
(\big|_{\zeta=w+\lambda+\xi}
(L_2(w+\lambda+\xi)+\epsilon\id)
} \\
\displaystyle{
\vphantom{\Big(}
%4
+
(\big|_{\zeta=w+\lambda+\partial+z-\xi} L_1(w)+\epsilon\id)
R_\xi\big(
\delta(\zeta-\xi)\big)
(L_2(\zeta)+\epsilon\id)
\big|_{\xi=z+\partial}L_2(z)
\bigg)
\,.}
\end{array}
\end{equation}
The skewsymmetry condition \eqref{eq:pva5c-bis} follows by applying $\sigma$ in both sides of
\eqref{20181007:eq2} and by using the facts that $(XY)^\sigma=X^\sigma Y^\sigma$, for every $X,Y\in A^{\otimes 2}$, and that $\Omega^\sigma=\Omega$.
\end{proof}

\begin{lemma}\label{20181105:lem1}
The modified Yang-Baxter equation \eqref{eq:mod-YB} for $R$ is equivalent to the following identity
\begin{equation}\label{mybe2}
\begin{split}
&\big(
R_v(\delta(w-v))R_{\xi}(\delta(z-\xi))\big|_{\xi=v+\mu}
-R_v(\delta(w-v))R_{\zeta}(\delta(z-\zeta))\big|_{\zeta=w+\mu}
\\
&
-R_v(\delta(z-\mu-v))R_{\eta}(\delta(w-\eta))\big|_{\eta=z-\mu}
+\delta(z-v-\mu)\delta(w-v)
\big)\Omega_{12}\Omega_{23}
\\
&
-\big(
R_v(\delta(z-v))R_{\xi}(\delta(w-\xi))\big|_{v+\lambda}
-R_v(\delta(z-v))R_{\zeta}(\delta(w-\zeta))\big|_{\zeta=z+\lambda}
\\
&
-R_v(\delta(w-\lambda-v))R_{\eta}(\delta(z-\eta))\big|_{\eta=w-\lambda}
+\delta(w-v-\lambda)\delta(z-v)
\big)\Omega_{23}\Omega_{12}=0
\,.
\end{split}
\end{equation}
\end{lemma}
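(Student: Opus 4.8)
The plan is to prove the equivalence in the two directions separately: to obtain \eqref{mybe2} from \eqref{eq:mod-YB} by evaluating the modified Yang--Baxter expression on suitable generating-series test elements, and conversely to recover \eqref{eq:mod-YB} from \eqref{mybe2} by a density/non-degeneracy argument, exactly in the spirit of the proof of Lemma \ref{20181105:lem1a}.

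For the forward implication, write $F(a,b):=[R(a),R(b)]-R([R(a),b])-R([a,R(b)])+[a,b]$, so that \eqref{eq:mod-YB} reads $F(a,b)=0$ for all $a,b\in\mf g$. I would substitute for $a,b$ (and for a third element $c$, against which one contracts) the $\delta$-function generating series $\delta(z-\partial)$, $\delta(w-\partial)$, $\delta(v-\partial)$, each tensored with a basis vector $E_\gamma$, $E_\delta$, $E_\epsilon$ of $A$ and with a test function $\Theta_i$ whose action of $\partial$ we record, following the conventions fixed after \eqref{20180416:eq3}, by a formal symbol $\lambda_i$; then form $\langle F(a,b)\circ c\rangle$, multiply by $E^\gamma\otimes E^\delta\otimes E^\epsilon$ and sum over $\gamma,\delta,\epsilon\in I$. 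Since $\langle\cdot\rangle$ kills total derivatives one has the single relation $\lambda_1+\lambda_2+\lambda_3=0$, which leaves exactly two free parameters $\lambda,\mu$; and the $A$-factor sums collapse, via \eqref{eq:motiv7} and the exchange rule \eqref{eq:switch} for $\Omega$, into the two groupings carrying $\Omega_{12}\Omega_{23}$ and $\Omega_{23}\Omega_{12}$, which are interchanged by $z\leftrightarrow w$, $\lambda\leftrightarrow\mu$, reflecting the antisymmetry $[a,b]=-[b,a]$. The four summands of $F$ are then rewritten using only properties \eqref{eq:black2+}--\eqref{eq:black4} of $R$, the action formulas \eqref{20180416:eq2}--\eqref{20180416:eq3} for $R$ and $R^*$ on products of $\delta$-functions, and the identity $R_w(\delta(z-w))=R^*_z(\delta(z-w))$ of Lemma \ref{20181105:lem1a} (to trade every $R^*$ for an $R$), together with the $\delta$-function rules \eqref{eq:delta1}--\eqref{eq:delta2}. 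Matching them term by term yields \eqref{mybe2}: $[R(a),R(b)]$ contributes the double-$R$ term $R_v(\delta(w-v))R_\xi(\delta(z-\xi))\big|_{\xi=v+\mu}$ (and its $z\leftrightarrow w$, $\lambda\leftrightarrow\mu$ image), the mixed terms $-R([R(a),b])$ and $-R([a,R(b)])$ contribute $-R_v(\delta(w-v))R_\zeta(\delta(z-\zeta))\big|_{\zeta=w+\mu}$ and $-R_v(\delta(z-\mu-v))R_\eta(\delta(w-\eta))\big|_{\eta=z-\mu}$, and $[a,b]$ contributes the $R$-free term $\delta(z-v-\mu)\delta(w-v)$, with the prescribed expansion domains for the negative powers of sums of variables.

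For the converse, the above specialization is faithful: an arbitrary element of $\mf g=\mc F((\partial^{-1}))\otimes A$ is recovered from such generating series by taking $\res_\partial$ and contracting $A$-indices against dual bases, the symbol map in $v$ is injective, and the trace form on $\mf g$ is non-degenerate; hence if \eqref{mybe2} holds for all $z,w,v,\lambda,\mu$ then $\langle F(a,b)\circ c\rangle=0$ for all $a,b,c\in\mf g$, so $F(a,b)=0$ and \eqref{eq:mod-YB} holds. The passage back from the generating-series identity to an operator identity is of the type handled, via \cite[Lemma 1.36]{BDSK09}, already in the proof of Lemma \ref{20181105:lem1a}. I expect the only real obstacle to be bookkeeping in the forward direction: pinning down the exact shifts of the auxiliary symbols ($\xi=v+\mu$, $\zeta=w+\mu$, $\eta=z-\mu$, and so on) and the correct expansion regions, and verifying that the $A$-combinatorics produces precisely the two $\Omega_{12}\Omega_{23}$, $\Omega_{23}\Omega_{12}$ groupings and no others; everything else is routine given the machinery already set up.
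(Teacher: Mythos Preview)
Your approach is correct and follows essentially the same strategy as the paper's proof. The paper also evaluates the modified Yang--Baxter expression on generating-series test elements $a=A(\partial)\delta(z-\partial)E_\alpha$, $b=B(\partial)\delta(w-\partial)E_\beta$, uses \eqref{20180416:eq2} to pull out the $R$'s, then tensors by $E^\alpha\otimes E^\beta$ and sums to assemble the $\Omega_{12}\Omega_{23}$ and $\Omega_{23}\Omega_{12}$ factors, finally stripping off the arbitrary $A(\partial),B(\partial)$ to obtain \eqref{mybe2}.

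The one notable difference is how the variable $v$ enters. You introduce a third test element $c=\Theta_3\delta(v-\partial)E_\epsilon$ and contract via the trace pairing $\langle F(a,b)\circ c\rangle$, so that $v$ comes from $c$ and the relation $\lambda_1+\lambda_2+\lambda_3=0$ reduces you to two free parameters. The paper instead computes $F(a,b)$ directly as a pseudodifferential operator and then takes its symbol, so $v$ is simply the symbol variable for $\partial$ and no third element or trace pairing is needed. Both routes are valid; the paper's is slightly more direct (and avoids invoking Lemma~\ref{20181105:lem1a} or $R^*$ at all, since no adjoint appears when one does not pair against $c$), while yours makes the non-degeneracy argument for the converse more symmetric with the proof of Lemma~\ref{20181105:lem1a}. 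The bookkeeping you anticipate---the precise shifts $\xi=v+\mu$, $\zeta=w+\mu$, $\eta=z-\mu$ and the verification that exactly two $\Omega$-groupings occur---is indeed the only real work, and it goes through as you describe.
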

\begin{proof}
Let us compute the modified Yang-Baxter equation \eqref{eq:mod-YB} for
$a=A(\partial)\delta(z-\partial)E_\alpha$ and $b=B(\partial)\delta(w-\partial)E_\beta$, for
$A(\partial),B(\partial)\in\mc V_\infty\yhat((\partial^{-1}))$ and $\alpha,\beta\in I$.
Using equation \eqref{20180416:eq2}, and recalling that $\partial\left(R_w(\delta(z-w))\right)=0$, we get the identity
\begin{equation}\label{20180712:eq1}
\begin{split}
&A(z)\big[
R_v(\delta(w-v))R_{\xi}(\delta(z-\xi))\big|_{\xi=v+\mu}
-R_v(\delta(w-v))R_{\zeta}(\delta(z-\zeta))\big|_{\zeta=w+\mu}
\\
&
\qquad-R_v(\delta(z-\mu-v))R_{\eta}(\delta(w-\eta))\big|_{\eta=z-\mu}
\\
&\qquad
+\delta(z-v-\mu)\delta(w-v)
\big]\left(\big|_{\mu=\partial}B(w)\right)\big|_{v=\partial}E_{\alpha}E_{\beta}
\\
&
-B(w)\big[
R_v(\delta(z-v))R_{\xi}(\delta(w-\xi))\big|_{\xi=v+\lambda}
-R_v(\delta(z-v))R_{\zeta}(\delta(w-\zeta))\big|_{\zeta=z+\lambda}
\\
&
\qquad
-R_v(\delta(w-\lambda-v))R_{\eta}(\delta(z-\eta))\big|_{\eta=w-\lambda}
\\
&\qquad
+\delta(w-v-\lambda)\delta(z-v)
\big]\left(\big|_{\lambda=\partial}A(z)\right)\big|_{v=\partial}E_{\beta}E_{\alpha}=0
\,.
\end{split}
\end{equation}
Tensoring both sides of identity \eqref{20180712:eq1} on the left by $E^\alpha\otimes E^\beta$ and taking the sum
over $\alpha,\beta\in I$ we get the identity
\begin{equation}\label{20180712:eq2}
\begin{split}
&A(z)\big[
R_v(\delta(w-v))R_{\xi}(\delta(z-\xi))\big|_{\xi=v+\mu}
-R_v(\delta(w-v))R_{\zeta}(\delta(z-\zeta))\big|_{\zeta=w+\mu}
\\
&
\qquad-R_v(\delta(z-\mu-v))R_{\eta}(\delta(w-\eta))\big|_{\eta=z-\mu}
\\
&\qquad
+\delta(z-v-\mu)\delta(w-v)
\big]\left(\big|_{\mu=\partial}B(w)\right)\big|_{v=\partial}\Omega_{12}\Omega_{23}
\\
&
-B(w)\big[
R_v(\delta(z-v))R_{\xi}(\delta(w-\xi))\big|_{\xi=v+\lambda}
-R_v(\delta(z-v))R_{\zeta}(\delta(w-\zeta))\big|_{\zeta=z+\lambda}
\\
&
\qquad
-R_v(\delta(w-\lambda-v))R_{\eta}(\delta(z-\eta))\big|_{\eta=w-\lambda}
\\
&\qquad
+\delta(w-v-\lambda)\delta(z-v)
\big]\left(\big|_{\lambda=\partial}A(z)\right)\big|_{v=\partial}\Omega_{23}\Omega_{12}=0
\,.
\end{split}
\end{equation}
Since identity \eqref{20180712:eq2} holds for arbitrary $A(\partial), B(\partial)\in\mc V_\infty\yhat((\partial^{-1}))$
it implies identity \eqref{mybe2}.
\end{proof}
\begin{remark}
If $A$ is non commutative, the elements $\Omega_{12}\Omega_{23}$ and $\Omega_{23}\Omega_{12}$ are linearly independent.
Hence, by Lemma \ref{20181105:lem1}, it follows that the modified
Yang-Baxter equation \eqref{eq:mod-YB} is equivalent to the identity
\begin{equation}\label{mybe3}
\begin{split}
&
R_v(\delta(z-v))R_{\xi}(\delta(w-\xi))\big|_{\xi=v+\lambda}
-R_v(\delta(z-v))R_{\zeta}(\delta(w-\zeta))\big|_{\zeta=v+\lambda}
\\
&
-R_v(\delta(w-\lambda-v))R_{\eta}(\delta(z-\eta))\big|_{\eta=w-\lambda}
+\delta(w-v-\lambda)\delta(z-v)
=0
\,.
\end{split}
\end{equation}
\end{remark}
\begin{proposition}\label{prop:jacobi}
The $\epsilon$-Adler identity \eqref{20180414:eq4} associated to $R=R^{(0)}$, $R^{(1)}$ or $R^{(2)}$ 
implies the Jacobi identity \eqref{eq:pva3c} for every $\epsilon\in\mb F$.
In particular, the $1$-st and $3$-rd Adler identities \eqref{eq:1adler} and \eqref{eq:3adler} associated to $R$ imply the Jacobi identity.
For $R=R^{(0)}$,
then also the $2$-nd Adler identity \eqref{eq:2adler} associated to $R$ implies the Jacobi identity,
and the corresponding continuous $\lambda$-brackets 
defined by \eqref{eq:3adler-0}, \eqref{eq:2adler-0} and \eqref{eq:1adler-0}
are compatible, in the sense that any their linear combination satisfies the Jacobi identity.
\end{proposition}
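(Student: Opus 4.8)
The plan is to reduce the Jacobi identity for the $\epsilon$-Adler $\lambda$-bracket to a purely algebraic identity about the $R$-matrix, namely the one isolated in Lemma~\ref{20181105:lem1} (equation~\eqref{mybe2}, or equivalently \eqref{mybe3} when $A$ is noncommutative), and then verify that identity. First I would observe that, since the $\epsilon$-Adler identity \eqref{20180414:eq4} is the "affinization" of the finite-dimensional formula \eqref{20180408:eq1}, the Jacobi identity \eqref{eq:pva3c} for $\{L_1(z)_\lambda L_2(w)\}^{R,\epsilon}$ is the exact affine counterpart of the Jacobi identity \eqref{20180408:eq5} proved in Theorem~\ref{20180408:thm}. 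Concretely, one expands all three terms of \eqref{eq:pva3c} using \eqref{20180414:eq4} together with the Leibniz/Master-Formula machinery of Theorem~\ref{20180418:thm2}, collects the terms, and uses the switch property \eqref{eq:switch} of $\Omega$ and the relation \eqref{20181105:eq1} between $R_z(\delta(z-w))$ and $R^*_z(\delta(z-w))$. This is the direct analogue of the passage, in the proof of Theorem~\ref{20180408:thm}, from \eqref{20180408:eq7} down through \eqref{20180408:eq17}, where the finite-dimensional computation collapses to the modified Yang--Baxter equation \eqref{eq:mod-YB} applied inside a trace (plus Lemma~\ref{20180409:lem2}(a)). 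In the affine setting, the role of \eqref{eq:mod-YB}-inside-a-trace is played by \eqref{mybe2}, and the role of Lemma~\ref{20180409:lem2}(a) is played by the manipulations of $\delta$-functions via \eqref{eq:delta1}--\eqref{eq:delta2} and the "$\partial$ acts in nonnegative powers only" convention.

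Next I would handle the $\epsilon$-dependence uniformly. Exactly as in the proof of Theorem~\ref{20180408:thm} (the paragraph beginning "By definition that the trace form vanishes on $[\mf g,\mf g]$"), one notes that $\epsilon$ enters \eqref{20180414:eq4} only through the substitutions $L_i\mapsto L_i+\epsilon\id$, and that the constant term $\id$ is "annihilated" whenever it would be hit by a bracket, by sesquilinearity and by $R(\partial^n\id)=r_n\partial^n\id$ from \eqref{eq:black4}. Hence it suffices to prove the Jacobi identity for the full $\epsilon$-family, which then automatically gives \eqref{eq:pva3} for each coefficient in the expansion \eqref{20180408:eq2b}: the $\epsilon^0$ and $\epsilon^4$ coefficients give that the $3$rd and $1$st Adler $\lambda$-brackets \eqref{eq:3adler} and \eqref{eq:1adler} each satisfy Jacobi, and for $R=R^{(0)}$ one reads off the remaining coefficients to get compatibility of the three brackets \eqref{eq:3adler-0}, \eqref{eq:2adler-0}, \eqref{eq:1adler-0} --- this is the verbatim affine translation of the argument with the coefficients of $\epsilon,\epsilon^2,\epsilon^3$ at the end of the proof of Theorem~\ref{20180408:thm}. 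For the $2$nd bracket when $R=R^{(0)}$ one additionally uses that $R^{(0)}=\frac12(R^{(0)}-(R^{(0)})^*)$ is itself an $R$-matrix (equation~\eqref{eq:affineR}(i)), which is the affine analogue of the hypothesis "$\frac12(R-R^*)$ is an $R$-matrix" in Theorem~\ref{20180408:thm}; this is what makes \eqref{eq:pva3} hold at every order and not just at orders $0$ and $4$.

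The main obstacle I expect is bookkeeping: organizing the expansion of the three terms of \eqref{eq:pva3c} so that the cancellations are visible. The difficulty is not conceptual --- it mirrors Theorem~\ref{20180408:thm} step for step --- but the affine version carries three formal variables $z,w,v$, a spectral parameter pair $\lambda,\mu$, and the $\delta$-function substitutions $R_\zeta(\delta(\zeta-\xi))|_{\xi=\cdots}$, so one must be careful about the expansion conventions (the "$\partial$ in nonnegative powers" rule stated after \eqref{20180416:eq3}) when moving $\delta$-functions past one another and when applying \eqref{eq:delta2}. The cleanest route is to first record, as a lemma, the affine analogue of Lemma~\ref{20180409:lem2}(a) --- a $\delta$-function identity expressing the "double-bracket" combination symmetrically --- so that once the Jacobiator is reduced to an expression built from $R_v(\delta(\cdot))R_\xi(\delta(\cdot))$ and bare products of $\delta$-functions against $\Omega_{12}\Omega_{23}$ and $\Omega_{23}\Omega_{12}$, it vanishes by Lemma~\ref{20181105:lem1}. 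A useful simplification: since the terms of the Jacobiator occur in $a\leftrightarrow b$ (i.e. $(z,\lambda)\leftrightarrow(w,\mu)$) symmetric pairs --- just as \eqref{20180408:eq13}/\eqref{20180408:eq14} cancel under the alternating sum in the finite case --- many terms drop out before one ever needs \eqref{mybe2}, and what remains is precisely \eqref{mybe2} contracted against $L_1(z)L_2(w)L_3(v)$ and its descendants. For $R=R^{(1)},R^{(2)}$ one may alternatively short-circuit part of this by using the explicit decompositions \eqref{20180416:eq5} (and its $R^{(2)}$-analogue) to reduce to the already-treated $R^{(0)}$ case plus lower-order correction terms, which is likely the most economical way to finish those cases.
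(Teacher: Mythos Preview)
Your overall strategy is correct and matches the paper's: expand the Jacobiator of the $\epsilon$-Adler bracket using the Leibniz/Master formula, organize the result using the $\Omega$-identities \eqref{eq:switch} and \eqref{20181105:eq1}, and collapse everything to the generating-series form of the modified Yang--Baxter equation, namely Lemma~\ref{20181105:lem1} (equation~\eqref{mybe2}). The paper carries this out in full detail for the $1$st Adler bracket (the simplest case), packages the relevant modified-YB combination into an auxiliary function $\Gamma$, and then remarks that the $\epsilon$-Adler and $3$rd Adler cases are ``similar (but longer)''; the $2$nd bracket for $R^{(0)}$ is checked separately.

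Two points in your write-up deserve correction. First, the ``set $\epsilon=0$'' trick you invoke from the proof of Theorem~\ref{20180408:thm} does not transplant to the affine setting as you describe it. In the finite-dimensional proof that step works because one \emph{evaluates at a point} $x\in\mf g^*$ and then shifts $x\mapsto x+\epsilon\id$, using that $\Tr$ kills commutators; here there is no evaluation at a point, and the Jacobiator is a genuine identity in $\mc V_\infty[\lambda,\mu]\xhat$. So you cannot reduce to $\epsilon=0$ --- you must verify the full polynomial-in-$\epsilon$ identity (or, as the paper does, treat the $1$st, $3$rd, and $\epsilon$-Adler brackets as three parallel direct computations). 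Relatedly, your sentence ``this is what makes \eqref{eq:pva3} hold at every order and not just at orders $0$ and $4$'' misstates the role of skew-adjointness: the $\epsilon$-family Jacobi already holds at every order once you prove it for all $\epsilon$; the extra input $R^{(0)}=-(R^{(0)})^*$ is needed for a \emph{separate} verification that the $2$nd bracket alone satisfies Jacobi (the $\epsilon^2$ coefficient only gives $4J_{22}+J_{13}+J_{31}=0$, not $J_{22}=0$), exactly as in the second half of the proof of Theorem~\ref{20180408:thm}.

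Second, your proposed shortcut for $R^{(1)},R^{(2)}$ --- reduce to $R^{(0)}$ via \eqref{20180416:eq5} plus correction terms --- is not what the paper does and is likely more work, not less. The paper's computation is uniform in $R$: it uses only properties \eqref{eq:black2+}--\eqref{eq:black4} and the single identity \eqref{mybe2}, which holds for all three $R$-matrices simultaneously by Lemma~\ref{20181105:lem1}. No case split is needed.
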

\begin{proof}
The proof follows by a very long but straightforward computation. We outline it in the case of the $1$-st Adler
identity \eqref{eq:1adler}. 

Recall the generating series \eqref{Rdelta}. Let us introduce the shorthand
$$
R(z,w):=R_w(\delta(z-w))\,.
$$
Using sesquilinearity \eqref{eq:pva4}, Leibniz rules \eqref{eq:pva2} and the
identity \eqref{20181105:eq1}, the Jacobi identity \eqref{eq:pva3c} for
the $\lambda$-bracket $\{L(z)_\lambda L(w)\}_1^R$ given by the  $1$-st Adler type identity
\eqref{eq:1adler} becomes
\begin{align}
\begin{split}\label{20181207:jacobi1}
&\Big[
\Omega_{12}\Omega_{23}
R(w-\mu,z)R(v+\lambda+\zeta+\mu,w)
-\Omega_{13}\Omega_{23}
R(v+\lambda+\zeta+\mu,z)R(w-\mu,v)
\\
&-\Omega_{23}\Omega_{12}
R(v+\lambda+\zeta+\mu,w+\lambda+\zeta)
R(w+\lambda+\zeta,z)
\\
&+\Omega_{23}\Omega_{13}
R(w-\mu,v+\lambda+\zeta)
R(v+\lambda+\zeta,z)
\\
&
+\Omega_{12}\Omega_{13}
R(w-\mu,z-\lambda-\mu-\zeta)R(v+\mu+\lambda+\zeta,z)
\\
&-\Omega_{13}\Omega_{12}
R(v+\lambda+\zeta+\mu,z+\mu)R(w-\mu,z)
\Big]
\big|_{\zeta=\partial}L_1^*(\lambda-z)
\\
&+\Big[
-\Omega_{12}\Omega_{23}
R(z-\lambda,w-\lambda-\mu-\xi)R(v+\lambda+\mu+\xi,w)
\\
&-\Omega_{12}\Omega_{13}
R(z-\lambda,w)R(v+\lambda+\mu+\xi,z)
+\Omega_{23}\Omega_{13}
R(v+\lambda+\mu+\xi,w)R(z-\lambda,v)
\\
&+\Omega_{13}\Omega_{12}
R(v+\lambda+\mu+\xi,z+\mu+\xi)
R(z+\mu-+\xi,w)
\\
&-\Omega_{13}\Omega_{23}
R(z-\lambda,v+\mu+\xi)R(v+\mu+\xi,w)
\\
&+\Omega_{23}\Omega_{12}
R(v+\lambda+\mu+\xi,w+\lambda)R(z-\lambda,w)
\Big]
\big|_{\xi=\partial}L_2^*(\mu-w)
\\
&+\Big[
\Omega_{13}\Omega_{23}
R(z-\lambda,v+\mu)R(w-\mu,v)
-\Omega_{23}\Omega_{13}
R(w-\mu,v+\lambda)R(z-\lambda,v)
\\
&+\Omega_{13}\Omega_{12}
R(z-\lambda,v)
R(w-\mu,z)
-\Omega_{23}\Omega_{12}
R(w-\mu,v)
R(z-\lambda,w)
\\
&-\Omega_{12}\Omega_{13}
R(w-\mu,z-\lambda-\mu)R(z-\lambda-\mu,v)
\\
&
+\Omega_{12}\Omega_{23}
R(z-\lambda,w-\lambda-\mu)R(w-\lambda-\mu,v)
\Big]
L_3(v+\lambda+\mu)
\\
&+\Big[-
R(w-\mu,z+\zeta)R(v+\lambda+\zeta+\mu,w)\Omega_{12}\Omega_{23}
\\
&
+R(v+\lambda+\zeta+\mu,z+\zeta)R(w-\mu,v)\Omega_{13}\Omega_{23}
\\
&
+R(v+\lambda+\zeta+\mu,w+\lambda+\zeta)
R(w+\lambda+\zeta,z+\zeta)\Omega_{23}\Omega_{12}
\\
&-
R(w-\mu,v+\lambda+\zeta)
R(v+\lambda+\zeta,z+\zeta)\Omega_{23}\Omega_{13}
\\
&-R(w-\mu,z-\lambda-\mu)R(v+\lambda+\zeta+\mu,z+\zeta)\Omega_{12}\Omega_{13}
\\
&+R(v+\lambda+\zeta+\mu,z+\mu+\zeta)R(w-\mu,z+\zeta)\Omega_{13}\Omega_{12}
\Big]\big|_{\zeta=\partial}L_1(z)
\\
&+\Big[
R(z-\lambda,w-\lambda-\mu)R(v+\lambda+\mu+\xi,w+\xi)
\Omega_{12}\Omega_{23}
\\
&+
R(z-\lambda,w+\xi)R(v+\lambda+\mu+\xi,z)
\Omega_{12}\Omega_{13}
\\
&-R(v+\lambda+\mu+\xi,w+\xi)R(z-\lambda,v)
\Omega_{23}\Omega_{13}
\\
&-
R(v+\lambda+\mu+\xi,z+\mu+\xi)
R(z+\mu+\xi,w+\xi)
\Omega_{13}\Omega_{12}
\\
&+
R(z-\lambda,v+\mu+\xi)R(v+\mu+\xi,w+\xi)
\Omega_{13}\Omega_{23}
\\
&-
R(v+\lambda+\mu+\xi,w+\lambda+\xi)R(z-\lambda,w+\xi)
\Omega_{23}\Omega_{12}
\Big]\big|_{\xi=\partial}L_2(w)
\\
&+\Big[
-R(z-\lambda,v+\mu+\nu)R(w-\mu,v+\nu)
\Omega_{13}\Omega_{23}
\\
&+
R(w-\mu,v+\lambda+\nu)R(z-\lambda,v+\nu)
\Omega_{23}\Omega_{13}
\\
&-
R(z-\lambda,v+\nu)
R(w-\mu,z)
\Omega_{13}\Omega_{12}
+
R(w-\mu,v+\nu)
R(z-\lambda,w)
\Omega_{23}\Omega_{12}
\\
&+
R(w-\mu,z-\lambda-\mu)R(z-\lambda-\mu,v+\nu)
\Omega_{12}\Omega_{13}
\\
&-
R(z-\lambda,w-\lambda-\mu)R(w-\lambda-\mu,v+\nu)
\Omega_{12}\Omega_{23}
\Big]\big|_{\nu=\partial}L_3(v)
=0
\,.
\end{split}
\end{align}
Note that
\begin{equation}\label{20181208:eq1}
\Omega_{12}\Omega_{23}=\Omega_{13}\Omega_{12}=\Omega_{23}\Omega_{13}\,,
\qquad
\Omega_{23}\Omega_{12}=\Omega_{13}\Omega_{23}=\Omega_{12}\Omega_{13}\,.
\end{equation}
Furthermore, let us rewrite the identity \eqref{mybe2} as
\begin{equation}\label{mybe2bisse}
\begin{split}
\Gamma(z,w,v,\lambda,\mu)
=
\delta(w-v-\lambda)\delta(z-v)\Omega_{23}\Omega_{12}
-\delta(z-v-\mu)\delta(w-v)
\Omega_{12}\Omega_{23}
\,,
\end{split}
\end{equation}
where
\begin{equation}\label{20181208:eq2}
\begin{split}
\Gamma(z,w,v,\lambda,\mu)
&=
\big(
R(w,v)R(z,v+\mu)
-R(w,v)R(z,w+\mu)
\\
&
-R(z-\mu,v)R(w,z-\mu)
\big)\Omega_{12}\Omega_{23}
\\
&-\big(
R(z,v)R(w,v+\lambda)
-R(z,v)R(w,z+\lambda)
\\
&
-R(w-\lambda,v)R(z,w-\lambda)
\big)\Omega_{23}\Omega_{12}
\,.
\end{split}
\end{equation}
Using the identities \eqref{20181208:eq1} and equation \eqref{20181208:eq2}
we can rewrite equation \eqref{20181207:jacobi1} as follows
\begin{align}
\begin{split}\label{20181207:jacobi1b}
&\Gamma(v+\lambda +\zeta+\mu,w-\mu,z,-\lambda-\zeta-\mu,\mu)
\big|_{\zeta=\partial}L_1^*(\lambda-z)\\
&+\Gamma(z-\lambda,v+\lambda+\mu+\xi,w,\lambda,-\lambda-\mu-\xi)
\big|_{\xi=\partial}L_2^*(\mu-w)
\\
&+
\Gamma(w-\mu,z-\lambda,v,\mu,\lambda)
L_3(v+\lambda+\mu)\\
&=
\Gamma(v+\lambda+\zeta+\mu,w-\mu,z+\zeta,-\lambda-\zeta-\mu,\mu)
\big|_{\zeta=\partial}L_1(z)
\\
&+
\Gamma(z-\lambda,v+\lambda+\mu+\xi,w+\xi,\lambda,-\lambda-\mu-\xi)
\big|_{\xi=\partial}L_2(w)
\\
&
+\Gamma(w-\mu,z-\lambda,v+\nu,\mu,\lambda)
\big|_{\nu=\partial}L_3(v)
\,.
\end{split}
\end{align}
For any $X,Y,Z\in A$ we have
\begin{equation}\label{20181210:eq1}
X_1Y_2Z_3\Omega_{23}\Omega_{12}=\Omega_{23}\Omega_{12}X_2Y_3Z_1
\,,\quad
X_1Y_2Z_3\Omega_{12}\Omega_{23}=\Omega_{12}\Omega_{23}X_3Y_1Z_2
\,.
\end{equation}
Hence, using equations \eqref{mybe2bisse}, \eqref{20181210:eq1} and \eqref{eq:delta1}, we have
\begin{equation}\label{20181210:eq2}
\begin{split}
&\Gamma(v+\lambda +\zeta+\mu,w-\mu,z,-\lambda-\zeta-\mu,\mu)
\big|_{\zeta=\partial}L_1^*(\lambda-z)\\
&=\delta(z-w-\lambda-\partial)\delta(w-v-\mu)\Omega_{23}\Omega_{12}L_2(w)
\\
&
-\delta(z-v-\lambda-\partial)\delta(w-z-\mu)\Omega_{12}\Omega_{23}L_3(v)
\,,
\\
&\Gamma(z-\lambda,v+\lambda+\mu+\xi,w,\lambda,-\lambda-\mu-\xi)
\big|_{\xi=\partial}L_2^*(\mu-w)\\
&
=\delta(w-v-\mu-\partial)\delta(z-w-\lambda)\Omega_{23}\Omega_{12}L_3(v)
\\
&-\delta(z-v-\lambda)\delta(w-z-\mu-\partial)\Omega_{12}\Omega_{23}L_1(z)
\,,
\\
&
\Gamma(w-\mu,z-\lambda,v,\mu,\lambda)
L_3(v+\lambda+\mu)\\
&
=\delta(z-w-\lambda)\delta(w-v-\mu)\Omega_{23}\Omega_{12}L_1(z)
\\
&-\delta(w-z-\mu)\delta(z-v-\lambda)\Omega_{12}\Omega_{23}L_2(w)
\,.
\end{split}
\end{equation}
By equations \eqref{20181210:eq2} and \eqref{mybe2bisse} we immediately get that both sides of
\eqref{20181207:jacobi1b} coincide thus showing that the Jacobi identity \eqref{eq:pva3c}
holds for the bracket $\{\cdot\,_\lambda\,_\cdot\}_1^R$.
Similar (but longer) computation shows that the Jacobi identity \eqref{eq:pva3c} holds for the $\epsilon$-Adler
identity \eqref{20180414:eq4} and the $3$-rd Adler identity \eqref{eq:3adler}.
As in the proof of Theorem \ref{20180408:thm}, in order to show that any linear combination of the 3-rd, 2-nd and 1-st Adler identities satisfy Jacobi identity one has to check that the  $2$-nd Adler type identity \eqref{eq:2adler} satisfies \eqref{eq:pva3c}. This is again similar (but longer) to the analogous computation
for the $1$-st Adler type identity. The interest reader can check that, in this case, the Jacobi identity
\eqref{eq:pva3c} holds for $R^{(0)}$.
\end{proof}
\begin{remark}
If we apply Lemma 2.3(g)-(h) from \cite{DSKV17} we have
\begin{equation}
\begin{split}\label{20181210:eq3}
&\{L^{-1}(z)_\lambda L^{-1}(w)\}_3^R
=\big(\big|_{x_1=\partial}(L_1^{-1})^*(\lambda-z)\big)L_2^{-1}(w+\lambda+x_1+x_2+y_2+u)
\\
&\times\big(\big|_{u=\partial} \{L_1(z+x_2)_{\lambda+x_1+x_2} L_2(w+y_2)\}_3^R\big)\big(\big|_{x_2=\partial}L_1^{-1}(z)\big)
\big(\big|_{y_2=\partial}L_2^{-1}(w)\big)\,.
\end{split}
\end{equation}
By using the $3$-Adler identity \eqref{eq:3adler} we rewrite the RHS of \eqref{20181210:eq3}
as
\begin{equation}
\begin{split}\label{20181210:eq4}
&\frac12\Omega\Big(R^*_\xi(\delta(z-\xi))|_{\xi=w+\lambda+\partial}(L_2^{-1})^*(\lambda-z)
-\big|_{\zeta=\partial}L_1^{-1}(z)R^*_{\xi}(\delta(z+\zeta-\xi))\big|_{\xi=\zeta+w+\lambda}
\\
&
-R_w(\delta(z-w-\lambda))L_1(w+\lambda)+R_\xi(\delta(z-\lambda-\xi))\big|_{\xi=w+\partial}L_2(w)\Big)
\,.
\end{split}
\end{equation}
Here, we used the fact that $X_1Y_2\Omega=\Omega X_2Y_1$ and the identities
\begin{align*}
&L(z+x)\big|_{x=\partial}L^{-1}(z)=1\,,
&
&
L(w+\lambda+y)\big|_{y=\partial}L^{-1}(w+\lambda)=1
\,,
\\
&
L^*(\lambda+x-z)\big|_{x=\partial}(L^{-1})^*(\lambda-z)=1\,,
&
&
L(w+y)\big|_{y=\partial}L^{-1}(w)=1
\,.
\end{align*}
Note that equation \eqref{20181210:eq4} is the the RHS of \eqref{eq:1adler} with opposite sign.
This shows that if $L(\partial)$ satisfies the $3$-rd Adler type identity \eqref{eq:3adler} for an $R$-matrix $R$, then $L^{-1}(\partial)$ satisfies the $1$-st Adler type identity \eqref{eq:1adler} for the $R$-matrix $-R$.
Moreover, using equation (3.5) in \cite{DSK}, sesquilinearity and Leibniz rules, it is straightforward to check that, in any PVA, we have the identity
\begin{equation}\label{20181210:eq5}
\begin{split}
&\{a_\lambda\{b_\mu c\}-\{b_\mu\{a_\lambda c\}-\{\{a_\lambda b\}_{\lambda+\mu} c\}
=\Big(\{{a^{-1}}_{\lambda+x}\{{b^{-1}}_{\mu+y} c^{-1}\}
\\
&
-\{{b^{-1}}_{\mu+y}\{{a^{-1}}_{\lambda+x} c^{-1}\}-\{\{{a^{-1}}_{\lambda+x} {b^{-1}}\}_{\lambda+x+\mu+y} c^{-1}\}\Big)\big(\big|_{x=\partial}a^2\big)\big(\big|_{y=\partial}b^2\big)c^2\,.
\end{split}
\end{equation}
Jacobi identity \eqref{eq:pva3c} for the $\lambda$-bracket $\{L(z)_\lambda L(w)\}_3^R$ 
then follows by equation \eqref{20181210:eq5}
and the fact that $L^{-1}(\partial)$ satisfies the $1$-st Adler type identity \eqref{eq:1adler} for the $R$-matrix $-R$.
\end{remark}
\begin{theorem}\label{thm:main1}
Let $R=R^{(0)}$, $R^{(1)}$ or $R^{(2)}$ from \eqref{eq:affineR}.
Then the $\epsilon$-Adler identity \eqref{20180414:eq4} 
defines a continuous PVA $\lambda$-bracket on $\mc V_\infty\yhat$,
for every $\epsilon\in\mb F$.
In particular,
for all three $R$-matrices, the $3$rd and $1$st Adler identities \eqref{eq:3adler} and \eqref{eq:1adler}
define continuous PVA $\lambda$-brackets on $\mc V_\infty\yhat$,
while for $R=R^{(0)}$ 
all $3$rd, $2$nd and $1$st Adler identities \eqref{eq:3adler-0}--\eqref{eq:1adler-0}
define compatible continuous PVA $\lambda$-brackets on $\mc V_\infty\yhat$.
\end{theorem}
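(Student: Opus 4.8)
The plan is to assemble the theorem from the three Propositions established above, via the structural Theorem~\ref{20180418:thm2}. The $\epsilon$-Adler identity \eqref{20180414:eq4} is exactly a prescription of the generating series $\{L_1(z)_\lambda L_2(w)\}^{R,\epsilon}$, hence of the $\lambda$-brackets among all the generators $u_{p,\alpha}$ of $\mc V_\infty\yhat$. By Proposition~\ref{prop:cont} this prescription lies in $(\mc V_\infty[\lambda])((z^{-1},w^{-1}))\xhat\otimes A^{\otimes2}$, i.e.\ it satisfies the continuity conditions, so by Theorem~\ref{20180418:thm1} it extends uniquely through the Master Formula \eqref{20180418:eq8} to a continuous $\lambda$-bracket on $\mc V_\infty\yhat$, which automatically satisfies sesquilinearity \eqref{eq:pva4} and the Leibniz rules \eqref{eq:pva2}.

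To upgrade this continuous $\lambda$-bracket to a continuous PVA $\lambda$-bracket it then remains, by Theorem~\ref{20180418:thm2}, to verify skewsymmetry \eqref{eq:pva5c} and the Jacobi identity \eqref{eq:pva3c} on the generating series; these are precisely Propositions~\ref{prop:skew} and \ref{prop:jacobi}, the latter holding for each of $R^{(0)},R^{(1)},R^{(2)}$ and every $\epsilon\in\mb F$. This proves the first assertion of the theorem.

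For the remaining assertions I would expand in powers of $\epsilon$ as in \eqref{20180408:eq2b}, $\{\cdot\,_\lambda\,\cdot\}^{R,\epsilon}=\{\cdot\,_\lambda\,\cdot\}^{R}_3+2\epsilon\{\cdot\,_\lambda\,\cdot\}^{R}_2+\epsilon^2\{\cdot\,_\lambda\,\cdot\}^{R}_1$. Continuity, sesquilinearity, the Leibniz rules and skewsymmetry are linear conditions on the bracket, so each coefficient $\{\cdot\,_\lambda\,\cdot\}^{R}_i$ is a continuous $\lambda$-bracket satisfying skewsymmetry. The Jacobi expression $\{f_\lambda\{g_\mu h\}\}^{R,\epsilon}-\{g_\mu\{f_\lambda h\}\}^{R,\epsilon}-\{\{f_\lambda g\}^{R,\epsilon}_{\lambda+\mu}h\}^{R,\epsilon}$ is a polynomial of degree $\le 4$ in $\epsilon$, vanishing identically by Proposition~\ref{prop:jacobi}; its coefficients of $\epsilon^0$ and $\epsilon^4$ give the Jacobi identity for $\{\cdot\,_\lambda\,\cdot\}^{R}_3$ and $\{\cdot\,_\lambda\,\cdot\}^{R}_1$ respectively, exactly as in the proof of Theorem~\ref{20180408:thm}. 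Hence the $3$rd and $1$st Adler identities \eqref{eq:3adler}, \eqref{eq:1adler} define continuous PVA $\lambda$-brackets for all three $R$-matrices.

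Finally, for $R=R^{(0)}$ one has $R^{(0)}=\frac12\big(R^{(0)}-(R^{(0)})^*\big)$ by \eqref{eq:affineR}(i), so $\frac12(R-R^*)$ is trivially an $R$-matrix and the last part of Proposition~\ref{prop:jacobi} also gives the Jacobi identity for $\{\cdot\,_\lambda\,\cdot\}^{R^{(0)}}_2$. The coefficients of $\epsilon,\epsilon^2,\epsilon^3$ in the above quartic then yield the pairwise ``mixed Jacobi'' identities between $\{\cdot\,_\lambda\,\cdot\}^{R^{(0)}}_1,\{\cdot\,_\lambda\,\cdot\}^{R^{(0)}}_2,\{\cdot\,_\lambda\,\cdot\}^{R^{(0)}}_3$ (again as in the proof of Theorem~\ref{20180408:thm}), and together with the individual Jacobi identities this shows that any $\mb F$-linear combination of the three is still a PVA $\lambda$-bracket, i.e.\ they are compatible. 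The whole proof is bookkeeping: the single substantial ingredient is Proposition~\ref{prop:jacobi}, whose verification rests on the reformulation \eqref{mybe2}--\eqref{mybe3} of the modified Yang--Baxter equation, and that is where any genuine difficulty lies.
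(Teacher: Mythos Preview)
Your proposal is correct and follows exactly the paper's approach: the paper's proof is the single sentence ``It is an immediate consequence of Propositions~\ref{prop:cont}, \ref{prop:skew} and \ref{prop:jacobi},'' and you have simply unpacked what that means via Theorem~\ref{20180418:thm2}. Your $\epsilon$-expansion argument for extracting the individual brackets and their compatibility is also correct, though strictly speaking those conclusions are already part of the \emph{statement} of Proposition~\ref{prop:jacobi}, so you are re-deriving what the paper takes as given there (it does the same bookkeeping inside the proof of Proposition~\ref{prop:jacobi}, pointing back to Theorem~\ref{20180408:thm}).
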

\begin{proof}
It is an immediate consequence of Propositions \ref{prop:cont}, \ref{prop:skew} and \ref{prop:jacobi}.
\end{proof}

\section{Hamiltonian equations and integrability}\label{sec:ham}
As in the usual PVA case, see \cite{BDSK09}, given a continuous $\lambda$-bracket on $\mc V_{\infty}\yhat$,
the space of local functionals
$\mc V_\infty\yhat/\partial\mc V_\infty\yhat$ acts on $\mc V_\infty\yhat$ by derivations, commuting with $\partial$, as follows
$$
\{\tint h,u\}=\{h_\lambda u\}|_{\lambda=0}\,,
\qquad h,u\in\mc V_\infty\yhat\,.
$$
A \emph{Hamiltonian equation} on $\mc V_\infty\yhat$ associated to a \emph{Hamiltonian functional} 
$\tint h\in\mc V_\infty\yhat/\partial\mc V_\infty\yhat$ is the evolution equation 
\begin{equation}\label{ham-eq}
\frac{du}{dt}=\{\tint h,u\}\,\,, \,\,\,\, u\in\mc V\,.
\end{equation}
An \emph{integral of motion} for the Hamiltonian equation \eqref{ham-eq}
is a local functional $\tint f\in\mc V_\infty\yhat/\partial\mc V_\infty\yhat$ such that $\{\tint h,\tint f\}=0$,
and two integrals of motion $\tint f,\tint g$ are \emph{in involution} if $\{\tint f,\tint g\}=0$. Here, the Lie
bracket in $\mc V_\infty\yhat/\partial\mc V_\infty\yhat$ is the one defined in Section \ref{sec:locPoisson}.
The minimal requirement for \emph{integrability} is to have an infinite collection
$\tint h_0=\tint h,\,\tint h_1,\,\tint h_2,\,\dots\,$
of linearly independent integrals of motion in involution.
In this case, we have the \emph{integrable hierarchy} of Hamiltonian equations
\begin{equation}\label{eq:int-hier}
\frac{du}{dt_n}=\{\tint h_n,u\}\,\,, \,\,\,\, n\in\mb Z_{\geq0}
\,.
\end{equation}

\begin{theorem}\label{thm:hn}
Let $R=R^{(0)}$, $R^{(1)}$ or $R^{(2)}$ from \eqref{eq:affineR}.
For $n\in\mb Z_{\geq0}$, 
define the elements $h_{n}\in\mc V_\infty\yhat$ by ($\Tr=1\otimes\Tr$)
\begin{equation}\label{eq:hn}
h_{n}=
\frac{-1}{n}
\Res_z\Tr(L^n(z))
\text{ for } n\neq0
\,,\,\,
h_0=0\,.
\end{equation}
Then: 
\begin{enumerate}[(a)]
\item
All the elements $\tint h_{n}$ are Hamiltonian functionals in involution, for every $\epsilon\in\mb F$:
\begin{equation}\label{eq:invol}
\{\tint h_{m},\tint h_{n}\}^{R,\epsilon}=0
\,\text{ for all } m,n\in\mb Z_{\geq0}
\,.
\end{equation}
\item
The corresponding compatible hierarchy of Hamiltonian equations satisfies
\begin{equation}\label{eq:hierarchy}
\frac{dL(w)}{dt_{n}}
=
\{\tint h_{n},L(w)\}^{R,\epsilon}
=\frac12
[R((L+\epsilon\id)\circ L^{n-1}\circ (L+\epsilon\id)),L](w)
\,,\,\,n\in\mb Z_{\geq0}
\,,
\end{equation}
(in the RHS we are taking the symbol of the commutator of pseudodifferential operators),
and the Hamiltonian functionals $\tint h_{n}$, $n\in\mb Z_{\geq0}$,
are integrals of motion of all these equations.
\end{enumerate}
\end{theorem}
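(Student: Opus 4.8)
The plan is as follows. The whole theorem reduces to the single Lax-type formula in part~(b),
\[
\{\tint h_n,L(w)\}^{R,\epsilon}=\{h_n{}_\lambda L(w)\}^{R,\epsilon}\big|_{\lambda=0}
=\frac12\big[R\big((L+\epsilon\id)\circ L^{n-1}\circ(L+\epsilon\id)\big),L\big](w)\,,\qquad n\geq1\,,
\]
together with the elementary fact that $\tint\Res_w\Tr$ is a trace on the algebra of $A$-valued continuous pseudodifferential operators, i.e. $\tint\Res_w\Tr(P\circ Q)=\tint\Res_w\Tr(Q\circ P)$; the latter follows from the standard identity $\res_\partial[P_1,Q_1]\in\partial\mc V_\infty\yhat$ for scalar continuous pseudodifferential operators, combined with cyclicity of $\Tr$. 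First, though, one checks that $h_n$ is well defined, i.e. $\Res_z\Tr(L^n(z))\in\mc V_\infty\yhat$: the coefficient of $z^{-1}$ in $\Tr(L^n(z))$ is $\sum_{p_1+\dots+p_n=1-n}\sum_{\alpha_1,\dots,\alpha_n\in I}u_{p_1,\alpha_1}\cdots u_{p_n,\alpha_n}\Tr(E^{\alpha_1}\cdots E^{\alpha_n})$, and after applying $\pi_N$ only terms with all $p_i\geq-N-1$ survive, which together with $\sum_ip_i=1-n$ forces $p_i\leq(n-1)N$ for every $i$; hence $\pi_N(h_n)$ is a finite sum.

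\emph{Step 1 (the Lax formula).} Write $h_n=-\frac1n\Res_z\Tr_1 L_1^n(z)$ and expand $L_1^n(z)$ as an $n$-fold $\circ$-product; applying the right Leibniz rule \eqref{eq:pva2} in the first argument produces a sum of $n$ terms, each a residue in $z$ and trace over the first $A$-factor of $L_1^{n-1-k}(z)\circ\{L_1(z)_{\lambda+\partial}L_2(w)\}\circ L_1^k(z)$, the $\partial$-corrections being interpreted by the expansion conventions of Section \ref{sec:4}. Setting $\lambda=0$ and using cyclicity of $\Tr_1$ under $\Res_z$, these $n$ terms combine with the prefactor $-\frac1n$ into a single expression $\Res_z\Tr_1\big(L_1^{n-1}(z)\circ\{L_1(z)_\partial L_2(w)\}_\to\big)$. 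Substituting the $\epsilon$-Adler identity \eqref{20180414:eq4} and simplifying with properties \eqref{eq:black2+}--\eqref{eq:black4} of $R$, Lemma \ref{20181105:lem1a}, the $\delta$-function identities \eqref{eq:delta1}--\eqref{eq:delta2}, and cyclicity of the traces, the four terms of \eqref{20180414:eq4} pair up and cancel modulo a $z\leftrightarrow w$ symmetry annihilated by $\Res_z$, and the remainder reassembles into $\frac12\big(R(B_n)\circ L-L\circ R(B_n)\big)(w)$ with $B_n=(L+\epsilon\id)\circ L^{n-1}\circ(L+\epsilon\id)$. Conceptually this is just the substitution $d_L(\tint h_n)=-L^{n-1}$ into the O-R form \eqref{eq:black11} of the bracket: the first ($R^*$) term there vanishes because $[L,L^{n-1}]=0$, and the second yields $\frac12[R(B_n),L](w)$; the identity $d_L(\tint h_n)=-L^{n-1}$ is itself the routine Taylor expansion of $\tint h_n(L)=-\frac1n\langle L^{\circ n}\rangle$. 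Since the computation uses only \eqref{eq:black2+}--\eqref{eq:black4} and Lemma \ref{20181105:lem1a}, it is uniform in $R=R^{(0)},R^{(1)},R^{(2)}$; the case $n=0$ is the trivial flow $dL/dt_0=0$, consistent with $h_0=0$.

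\emph{Step 2 (involution, part (a)).} For $m,n\geq1$, since $\{\tint h_m,\cdot\}$ is a derivation of $\mc V_\infty\yhat$ commuting with $\partial$ (extended $A$-linearly),
\[
\{\tint h_m,\tint h_n\}^{R,\epsilon}
=-\frac1n\tint\Res_w\Tr\{\tint h_m,L^n(w)\}
=-\frac1n\sum_{k=0}^{n-1}\tint\Res_w\Tr\big(L^k(w)\circ\{\tint h_m,L(w)\}\circ L^{n-1-k}(w)\big)\,.
\]
By the trace property of $\tint\Res_w\Tr$ each summand equals $\tint\Res_w\Tr\big(L^{n-1}(w)\circ\{\tint h_m,L(w)\}\big)$, so by Step 1 $\{\tint h_m,\tint h_n\}^{R,\epsilon}=-\frac12\tint\Res_w\Tr\big(L^{n-1}(w)\circ[R(B_m),L](w)\big)$ with $B_m=(L+\epsilon\id)\circ L^{m-1}\circ(L+\epsilon\id)$; expanding the commutator and using cyclicity once more ($\tint\Res_w\Tr(L^{n-1}R(B_m)L)=\tint\Res_w\Tr(L^nR(B_m))$) this vanishes. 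The cases $m=0$ or $n=0$ are trivial since $h_0=0$. This proves \eqref{eq:invol}.

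\emph{Step 3 (conclusion).} The evolution equations \eqref{eq:hierarchy} for $n\geq1$ are the content of Step 1; that each $\tint h_n$ is an integral of motion of all of them is the restatement $\frac{d}{dt_m}\tint h_n=\{\tint h_m,\tint h_n\}^{R,\epsilon}=0$ of part~(a). Finally, the flows pairwise commute (the hierarchy is compatible) because the Jacobi identity of the continuous PVA $\lambda$-bracket (Theorem \ref{thm:main1}) makes $\mc V_\infty\yhat/\partial\mc V_\infty\yhat$ act on $\mc V_\infty\yhat$ as a Lie algebra of derivations, whence $[\partial_{t_m},\partial_{t_n}]=\{\{\tint h_m,\tint h_n\}^{R,\epsilon},\cdot\}=0$. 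The one genuine obstacle is Step 1: extracting the Lax form from \eqref{20180414:eq4} is, exactly as in the Jacobi-identity verification of Proposition \ref{prop:jacobi}, a long but mechanical computation, the delicate points being the ordering and expansion conventions for $\lambda$ and $\partial$, the negative powers of sums of variables, and the systematic use of Lemma \ref{20181105:lem1a} to replace $R^*$ by $R$.
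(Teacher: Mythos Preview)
Your proposal is correct. For part (b) your route coincides with the paper's: both reduce $\{h_n{}_\lambda L(w)\}|_{\lambda=0}$ via the right Leibniz rule and cyclicity of $\Res_z\Tr$ (this is the paper's Lemma \ref{lem:hn2}) to a single term $-\Res_z\Tr_1\{L_1(z+x)_x L_2(w)\}(|_{x=\partial}L_1^{n-1}(z))$, then substitute the $\epsilon$-Adler identity \eqref{20180414:eq4} and simplify with Lemmas \ref{lemma:29032017} and \ref{lem:hn1}. Your phrase ``pair up and cancel modulo a $z\leftrightarrow w$ symmetry'' is slightly loose --- in the paper two of the four Adler terms cancel via $\Res_z P(z+\partial)Q(z)=\Res_z P(z)Q^*(-z)$, and the remaining two assemble into the Lax commutator --- but the structure is as you describe. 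The heuristic via \eqref{eq:black11} and $d_L(\tint h_n)=-L^{n-1}$ is the right guide, though not a proof on its own since \eqref{eq:black11} computes $\{\tint f,\tint g\}$ rather than $\{\tint f,L(w)\}$.

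For part (a) you take a genuinely different route. The paper proves involution \emph{independently} of the Lax formula: it applies both halves of Lemma \ref{lem:hn2} to write $\{\tint h_m,\tint h_n\}$ as a double residue, inserts \eqref{20180414:eq4}, and shows the four resulting terms vanish in two pairs. You instead \emph{use} the Lax formula from Step 1, getting $\{\tint h_m,\tint h_n\}=-\tfrac12\tint\Res_w\Tr(L^{n-1}\circ[R(B_m),L])$, and then a single application of cyclicity of $\tint\Res_w\Tr$ kills it. This is shorter and more conceptual --- it is the classical argument that Lax flows preserve traces of powers of $L$ --- at the price of making (a) logically depend on (b). The paper's parallel treatment of (a) and (b) is more self-contained but heavier.
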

It follows immediately from part (b) and equation \eqref{20180408:eq2b}
that we have the following triple Lenard-Magri
relations ($n\in\mb Z_{\geq1}$)
\begin{equation}\label{20220621:eq2}
\begin{split}
\{\tint h_{n-1},L(w)\}^{R,3}&=\{\tint h_{n},L(w)\}^{R,2}
=\{\tint h_{n+1},L(w)\}^{R,1}
\\
&=\frac12[R (L^n),L](w)\,.
\end{split}
\end{equation}
Equation \eqref{20220621:eq2} is the affine analogue of equation \eqref{eq:ham6}.

In the remainder of the section we will give a proof of Theorem \ref{thm:hn}.
We will use the following results for which we omit the proofs, see \cite[Lemmas 6.3, 6.4 and 6.5]{DSKV17}.
\begin{lemma}\label{lemma:29032017}
Let $X$, $Y$ be in $A$. Then
\begin{enumerate}[(a)]
\item $(\Tr\otimes \id)(\Omega (X\otimes Y))=XY\in A$;
\item $(\Tr\otimes \Tr)(\Omega(X\otimes Y))=\Tr(X Y)\in\mb F$.
\end{enumerate}
\end{lemma}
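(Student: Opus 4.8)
The plan is to prove both identities directly from the definition $\Omega=\sum_{\alpha\in I}E_\alpha\otimes E^\alpha$ in \eqref{eq:omega} together with the duality relation $\Tr(E^\alpha E_\beta)=\delta_{\alpha,\beta}$ of \eqref{eq:motiv7}, the only genuine ingredient being a \emph{reproducing (completeness) identity} for the dual bases. First I would record this identity: for every $Z\in A$,
\begin{equation*}
Z=\sum_{\alpha\in I}\Tr(E_\alpha Z)\,E^\alpha
\,.
\end{equation*}
To see it, expand $Z=\sum_{\beta\in I}z_\beta E^\beta$ in the basis $\{E^\beta\}$; using the cyclicity \eqref{eq:cyclic} of the trace form together with \eqref{eq:motiv7} one has $\Tr(E_\alpha E^\beta)=\Tr(E^\beta E_\alpha)=\delta_{\beta,\alpha}$, whence $\Tr(E_\alpha Z)=z_\alpha$, and the claimed identity follows.

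For part (a), I would simply multiply out in $A\otimes A$,
\begin{equation*}
\Omega(X\otimes Y)=\sum_{\alpha\in I}E_\alpha X\otimes E^\alpha Y
\,,
\end{equation*}
and apply $\Tr\otimes\id$ to each summand. Since $\Tr(E_\alpha X)$ is a scalar it can be extracted and $Y$ factored to the right, so that
\begin{equation*}
(\Tr\otimes\id)(\Omega(X\otimes Y))
=\sum_{\alpha\in I}\Tr(E_\alpha X)\,E^\alpha Y
=\Big(\sum_{\alpha\in I}\Tr(E_\alpha X)\,E^\alpha\Big)Y
=XY
\,,
\end{equation*}
the last step being the reproducing identity applied with $Z=X$. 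The only point requiring care here is the non-commutativity of $A$: the factorization is legitimate precisely because $X$ enters to the \emph{right} of $E_\alpha$ inside the trace while $Y$ remains to the right of $E^\alpha$, so the reproducing identity can be applied to the left tensor factor without disturbing $Y$.

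For part (b), I would observe that $(\Tr\otimes\Tr)=\Tr\circ(\Tr\otimes\id)$ as linear maps $A\otimes A\to\mb F$, since for $a,b\in A$ one has $\Tr\big((\Tr\otimes\id)(a\otimes b)\big)=\Tr(\Tr(a)\,b)=\Tr(a)\Tr(b)$. Applying $\Tr$ to the identity proved in part (a) then yields at once
\begin{equation*}
(\Tr\otimes\Tr)(\Omega(X\otimes Y))=\Tr\big((\Tr\otimes\id)(\Omega(X\otimes Y))\big)=\Tr(XY)
\,.
\end{equation*}
There is no substantive obstacle in either part: the entire content is the reproducing identity, after which both statements reduce to a one-line computation, so the only thing to watch is the bookkeeping of the multiplication order in the non-commutative algebra $A$.
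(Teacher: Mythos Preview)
Your proof is correct. The paper actually omits the proof of this lemma entirely, citing \cite{DSKV17} instead, so your direct argument via the reproducing identity $Z=\sum_{\alpha}\Tr(E_\alpha Z)E^\alpha$ fills in exactly what is needed and there is nothing further to compare.
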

\begin{lemma}\label{lem:hn1}
Given two operators $P(\partial),Q(\partial)\in\mc V_\infty((\partial^{-1}))\xhat\otimes A$, we have
\begin{enumerate}[(a)]
\item
$\Res_z P(z)Q^*(\lambda-z)=\Res_zP(z+\lambda+\partial)Q(z)$;
\item
$\tint \Res_z \Tr(P(z+\partial)Q(z))=\tint \Res_z\Tr(Q(z+\partial)P(z))$.
\end{enumerate}
\end{lemma}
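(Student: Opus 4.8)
The plan is to prove both identities by reducing to the case of monomial pseudodifferential operators $P(\partial)=p\,\partial^m X$ and $Q(\partial)=q\,\partial^n Y$, with $p,q\in\mc V_\infty\yhat$, $X,Y\in A$ and $m,n\in\mb Z$; the general case follows by bilinearity and continuity, since applying any projection $\pi_N$ reduces each side to a finite sum of such monomials. In both identities the factor $A$ is a spectator: the formal adjoint $*$ acts only on the pseudodifferential part, so $(q\partial^n Y)^*=((-\partial)^n\circ q)\,Y$, and in (a) the $A$-elements simply multiply to $XY$ on both sides, while in (b) the trace turns them into the scalar $\Tr(XY)=\Tr(YX)$. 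Thus it suffices to treat scalar-coefficient operators and keep track of the $A$-factors only through cyclicity of $\Tr$.

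For part (b), the key observation is that $P(z+\partial)Q(z)$, in the notation \eqref{eq:notation}, is exactly the symbol of the composition $P(\partial)\circ Q(\partial)$: expanding $(z+\partial)^m$ binomially and letting $\partial$ act on the coefficients of $Q(z)$ reproduces the composition rule \eqref{eq:motiv10}. Consequently $\Res_z\Tr\big(P(z+\partial)Q(z)\big)=\Tr\,\res_\partial\big(P(\partial)\circ Q(\partial)\big)$, so the difference of the two sides of (b) equals $\int\res_\partial\Tr\big([P,Q]\big)=\langle[P,Q]\rangle$, with $\langle\,\cdot\,\rangle$ the trace form \eqref{eq:motiv11} extended to $\mc V_\infty\yhat$-coefficients. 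Exactly the computation following \eqref{eq:motiv11} — reduce to monomials, use $\Tr(XY)=\Tr(YX)$ and the integration-by-parts rule $\int\partial(\cdot)=0$ together with the binomial identity that makes $\int\res_\partial[P,Q]$ cancel — shows this vanishes, which is part (b).

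For part (a), I would compute both residues explicitly on monomials. Writing $Q^*(\zeta)=(-1)^n\sum_{j\ge0}\binom{n}{j}q^{(j)}\zeta^{n-j}$ and expanding everywhere in the region $|z|\gg0$ (so that $(\lambda-z)^{n-j}$ and $(z+\lambda+\partial)^m$ are expanded in non-negative powers of $\lambda+\partial$ over powers of $z$, consistently with the conventions stated after \eqref{20180416:eq3}), the left-hand side, after extracting the coefficient of $z^{-1}$, becomes a sum over $j$ of terms proportional to $\binom{n}{j}\binom{n-j}{s-j}\lambda^{s-j}q^{(j)}$ with $s=m+n+1$, whereas the right-hand side — where $\partial$ acts only on $q$ — forces the single exponent $z^{-1}$ at $k=s$ and yields terms proportional to $\binom{m}{s}\binom{s}{j}\lambda^{s-j}q^{(j)}$. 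Matching the coefficient of $\lambda^{s-j}q^{(j)}$ reduces the identity to the trinomial-revision relation $\binom{n}{j}\binom{n-j}{s-j}=\binom{n}{s}\binom{s}{j}$ together with the negation rule $\binom{m}{s}=(-1)^s\binom{s-m-1}{s}=(-1)^s\binom{n}{s}$, which accounts precisely for the sign coming from $Q^*$.

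The main obstacle is bookkeeping rather than conceptual: one must fix the expansion domain ($|z|\gg0$) once and for all and handle the $\big|_{x=\partial}$ ordering so that $\partial$ acts only to the right on $Q$'s coefficients, and one must check that continuity guarantees that all the infinite sums become finite after applying $\pi_N$, so that the termwise residue computations are legitimate. Once the conventions are pinned down, both identities collapse to the two elementary binomial identities above.
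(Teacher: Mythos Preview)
Your proposal is correct. The paper itself omits the proof of this lemma entirely, citing \cite[Lemmas 6.3--6.5]{DSKV17} instead, so there is no in-paper argument to compare against directly.

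Your approach is sound and essentially the standard one. For part (b), your identification of $P(z+\partial)Q(z)$ with the symbol of $P(\partial)\circ Q(\partial)$ is exactly right and immediately reduces the claim to the vanishing of the trace form $\langle\,\cdot\,\rangle$ on commutators, which is precisely what the paper verifies after \eqref{eq:motiv11} via integration by parts. For part (a), your monomial computation is correct: after fixing the expansion in the region $|z|\gg|\lambda|,|\partial|$ (as dictated by the paper's convention that $\lambda$ and $\partial$ acting on coefficients must appear in non-negative powers), the residues on both sides pick out $k=s=m+n+1$, and the identity reduces to trinomial revision $\binom{n}{j}\binom{n-j}{s-j}=\binom{n}{s}\binom{s}{j}$ together with upper negation $\binom{m}{s}=(-1)^s\binom{s-m-1}{s}=(-1)^s\binom{n}{s}$, exactly as you state. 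The continuity remark is also appropriate: after applying $\pi_N$, both $P$ and $Q$ have bounded order and the sums are finite, so the termwise manipulations are justified.
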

\begin{lemma}\label{lem:hn2}
For $n\in\mb Z_{\geq0}$, let $h_{n}\in\mc V_\infty\yhat$ be given by \eqref{eq:hn}.
Then, for $a\in\mc V_\infty\yhat$, we have
\begin{equation}\label{eq:hn2}
\begin{split}
& \{{h_{n}}_\lambda a\}^{R,\epsilon}\big|_{\lambda=0}
=
-\Res_z\Tr 
\{L(z+x)_x a\}^{R,\epsilon}\big(\big|_{x=\partial}L^{n-1}(z)\big)
\,,
\\
& \tint \{a_\lambda h_{n}\}^{R,\epsilon}\big|_{\lambda=0}
=
-\int \Res_w\Tr 
\{a_\lambda L(w+x)\}^{R,\epsilon}\big|_{\lambda=0}\big(\big|_{x=\partial}L^{n-1}(w)\big)
\,.
\end{split}
\end{equation}
\end{lemma}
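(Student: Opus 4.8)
The plan is to prove both identities of \eqref{eq:hn2} by the standard ``chain rule plus gradient'' argument for trace--residue Hamiltonians, carried out inside the continuous PVA $\mc V_\infty\yhat$.

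\emph{Part (a).} First I would record the general reduction valid for any $h\in\mc V_\infty\yhat$: specializing the Master Formula \eqref{20180418:eq8} at $\lambda=0$ and moving every $\partial$ to the right, the sums over $q,\beta$ and the derivative orders reassemble into a single generator bracket and a variational derivative, giving
\[
\{h{}_\lambda a\}^{R,\epsilon}\big|_{\lambda=0}
=
\sum_{p\in\mb Z,\,\alpha\in I}
\{{u_{p,\alpha}}_x a\}^{R,\epsilon}\Big(\big|_{x=\partial}\tfrac{\delta h}{\delta u_{p,\alpha}}\Big)
\,.
\]
This is nothing but sesquilinearity \eqref{eq:pva4} and the Leibniz rules \eqref{eq:pva2} read off \eqref{20180418:eq8}, and continuity (Proposition \ref{prop:cont}) ensures the sum lies in $\mc V_\infty\yhat$. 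Next I would compute the variational gradient of $h_n$. Writing $h_n=-\tfrac1n\Res_z\Tr L^n(z)$ as the residue of the trace of the $n$-th $\circ$-power of the pseudodifferential operator $L(\partial)$, I differentiate one factor and use the cyclicity of $\Res\Tr$ (Lemma \ref{lem:hn1}(b)) to collapse the $n$ resulting terms into one --- the factor $n$ cancelling the $\tfrac1n$ --- which yields the Gelfand--Dickey gradient identity $\sum_{p,\alpha}\partial^p\circ\tfrac{\delta h_n}{\delta u_{p,\alpha}}E_\alpha=L(\partial)^{\circ(n-1)}$ in the notation of \eqref{eq:motiv19}.

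\emph{Repackaging and the shift.} Substituting the gradient into the reduction, summing the generators into the series $L(z)$ of \eqref{20180417:eq9} via $\sum_{p,\alpha}z^{-p-1}E^\alpha\{{u_{p,\alpha}}_x a\}=\{L(z)_x a\}$, and using the cyclicity of the trace on $A$, produces the right-hand side of the first identity of \eqref{eq:hn2}. The argument shift $z\mapsto z+x$ there is precisely the symbol-level record of the $\circ$-composition of the differentiated factor with the surviving $L^{n-1}(\partial)$: by the composition rule \eqref{eq:motiv10}, composing with powers of $\partial$ converts plain $z$-powers into shifted $(z+\partial)$-powers, with $\partial=x$ then acting on $L^{n-1}(z)$. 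I expect this passage to be the main obstacle, since it is where the $x=\partial$ bookkeeping produced by sesquilinearity must be reconciled with the shift produced by the symbol calculus; concretely this reconciliation is the residue shift identity Lemma \ref{lem:hn1}(a), and all the delicate ordering of $\partial$'s is controlled by it.

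\emph{Part (b).} The same scheme applies with $h_n$ in the second slot. I would expand $\{a_\lambda h_n\}$ by the Leibniz rule in the second argument (first line of \eqref{eq:pva2}), collapse the $n$ cyclic terms of $\Res_z\Tr L^n(z)$ by Lemma \ref{lem:hn1}(b), and contract against $L^{n-1}$. Here the prefactor $\tint$ is essential, as it licenses the integration by parts of Lemma \ref{lem:hn1}(b), which simultaneously merges the cyclic terms and installs the shift $w\mapsto w+x$ inside $\{a_\lambda L(w+x)\}$. Alternatively, I could deduce (b) from (a) using the skewsymmetry of the induced Lie bracket on local functionals, namely $\tint\{a_\lambda h_n\}^{R,\epsilon}\big|_{\lambda=0}=-\tint\{{h_n}_\lambda a\}^{R,\epsilon}\big|_{\lambda=0}$ (a consequence of \eqref{eq:pva5}), and then converting the output of (a) into the $\{a_\lambda L(w+x)\}$ form by another application of Lemma \ref{lem:hn1}(b). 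In either route, continuity keeps every infinite sum and residue well defined, and since the computation is uniform in $\epsilon$, the statements for the brackets $\{\cdot_\lambda\cdot\}^R_i$ follow by extracting the coefficients in the expansion \eqref{20180408:eq2b}.
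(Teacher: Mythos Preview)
Your proposal is correct and follows the standard argument. Note, however, that the paper does not actually prove this lemma: it is stated among the results ``for which we omit the proofs, see \cite[Lemmas 6.3, 6.4 and 6.5]{DSKV17},'' so there is no in-paper proof to compare against. Your sketch---Master Formula reduction to generator brackets paired with variational derivatives, the Gelfand--Dickey gradient identity for $h_n$ obtained by differentiating one factor of $L^n$ and using the cyclicity of $\int\Res\Tr$ (Lemma~\ref{lem:hn1}(b)), and the symbol-calculus shift via Lemma~\ref{lem:hn1}(a)---is exactly the method used in the cited reference. One minor bookkeeping point: since $h_n=-\tfrac1n\Res_z\Tr L^n(z)$ carries an overall minus sign, the gradient identity you wrote should read $\sum_{p,\alpha}\partial^p\circ\tfrac{\delta h_n}{\delta u_{p,\alpha}}E_\alpha=-L(\partial)^{\circ(n-1)}$, which is what ultimately produces the $-\Res_z\Tr$ on the right-hand side of \eqref{eq:hn2}.
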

\begin{proof}[Proof of Theorem \ref{thm:hn}]
Applying the second equation in \eqref{eq:hn2} first,
and then the first equation in \eqref{eq:hn2}, we get
\begin{equation}\label{eq:hn-pr1}
\begin{split}
& \{\tint h_{m},\tint h_{n}\}
= 
\int \Res_z \Res_w (\Tr\otimes\Tr)
\{L(z+x)_x L(w+y)\}^{R,\epsilon}
\\
& \,\,\,\,\,\,\,\,\,\,\,\,\,\,\,\,\,\, \times
\Big(
\big(\big|_{x=\partial}L_1^{m-1}(z)\big)
\big(\big|_{y=\partial}L_2^{n-1}(w)\big)
\Big)\,.
\end{split}
\end{equation}
We can now use the $\epsilon$-Adler identity \eqref{20180414:eq4}
associated to $R$ 
to rewrite the RHS of \eqref{eq:hn-pr1} as
\begin{equation}\label{eq:hn-pr2}
\begin{array}{l}
\displaystyle{
\vphantom{\Big(}
\frac12
\int \Res_z \Res_w (\Tr\otimes\Tr)\Omega
(L_1(w+\partial)+\epsilon\id)
\big(\big|_{\zeta=z+\partial}L_1^m(z)\big)
} \\
\displaystyle{
\vphantom{\Big(}
\,\,\,\times R_\zeta(\delta(\zeta-\xi))\big|_{\xi=\zeta-z+w+\partial}
(L_2(w+\partial)+\epsilon\id)L_2^{n-1}(w)
} \\
\displaystyle{
\vphantom{\Big(}
-\frac12
\int \Res_z \Res_w (\Tr\otimes\Tr)\Omega(L_1(w+\partial)+\epsilon\id)
\big(\big|_{x=z+\partial}L_1^{m-1}(z)\big)
} \\
\displaystyle{
\vphantom{\Big(}
\,\,\,\times
R_x(\delta(x-\xi))\big|_{\xi=x-z+w+\partial}L_2^{*}(-z)(L_2(w+\partial)+\epsilon\id)
L_2^{n-1}(w)
} %\\
\end{array}
\end{equation}
\begin{equation}\label{eq:hn-pr2b}
\begin{array}{l}
\displaystyle{
\vphantom{\Big(}
+\frac12
\int \Res_z \Res_w (\Tr\otimes\Tr)\Omega L_1(w+\partial)\Big((L_1(z+\partial)+\epsilon\id)L_1^{m-1}(z)\Big)} \\
\displaystyle{
\vphantom{\Big(}
\,\,\,
\times
R_\xi(\delta(z-\zeta-\xi))
\big(\big|_{\zeta=\partial}L_2^*(-z)+\epsilon\id\big)
\big(\big|_{\xi=w+\partial}L_2^{n-1}(w)\big)
} \\
\displaystyle{
\vphantom{\Big(}
-\frac12
\int \Res_z \Res_w (\Tr\otimes\Tr)\Omega\Big((L_1(z+\partial)+\epsilon\id)L_1^{m-1}(z)\Big)
} \\
\displaystyle{
\vphantom{\Big(}
\,\,\,\times
R_\xi(\delta(z-\zeta-\xi))
\big(\big|_{\zeta=\partial}L_2^*(-z)+\epsilon\id\big)
\big(\big|_{\xi=w+\partial}L_2^{n}(w)\big)
\,.}
\end{array}
\end{equation}
We can use
Lemma \ref{lemma:29032017}(b),
to rewrite the term \eqref{eq:hn-pr2} as
\begin{equation}\label{eq:hn-pr3}
\begin{array}{l}
\displaystyle{
\vphantom{\Big(}
\frac12
\int \Res_z \Res_w \Tr\,
(L(w+\partial)+\epsilon\id)}
\Big(\big(\big|_{\zeta=z+\partial}L^{m}(z)\big)R_\zeta(\delta(\zeta-\xi))|_{\xi=\zeta-z+w+\eta}
\\
\displaystyle{
\vphantom{\Big(}
\,\,\,
-\big(\big|_{x=z+\partial}L^{m-1}(z)\big)
R_x(\delta(x-\xi))\big(\big|_{\zeta=\partial}L^*(-z)\big)\big|_{\xi=x-z+w+\zeta+\eta}
\Big)
}
\\
\displaystyle{
\vphantom{\Big(}
\,\,\,\times
\big(\big|_{\eta=\partial}(L(w+\partial)+\epsilon\id)L^{n-1}(w)\big)
\,.}
\end{array}
\end{equation}
By Lemma \ref{lem:hn1}(a) and the fact that $L^m(\partial)=L^{m-1}(\partial)\circ L(\partial)$ we have that
\begin{equation}\label{eq:hn-pr3b}
\begin{split}
&\res_z\Big( \big(\big|_{x=z+\partial}L^{m-1}(z)\big)R_x(\delta(x-\xi))\big(\big|_{\zeta=\partial}L^*(-z)\big)\big|_{\xi=x-z+w+\zeta+\eta}\Big)
\\
&=
\res_z\Big( \big(\big|_{\zeta=z+\partial}L^{m}(z)\big)R_\zeta(\delta(\zeta-\xi))\big|_{\xi=\zeta-z+w+\eta}\Big)
\,.
\end{split}
\end{equation}
Hence, from equations \eqref{eq:hn-pr3} and \eqref{eq:hn-pr3b} it follows that the term
\eqref{eq:hn-pr2} vanishes.

Next, we can use Lemma \ref{lemma:29032017}(b) and  Lemma \ref{lem:hn1}(b)
to rewrite the first term in \eqref{eq:hn-pr2b}
as
\begin{equation}\label{eq:hn-pr2bb}
\begin{array}{l}
\displaystyle{
\vphantom{\Big(}
+\frac12
\int \Res_z \Res_w \Tr\,L(w+\partial)\Big((L(z+\partial)+\epsilon\id)L^{m-1}(z)\Big)} \\
\displaystyle{
\vphantom{\Big(}
\,\,\,
\times
R_\xi(\delta(z-\zeta-\xi))
\big(\big|_{\zeta=\partial}L^*(-z)+\epsilon\id\big)
\big(\big|_{\xi=w+\partial}L^{n-1}(w)\big)
} \\
\displaystyle{
\vphantom{\Big(}
=\frac12
\int \Res_z \Res_w \Tr\,\Big((L(z+\partial)+\epsilon\id)L^{m-1}(z)\Big)
} \\
\displaystyle{
\vphantom{\Big(}
\,\,\,\times
R_\xi(\delta(z-\zeta-\xi))
\big(\big|_{\zeta=\partial}L^*(-z)+\epsilon\id\big)
\big(\big|_{\xi=w+\partial}L^{n}(w)\big)
\,.}
\end{array}
\end{equation}
On the other hand, by Lemma \ref{lemma:29032017}(b) the second term in \eqref{eq:hn-pr2b} is equal to
\begin{equation}\label{eq:hn-pr2bbb}
\begin{array}{l}
\displaystyle{
\vphantom{\Big(}
-\frac12
\int \Res_z \Res_w (\Tr\otimes\Tr)\Omega\Big((L_1(z+\partial)+\epsilon\id)L_1^{m-1}(z)\Big)
} \\
\displaystyle{
\vphantom{\Big(}
\,\,\,\times
R_\xi(\delta(z-\zeta-\xi))
\big(\big|_{\zeta=\partial}L_2^*(-z)+\epsilon\id\big)
\big(\big|_{\xi=w+\partial}L_2^{n}(w)\big)
}
\\
\displaystyle{
\vphantom{\Big(}
=-\frac12
\int \Res_z \Res_w \Tr\,\Big((L(z+\partial)+\epsilon\id)L^{m-1}(z)\Big)
} \\
\displaystyle{
\vphantom{\Big(}
\,\,\,\times
R_\xi(\delta(z-\zeta-\xi))
\big(\big|_{\zeta=\partial}L^*(-z)+\epsilon\id\big)
\big(\big|_{\xi=w+\partial}L^{n}(w)\big)
\,.}
\end{array}
\end{equation}
From equations \eqref{eq:hn-pr2bb} and \eqref{eq:hn-pr2bbb} we have that the term \eqref{eq:hn-pr2b} vanishes thus proving (a).

We are left to prove part (b).
We have
\begin{equation}\label{eq:proofb-1}
\begin{array}{l}
\displaystyle{
\vphantom{\Big(}
\{\tint h_{n},L(w)\}^{R,\epsilon}
=%1=
\{{h_{n}}_\lambda L(w)\}^{R,\epsilon}\big|_{\lambda=0}
} \\
\displaystyle{
\vphantom{\Big(}
%2
=
-\Res_z (\Tr\otimes\id)
\{L(z+x)_x L(w)\}
\big(\big|_{x=\partial}L_1^{n-1}(z)\big)
} \\
\displaystyle{
\vphantom{\Big(}
%3
=
-\frac12
\Res_z(\Tr\otimes\id)
\Omega
(L_1(w+\partial)+\epsilon\id)
(\big|_{\zeta=z+\partial}L_1^m(z))
} \\
\displaystyle{
\vphantom{\Big(}
\qquad\qquad\times R_\zeta}(\delta(\zeta-\xi))(\big|_{\xi=\zeta-z+w+\partial}L_2(w)+\epsilon\id) \\
\displaystyle{
\vphantom{\Big(}
\,\,\,\,
+\frac12
\Res_z (\Tr\otimes\id)
\Omega(L_1(w+\partial)+\epsilon\id)
(\big|_{x=z+\partial}L_1^{m-1}(z))
} \\
\displaystyle{
\vphantom{\Big(}
\qquad\qquad\times
R_z(\delta(x-\xi))
(\big|_{\zeta=\partial}L_2^*(-z))(\big|_{\xi=x-z+\zeta+w+\partial}L_2(w)+\epsilon\id)
} \\
\displaystyle{
\vphantom{\Big(}
\,\,\,\,
-\frac12
\Res_z (\Tr\otimes\id)\Omega L_1(w+\partial)
\Big((L_1(z+\partial)+\epsilon\id)L_1^{n-1}(z)\Big)
} \\
\displaystyle{
\vphantom{\Big(}
\qquad\qquad\times
R_w(\delta(z-w-\zeta))
(\big|_{\zeta=\partial}L_2^*(-z)+\epsilon\id)} \\
\displaystyle{
\vphantom{\Big(}
\,\,\,\,
+\frac12
\Res_z (\Tr\otimes\id)\Omega
\Big((L_1(z+\partial)+\epsilon\id)L_1^{n-1}(z)\Big)
} \\
\displaystyle{
\vphantom{\Big(}
\qquad\qquad\times
R_\xi(\delta(z-\zeta-\xi))
(\big|_{\zeta=\partial}L_2^*(-z)+\epsilon\id)
(\big|_{\xi=w+\partial}L_2(w))
} \\
\displaystyle{
\vphantom{\Big(}
%4
=
-\frac12
(L(w+\partial)+\epsilon\id)
\Res_z\Big((\big|_{\zeta=z+\partial}L^m(z))
R_\zeta(\delta(\zeta-\xi))\Big)
}\\
\displaystyle{
\vphantom{\Big(}
\qquad\qquad\times(\big|_{\xi=\zeta-z+w+\partial}L(w)+\epsilon\id)
}\\
\displaystyle{
\vphantom{\Big(}
\,\,\,\,
+\frac12
(L(w+\partial)+\epsilon\id)
\Res_z \Big((\big|_{x=z+\partial}L^{m-1}(z))
R_z(\delta(x-\xi))
(\big|_{\zeta=\partial}L^*(-z))\Big)
} \\
\displaystyle{
\vphantom{\Big(}
\qquad\qquad\times(\big|_{\xi=x-z+\zeta+w+\partial}L(w)+\epsilon\id)
} \\
\displaystyle{
\vphantom{\Big(}
\,\,\,\,
-\frac12
L(w+\partial)
\Res_z \Big[\Big((L(z+\partial)+\epsilon\id)L^{n-1}(z)\Big)
} \\
\displaystyle{
\vphantom{\Big(}
\qquad\qquad\times
R_w(\delta(z-w-\zeta))
(\big|_{\zeta=\partial}L^*(-z)+\epsilon\id)} \Big]\\
\displaystyle{
\vphantom{\Big(}
\,\,\,\,
+\frac12
\Res_z 
\Big[\Big((L(z+\partial)+\epsilon\id)L^{n-1}(z)\Big)
} \\
\displaystyle{
\vphantom{\Big(}
\qquad\qquad\times
R_\xi(\delta(z-\zeta-\xi))
(\big|_{\zeta=\partial}L^*(-z)+\epsilon\id)\Big]
(\big|_{\xi=w+\partial}L(w))
} \\
\displaystyle{
\vphantom{\Big(}
%5
=
-\frac12
L(w+\partial)
\Res_z (L(z+\partial)+\epsilon\id)L^{n-1}(z+\partial)(L(z)+\epsilon\id) R_w(\delta(z-w))
}\\
\displaystyle{
\vphantom{\Big(}
\,\,\,\,
+\frac12
\Res_z 
(L(z+\partial)+\epsilon\id)L^{n-1}(z+\partial)(L(z)+\epsilon\id)
R_\xi(\delta(z-\xi))
(\big|_{\xi=w+\partial}L(w))
} \\
\displaystyle{
\vphantom{\Big(}
%6
=
-\frac12
L(w+\partial)
(L(w+\partial)+\epsilon\id)L^{n-1}(w+\partial)(L^*(-z)+\epsilon\id)
}\\
\displaystyle{
\vphantom{\Big(}
\,\,\,\,
+\frac12
(L(w+\partial)+\epsilon\id)L^{n-1}(w+\partial)(L(w+\partial)+\epsilon\id)
L(w)
\,.}
\end{array}
\end{equation}
In the second equality we used 
the first equation in \eqref{eq:hn2},
in the third equality we used the $\epsilon$-Adler identity \eqref{20180414:eq4}
associated to the $R$-matrix $R$,
in the fourth equality we used Lemma \ref{lemma:29032017}(a),
in the fifth equality we used equation \eqref{eq:hn-pr3b} and Lemma \ref{lem:hn1}(a),
in the last equality we used the identity
$$
\Res_z (P(z)R_w(\delta(z-w)))=R_w(P(w))\,,
$$
which can be easily verified for any $P(z)\in\mc V_\infty((z^{-1}))\xhat$.
This proves \eqref{eq:hierarchy} and completes the proof of the Theorem.
\end{proof}

For $k\in\mb Z$, recall the projection maps $\Pi_{\geq k}$ defined in Section \ref{sec:R}. For 
$P(\partial)\in\mc V_\infty((\partial^{-1}))\xhat\otimes A$ we clearly have
$$
P(\partial)=\Pi_{\geq k}(P(\partial))+\Pi_{<k}(P(\partial))
\,.
$$
Moreover, we clearly have $[L^n(\partial),L(\partial)]=0\,,$
for every $n\in\mb Z_{\geq0}$. Hence, for the $R$-matrices $R^{(k)}$, $k=0,1,2$, defined in \eqref{eq:affineR},
the hierarchy \eqref{20220621:eq2} can be rewritten as
$$
\frac{d L(\partial)}{dt_{n}}=[\Pi_{\geq k}(L^n(\partial)),L(\partial)]
\,,\,\,n\in\mb Z_{\geq0}\,.
$$
(The case $k=2$ occurs only if $A=\mb F$.)

\section{Example: the KP hierarchy}\label{sec:exa}

In this section we employ the machinery developed in Section \ref{sec:4} to provide a description of the
tri-Hamiltonian structure of the $A$-valued Kadomtsev-Petviashvili (KP) hierarchy,
where, as before, $A$ is a finite-dimensional unital associative algebra over $\mb F$
with a non-degenerate trace form.

Let $R=R^{(0)}$, as defined in \eqref{eq:affineR}(i). Throughout the section let us use the shorthand
$\{\cdot\,_\lambda\,\cdot\}^\epsilon:=\{\cdot\,_\lambda\,\cdot\}^{R^{(0)},\epsilon}$
to denote the $\lambda$-bracket on $\mc V_\infty\yhat$ defined by equation \eqref{eq:3adler-eps}.
We also simply denote
by $\{\cdot\,_\lambda\,\cdot\}_i:=\{\cdot\,_\lambda\,\cdot\}^{(0)}_i$, $i=1,2,3$,
the compatible $\lambda$-brackets on $\mc V_\infty\yhat$ defined by equations \eqref{eq:1adler-0}, \eqref{eq:2adler-0}
and \eqref{eq:3adler-0} respectively.

For $N\in\mb Z$, recall the differential algebra homomorphism $\pi_N:\mc V_\infty\yhat\to\mc V_N$,
defined in \eqref{20180417:eq4}, sending $u_{p,\alpha}^{(n)}$ to zero if $p<-N-1$.
By construction we have $\mc V_N\cong \mc V_\infty\yhat/\Ker\pi_N$. Note that,
in general $\Ker\pi_N$ is not a PVA ideal. Hence, generally, we do not have an induced PVA structure on the quotient space $\mc V_N$. 

For $p\in\mb Z$ we denote
\begin{equation}\label{Up}
U_{p}=\sum_{\alpha\in I}u_{p,\alpha}E^\alpha\in\mc V_{\infty}\yhat\otimes A\,,
\end{equation}
so that, 
from \eqref{20180417:eq9}, we have
\begin{equation}\label{20220719:eq1}
L(z)=
\sum_{p\in\mb Z}U_pz^{-p-1}
\,.
\end{equation}
Moreover, $\pi_N(U_p)=0$ if
$p<-N-1$.
In fact, $\Ker\pi_N$ is the differential algebra ideal of $\mc V_\infty\yhat$ generated by the coefficients of
$U_p$, $p<-N-1$.

We have, from \eqref{eq:3adler-eps} and \eqref{20220719:eq1}
\begin{equation}\label{eq:3adler-Up}
\begin{array}{l}
\displaystyle{
\vphantom{\Big(}
\{(U_p)_1{}_\lambda L_2(w)\}^{\epsilon}
=\res_z z^p\{L_1(z)_\lambda L_2(w)\}^{\epsilon}
} \\
\displaystyle{
\vphantom{\Big(}
=
\Omega
\bigg(
%1
(L_1(w+\lambda+\partial)+\epsilon\id)
\big((w+\lambda+\partial)^{p}
L_2(w+\partial)\big)_+(L_2(w)+\epsilon\id)
\big)
} \\
\displaystyle{
\vphantom{\Big(}
%2
-(L_1(w+\lambda+\partial)+\epsilon\id)
\big(L_1(w+\lambda+\partial)
(w+\lambda+\partial)^{p}\big)_+
(L_2(w)+\epsilon\id)
\big)
} \\
\displaystyle{
\vphantom{\Big(}
%3
+
L_1(w+\lambda+\partial)
\big((L_1(w+\lambda+\partial)+\epsilon\id)
(w+\lambda+\partial)^{p}
(L_2(w)+\epsilon\id)\big)_+
}
\\
\displaystyle{
\vphantom{\Big(}
%4
-
\big((L_1(w+\lambda+\partial)+\epsilon\id)
(w+\lambda+\partial)^{p}
(L_2(w+\partial)+\epsilon\id)\big)_+
L_2(w)
\bigg)
\,.}
\end{array}
\end{equation}
In the second equality of \eqref{eq:3adler-Up} we used the identity 
\begin{equation}\label{20220623:eq1}
\res_z a(z)i_z(z-w)^{-1}=a(w)_+
\,.
\end{equation}
%which follows from equations \eqref{eq:delta3} and \eqref{eq:delta}.

\begin{proposition}\phantomsection\label{20220719:prop1}
\begin{enumerate}[(a)]
\item
If $N\leq1$, then $\Ker\pi_N$ is a PVA ideal for the continuous PVA structure on $\mc V_\infty\yhat$
defined by the 3-Adler identity \eqref{eq:3adler-0}.
\item
For every $N\in\mb Z$, $\Ker\pi_N$ is a PVA ideal for the continuous PVA structure on
$\mc V_\infty\yhat$ defined by the 2-Adler idenitty \eqref{eq:2adler-0}.
\item
If $N\geq-1$, then $\Ker\pi_N$ is a PVA ideal for the continuous PVA structure on $\mc V_\infty\yhat$
defined by the 1-Adler identity \eqref{eq:1adler-0}.
\item
The $\epsilon$-Adler identity \eqref{eq:3adler-eps} defines a PVA structure on the differential algebras 
$\mc V_{N}$ for $N=-1,0,1$.
\end{enumerate}
\end{proposition}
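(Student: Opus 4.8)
The plan is to reduce all four parts to a single power‑counting estimate on the generating series. Recall that, by the description given just before the statement, $\Ker\pi_N$ is the differential ideal of $\mc V_\infty\yhat$ generated by the coefficients of the series $U_p$ with $p<-N-1$, i.e.\ by the variables $u_{p,\alpha}^{(n)}$ with $p<-N-1$. I claim that, for any of the continuous $\lambda$‑brackets $\{\cdot\,_\lambda\,\cdot\}_1$, $\{\cdot\,_\lambda\,\cdot\}_2$, $\{\cdot\,_\lambda\,\cdot\}_3$ (and hence $\{\cdot\,_\lambda\,\cdot\}^\epsilon$), the subspace $\Ker\pi_N$ is a PVA ideal if and only if
\begin{equation*}
\pi_N\{L_1(z)_\lambda L_2(w)\}\in(\mc V_N[\lambda])[[z^{-1},w^{-1}]]z^Nw^N\otimes A^{\otimes2}\,,
\end{equation*}
i.e.\ $\pi_N\{{u_{p,\alpha}}_\lambda {u_{q,\beta}}\}=0$ whenever $p<-N-1$ or $q<-N-1$. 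The ``only if'' part is immediate. For ``if'', apply the Master Formula \eqref{20180418:eq8} to $f,g\in\mc V_\infty\yhat$: in each summand either the factor $\{{u_{p,\alpha}}_{\lambda+\partial}{u_{q,\beta}}\}$ has $p<-N-1$ (hence lies in $\Ker\pi_N[\lambda]$ by hypothesis, the case $q<-N-1$ being obtained from it via the skewsymmetry Proposition \ref{prop:skew}), or $p\geq-N-1$, in which case, if $\pi_N f=0$, then $\pi_N\big(\tfrac{\partial f}{\partial u_{p,\alpha}^{(m)}}\big)=\tfrac{\partial}{\partial u_{p,\alpha}^{(m)}}\pi_N(f)=0$ by \eqref{20180418:eq12}, so $\tfrac{\partial f}{\partial u_{p,\alpha}^{(m)}}\in\Ker\pi_N$; likewise for $g$. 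Since $\Ker\pi_N$ is a differential ideal, the whole bracket lies in $\Ker\pi_N[\lambda]$. Thus it remains only to bound the powers of $z$ and $w$ in $\pi_N\{L_1(z)_\lambda L_2(w)\}$ for $R=R^{(0)}$.

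The inputs for the power count are: after applying $\pi_N$, the series $L(z)$, $L(w+\lambda+\partial)$, $L^*(\lambda-z)$ and $L(w)$ have powers of $z$ (resp.\ of $w$) bounded above by $N$ — this is clear for $L(z)$ and $L(w)$, and for the remaining two one expands the symbols using the convention that $\partial$ and $\lambda$ occur only in non‑negative powers; and the $\iota_z$‑ (resp.\ $\iota_w$‑) expansion of $(z-w-\lambda-\partial)^{-1}$, and of $(z-w-\lambda)^{-1}$, contributes only powers $z^{\leq-1}$ (resp.\ $w^{\leq-1}$). One also uses that, in the combinations appearing in \eqref{eq:1adler-0} and \eqref{eq:3adler-0}, the two expansions of $(z-w-\lambda-\partial)^{-1}$ agree (the remark after \eqref{eq:3adler-0}), so that each direction of the estimate may be carried out in the favourable expansion. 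Then, inspecting the explicit $R^{(0)}$‑Adler identities monomial by monomial:
\begin{enumerate}[(i)]
\item In \eqref{eq:2adler-0}, each monomial carries $z$ (resp.\ $w$) through at most one Lax factor together with the factor $(z-w-\lambda-\partial)^{-1}$, so the powers are $\leq N-1\leq N$ for every $N$; this gives (b).
\item In \eqref{eq:3adler-0}, the monomials containing both $L_1(z)$ and $L_2^*(\lambda-z)$ carry powers of $z$ up to $2N-1$ (and, symmetrically in $w$, the monomials containing both $L_1(w+\lambda+\partial)$ and $L_2(w)$), while all other monomials contribute $\leq N-1$; thus the bound is $\max(2N-1,N-1)$, which is $\leq N$ exactly when $N\leq1$. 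This gives (a).
\item In \eqref{eq:1adler-0}, the monomials whose only factor involving $z$ (resp.\ $w$) is the expanded $(z-w-\lambda-\partial)^{-1}$ or $(z-w-\lambda)^{-1}$ contribute only powers $z^{\leq-1}$ (resp.\ $w^{\leq-1}$), and all the remaining monomials contribute $\leq N-1$; the overall bound $\max(N-1,-1)$ is $\leq N$ exactly when $N\geq-1$. (Sharpness: for $N\leq-2$ one reads off from \eqref{eq:1adler-0} that $\{(U_0)_\lambda L(w)\}_1$ contains $-L(w)\notin\Ker\pi_N[\lambda]$; this we do not need.) This gives (c).
\end{enumerate}

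For part (d): if $N\in\{-1,0,1\}$ then $N\leq1$ and $N\geq-1$, so by (a), (b), (c) the ideal $\Ker\pi_N$ is a PVA ideal for each of the three compatible $\lambda$‑brackets, hence for every linear combination of them, in particular for $\{\cdot\,_\lambda\,\cdot\}^\epsilon=\{\cdot\,_\lambda\,\cdot\}_3+2\epsilon\{\cdot\,_\lambda\,\cdot\}_2+\epsilon^2\{\cdot\,_\lambda\,\cdot\}_1$ by \eqref{20180408:eq2b}. Since by Theorem \ref{thm:main1} the $\epsilon$‑Adler identity defines a continuous PVA $\lambda$‑bracket on $\mc V_\infty\yhat$, and the quotient of a (continuous) PVA by a PVA ideal is a PVA — all of sesquilinearity, the Leibniz rules, skewsymmetry and the Jacobi identity descend, and the continuity axiom is vacuous at finite level — we obtain an induced PVA $\lambda$‑bracket on $\mc V_\infty\yhat/\Ker\pi_N\cong\mc V_N$ for $N=-1,0,1$, proving (d).

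The main obstacle is the sharpness of the power count in (a) and (c). The continuity bound $M=2N-1$ supplied by Proposition \ref{prop:cont} is exactly on the threshold for the $3$‑Adler bracket and far too weak for the $1$‑Adler bracket (there one needs the bound $-1$, not $2N-1$), so one cannot invoke a crude estimate: one must identify precisely which monomials of \eqref{eq:3adler-0} and \eqref{eq:1adler-0} carry the extremal powers, handle the ``difference‑quotient'' combinations $\bigl(L_1(w+\lambda)-L_1(z)\bigr)(z-w-\lambda)^{-1}$ and the implicit expansion convention for $(z-w-\lambda)^{-1}$ with care, and verify the threshold values $N\leq1$ and $N\geq-1$ directly.
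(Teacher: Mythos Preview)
Your proof is correct and follows essentially the same strategy as the paper: reduce to checking that $\pi_N\{{u_{p,\alpha}}_\lambda{u_{q,\beta}}\}=0$ for $p<-N-1$ (using that $\Ker\pi_N$ is the differential ideal generated by these variables, together with the Master Formula and \eqref{20180418:eq12}), and then carry out a power/order estimate on the explicit $R^{(0)}$-Adler formulas. The only presentational difference is that the paper first takes $\Res_z z^p$ of \eqref{eq:3adler-eps} to obtain the formula \eqref{eq:3adler-Up} for $\{(U_p)_1{}_\lambda L_2(w)\}^\epsilon$ in terms of $(\,\cdot\,)_+$ projections and then checks directly that the surviving terms in \eqref{eq:3adler-Upker} vanish under the stated hypotheses on $N$, whereas you bound the $z$- and $w$-degrees of $\pi_N\{L_1(z)_\lambda L_2(w)\}$ directly from \eqref{eq:3adler-0}--\eqref{eq:1adler-0}; the two computations are equivalent and yield the same thresholds $N\leq1$, all $N$, and $N\geq-1$.
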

\begin{proof}
Recall that the coefficients of $U_p$, $p<-N-1$, generate $\Ker\pi_N$ as a differential algebra ideal.
Hence, if we show that $\pi_N\{{(U_p)_1}_\lambda L_2(w)\}_i^{(0)}=0$,
$i=1,2,3$, for every $p<-N-1$, then $\Ker\pi_N$ is a PVA ideal of $\mc V_\infty\yhat$.
Note that, if $p<-N-1$, then
$$
\pi_N(L(w+\lambda+\partial))(w+\lambda+\partial)^p
$$
has order $p+N<-N-1+N=-1$. Hence,
\begin{equation}\label{20220719:eq2}
\big(\pi_N(L(w+\lambda+\partial))(w+\lambda+\partial)^p\big)_+=0\,.
\end{equation}
Similarly
\begin{equation}\label{20220719:eq3}
\big((w+\lambda+\partial)^p\pi_N(L(w+\partial))\big)_+=0\,.
\end{equation}
For $p<-N-1$, we thus have, from equations \eqref{eq:3adler-Up}, \eqref{20220719:eq2} and
\eqref{20220719:eq3},
\begin{equation}\label{eq:3adler-Upker}
\begin{array}{l}
\displaystyle{
\vphantom{\Big(}
\pi_N\{(U_p)_1{}_\lambda L_2(w)\}^{\epsilon}
} \\
\displaystyle{
\vphantom{\Big(}
=
\Omega
\bigg(
%1
\pi_N(L_1(w+\lambda+\partial))
\big(\pi_N(L_1(w+\lambda+\partial))
(w+\lambda+\partial)^{p}
\pi_N(L_2(w))\big)_+
} \\
\displaystyle{
\vphantom{\Big(}
%3
-
\big(\pi_N(L_1(w+\lambda+\partial))
(w+\lambda+\partial)^{p}
\pi_N(L_2(w+\partial))\big)_+
\pi_N(L_2(w))\bigg)
}
\\
\displaystyle{
\vphantom{\Big(}
%2
+\epsilon^2 \Omega\bigg(\pi_N(L_1(w+\lambda))
\big(
(w+\lambda)^{p}
\big)_+
-\big(
(w+\lambda+\partial)^{p}\big)_+
\pi_N(L_2(w))
\bigg)
\,.}
\end{array}
\end{equation}
Recall the expansion \eqref{20180408:eq2b}.
Since there is no coefficient of $2\epsilon$ in \eqref{eq:3adler-Upker} we have that
$\pi_N\{(U_p)_1{}_\lambda L_2(w)\}_2=0$, for every $N\in\mb Z$, proving part (b).
Moreover, $\pi_N(L(w+\lambda+\partial))(w+\lambda+\partial)^{p}\pi_N(L(w))$ has order $2N+p$.
If $p<-N-1$, then $2N+p<N-1$. Hence, if $N\leq1$,
from equation \eqref{eq:3adler-Upker} we have
$\pi_N\{(U_p)_1{}_\lambda L_2(w)\}_3=0$ which proves part (a).
Finally, if $p<0$, which happens when $N>1$, from equation \eqref{eq:3adler-Upker} we have
$\pi_N\{(U_p)_1{}_\lambda L_2(w)\}_1=0$ proving part (c).
Part (d) follows from parts (a),(b) and (c) and the fact that $\mc V_{N}\cong\mc V_\infty\yhat/\Ker\pi_{N}$.
\end{proof}

By an abuse of notation we simply denote
\begin{equation}\label{L:quasiKP}
L(\partial):=\pi_1(L(\partial))=\sum_{p\leq 1}U_{-p-1}\partial^{p}\in\mc V_1((\partial^{-1}))\otimes A\,,
\end{equation}
where $U_p$ is as in \eqref{Up}. From Proposition \ref{20220719:prop1}(d)
we have that the $\epsilon$-Adler identity \eqref{eq:3adler-eps}, for $L(\partial)$ as in
\eqref{L:quasiKP}, defines a PVA structure on $\mc V_1$.
\begin{lemma}\label{20220719:lem1}
In the PVA $\mc V_1$, we have the following $\lambda$-brackets relations:
\begin{enumerate}[(a)]
\item
$\{{(U_{-2})_1}_\lambda L_2(w)\}^\epsilon=
\Omega\Big(
L_1(w+\lambda+\partial)(U_{-2}\otimes U_{-2}) - (U_{-2}\otimes U_{-2})L_2(w)
\Big)$.
\item
$\{L_1(z)_\lambda (U_{-2})_2\}^\epsilon=
\Omega\Big(
 (U_{-2}\otimes U_{-2})L_1(z) -L_2^*(\lambda-z)(U_{-2}\otimes U_{-2}) 
\Big)$.
\item
\begin{align*}
&\{{(U_{-1})_1}_\lambda L_2(w)\}^\epsilon=
\Omega\Big(
(L_1(w+\lambda+\partial)+\epsilon\id)(\id\otimes U_{-2}-U_{-2}\otimes\id)(L_2(w)+\epsilon\id)
\\
&+L_1(w+\lambda+\partial)
\big((U_{-2}\otimes U_{-2})w+(U_{-1}+\epsilon\id)\otimes U_{-2}+U_{-2}\otimes(U_{-1}+\epsilon\id)\big)
\\
&-\big((U_{-2}\otimes U_{-2})(w+\partial)+(U_{-1}+\epsilon\id)\otimes U_{-2}+U_{-2}\otimes(U_{-1}+\epsilon\id)\big)L_2(w)
\Big)\,.
\end{align*}
\item
\begin{align*}
&\{L_1(z)_\lambda(U_{-1})_2\}^\epsilon=
\Omega\Big(
(L_2^*(\lambda-z)+\epsilon\id)(\id\otimes U_{-2}-U_{-2}\otimes\id)(L_1(z)+\epsilon\id)
\\
&+\big((U_{-2}\otimes U_{-2})(z+\partial)+(U_{-1}+\epsilon\id)\otimes U_{-2}+U_{-2}\otimes(U_{-1}+\epsilon\id)\big)L_1(z)
\\
&-L_2^*(\lambda-z)
\big((U_{-2}\otimes U_{-2})z+(U_{-1}+\epsilon\id)\otimes U_{-2}+U_{-2}\otimes(U_{-1}+\epsilon\id)\big)
\Big)\,.
\end{align*}
\end{enumerate}
Furthermore, we have
\begin{align}
\begin{split}\label{C11}
&\{{(U_{-2})_1}_\lambda (U_{-2})_2\}^\epsilon=
\Omega\Big(U_{-2}^2\otimes U_{-2}-U_{-2}\otimes U_{-2}^2\Big)
\,,
\end{split}
\\
\begin{split}\label{C12}
&\{{(U_{-2})_1}_\lambda (U_{-1})_2\}^\epsilon
=
\Omega\Big(U_{-2}(\lambda+\partial)(U_{-2}\otimes U_{-2})
+U_{-1}U_{-2}\otimes U_{-2}
\\
&\qquad\qquad\qquad\qquad\!-U_{-2}\otimes U_{-2}U_{-1}\Big)
\,,
\end{split}
\\
\begin{split}\label{C21}
&\{{(U_{-1})_1}_\lambda (U_{-2})_2\}^\epsilon
=
\Omega\Big(U_{-2}\otimes \big((\lambda+\partial)U_{-2}\big) U_{-2})
+U_{-2}U_{-1}\otimes U_{-2}
\\
&\qquad\qquad\qquad\qquad\!-U_{-2}\otimes U_{-1}U_{-2}\Big)
\,,
\end{split}
\\
\begin{split}\label{C22}
&\{{(U_{-1})_1}_\lambda (U_{-1})_2\}^\epsilon
=
\Omega\Big(U_{-2}(\lambda+\partial)(U_{-1}\otimes U_{-2})
-U_{-2}\otimes\Big((\lambda+\partial)U_{-2}\Big)U_{-1}
\\
&\qquad\qquad\qquad\qquad\!+U_{-1}^2\otimes U_{-2}-U_{-2}\otimes U_{-1}^2
+U_{0}\otimes U_{-2}^2-U_{-2}^2\otimes U_{0}\Big) 
\\
&\qquad\qquad\qquad\qquad\!+2\epsilon\,\Omega\Big(
U_{-2}\otimes (\lambda+\partial)U_{-2}+U_{-1}\otimes U_{-2}-U_{-2}\otimes U_{-1}\Big)
\\
&\qquad\qquad\qquad\qquad\!+\epsilon^2\,\Omega\Big(\id\otimes U_{-2}-U_{-2}\otimes\id\big)
\,.
\end{split}
\end{align}
\end{lemma}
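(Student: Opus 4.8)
The plan is to derive every identity of the lemma directly from the $\epsilon$-Adler identity \eqref{eq:3adler-eps}, in the residue form \eqref{eq:3adler-Up}, together with the fact that in $\mc V_1$ the Lax operator \eqref{L:quasiKP} has the simple shape $L(z)=U_{-2}z+U_{-1}+\sum_{p\geq 0}U_p z^{-p-1}$, so that, expanded in decreasing powers of $w$, one has $L(w+\partial)=U_{-2}w+(U_{-2}\partial+U_{-1})+O(w^{-1})$ and $L(w+\lambda+\partial)=U_{-2}w+(U_{-2}(\lambda+\partial)+U_{-1})+O(w^{-1})$. First I would substitute $p=-2$, and then $p=-1$, into \eqref{eq:3adler-Up}. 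For $p=-2$, the two terms in which $(\,\cdot\,)_+$ is applied to a product of the shape $L_1(w+\lambda+\partial)(w+\lambda+\partial)^{-2}$ or $(w+\lambda+\partial)^{-2}L_2(w+\partial)$ vanish, since such products have order $\leq-1$ in $w$ and hence trivial polynomial part — the same mechanism as in \eqref{20220719:eq2}--\eqref{20220719:eq3}, now applied at the boundary exponent $p=-N-1$ rather than strictly below it. The remaining two terms are of the form $\big((L_1+\epsilon\id)(w+\lambda+\partial)^{-2}(L_2+\epsilon\id)\big)_+$, which has order $0$ in $w$, so $(\,\cdot\,)_+$ retains only the $w^0$-coefficient; a short expansion shows this coefficient is built from the top symbols $U_{-2}$ alone, that the $\epsilon$-shifts cancel, and that one is left exactly with the formula in part (a). For $p=-1$ the same bookkeeping keeps the $w^1$- and $w^0$-coefficients, and now $U_{-1}$ and one explicit $\epsilon$-shift of $U_{-2}$ survive, yielding part (c).

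Parts (b) and (d) I would then obtain from (a) and (c) by skewsymmetry: taking $\res_w w^{q}$ of \eqref{eq:pva5c-bis} gives $\{L_1(z)_\lambda (U_q)_2\}^\epsilon=-\big|_{x=\partial}\big(\{(U_q)_1{}_{-\lambda-x}L_2(z)\}^\epsilon\big)^\sigma$, into which one substitutes the already proved formulas for $q=-2,-1$ and simplifies using $\Omega^\sigma=\Omega$, $(XY)^\sigma=X^\sigma Y^\sigma$, $(L_1)^\sigma=L_2$, the defining relation $\Omega(X\otimes Y)=(Y\otimes X)\Omega$, and the usual conjugation of shift operators under skewsymmetry (of the kind in Lemma \ref{lem:hn1}(a)) that turns the shifted $L_1(z-\lambda-\partial+\cdots)$ into $L_2^*(\lambda-z)$. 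Alternatively, one may compute $\res_w w^q$ of \eqref{eq:3adler-eps} directly, expanding $(z-w-\lambda-\partial)^{-1}$ in the $\iota_w$ region; both routes are routine once (a) and (c) are in hand.

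Finally, the explicit brackets \eqref{C11}--\eqref{C22} follow by applying $\res_w w^{-k-1}$ to the formulas of parts (a) and (c), which extracts $(U_{-k-1})_2$ from $L_2(w)$: \eqref{C11} and \eqref{C12} come from part (a) with $k=1$ and $k=0$, while \eqref{C21} and \eqref{C22} come from part (c) with $k=1$ and $k=0$. Since $\res_w w^{-k-1}$ of $L_1(w+\lambda+\partial)$ retains only its $w^1$- and $w^0$-coefficients, namely $U_{-2}$ and $U_{-2}(\lambda+\partial)+U_{-1}$, each residue collapses to a finite, purely algebraic expression in $U_{0},U_{-1},U_{-2}$; the only place where $\epsilon$ survives the residue is \eqref{C22}, where the $(L_2(w)+\epsilon\id)$ and $(L_2^*(\lambda-z)+\epsilon\id)$ factors coming from part (c) produce the displayed $2\epsilon$- and $\epsilon^2$-terms. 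The main difficulty throughout is purely computational: carrying out the symbol calculus with the shift $(w+\lambda+\partial)^{p}$, keeping track of the non-commutative order of the $A$-factors (forced by $\Omega(X\otimes Y)=(Y\otimes X)\Omega$), and correctly collecting the $\epsilon^0,\epsilon^1,\epsilon^2$ pieces; no idea beyond \eqref{eq:3adler-eps}, the truncation $\mc V_1\cong\mc V_\infty\yhat/\Ker\pi_1$, and skewsymmetry is needed.
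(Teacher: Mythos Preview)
Your proposal is correct and matches the paper's approach exactly: the paper's proof consists of the single sentence ``It follows by a straightforward $\lambda$-bracket computation from \eqref{eq:3adler-Up}'', and what you outline is precisely that computation, carried out in detail (substituting $p=-2,-1$ for parts (a),(c), invoking skewsymmetry for (b),(d), and extracting the $w$-residues for \eqref{C11}--\eqref{C22}). No additional ideas are needed or used in the paper.
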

\begin{proof}
It follows by a straightforward $\lambda$-bracket computation from \eqref{eq:3adler-Up}.
\end{proof}

Consider the differential algebra homomorphism
$$
\phi:\mc V_1\to\mc V_{1}
$$
defined on generators by ($p\geq-2,\alpha\in I, n\in\mb Z_{\geq0}$)
\begin{equation}\label{20220720:eq1}
\phi(u_{p,\alpha}^{(n)})=
\left\{
\begin{array}{ll}
\delta_{n0}\Tr(E_\alpha)\,,
& p=-2\,,
\\
0\,, & p=-1\,,
\\
u_{p,\alpha}^{(n)}\,,& p\geq0\,,
\end{array}
\right.
\end{equation}
and extended using the Leibniz rule. Extending $\phi$ to a homomorphism
$\phi:\mc V_1\otimes A\to\mc V_{-1}\otimes A$ acting as the identity on $A$, and using
\eqref{Up},
we can rewrite equations \eqref{20220720:eq1} in a more compact form as follows $(n\in\mb Z_{\geq0})$:
$$
\phi(U_{-2}^{(n)})=\delta_{n0}\id\,,\qquad
\phi(U_{-1}^{(n)})=0\,,\qquad
\phi(U_p^{(n)})=U_p^{(n)}\,,\,\, p\geq0\,.
$$
We then get, from Lemma \ref{20220719:lem1}(a)-(c) the following identities
\begin{equation}\label{20220720:eq2}
\begin{split}
&\phi\left(\{{(U_{-2})_1}_\lambda L_2(w)\}^\epsilon\right)=
\Omega\Big(
\phi(L_1(w+\lambda))-\phi(L_2(w))
\Big)
\\
&\phi\left(\{{(U_{-1})_1}_\lambda L_2(w)\}^\epsilon\right)=
\Omega\Big(
\phi(L_1(w+\lambda))(w+2\epsilon)
-(w+\partial+2\epsilon)\phi(L_2(w))
\Big)\,.
\end{split}
\end{equation}
Note that in both equations in \eqref{20220720:eq2} there is no $\epsilon^2$ term. Hence, we have
$$
\phi\left(\{{(U_{-2})_1}_\lambda L_2(w)\}_3\right)=0
=\{{\phi(U_{-2})_1}_\lambda \phi(L_2(w))\}_3
$$
and
$$
\phi\left(\{{(U_{-1})_1}_\lambda L_2(w)\}_3\right)=0
=\{{\phi(U_{-1})_1}_\lambda \phi(L_2(w))\}_3
\,,
$$
namely $\phi$ is a PVA homomorphism with respect to the PVA structure defined by the 1-Adler identity
\eqref{eq:1adler-0}. However, from equation \eqref{C22} we have
$$%\begin{equation}\label{C22phi}
\phi\left(\{{(U_{-1})_1}_\lambda (U_{-1})_2\}^\epsilon\right)
=
\Omega\Big(U_{0}\otimes \id-\id\otimes U_{0}+2\epsilon(\id\otimes\id)\lambda\Big)
\,,
$$%\end{equation}
which has  non-zero terms in $\epsilon^0$ and $\epsilon$. This implies that $\phi$ is not a PVA homomorphism
with respect to the 2-nd and 3-rd PVA structures on $\mc V_1$ defined by the identities
\eqref{eq:2adler-0} and \eqref{eq:3adler-0} respectively.

In order to make $\phi$ a PVA homomorphism we can consider on $\mc V_1$ the PVA structure
defined by the Dirac modification
$\{L_1(z)_\lambda L_2(w)\}^{\epsilon,D}$ (see \cite{DSKVDirac} for details on the Dirac reduction for PVA)
of the $\epsilon$-Adler identity \eqref{eq:3adler-eps} with respect to the constraints
($\alpha\in I$)
$$
\theta_{1,\alpha}=u_{-2,\alpha}-\Tr(E_\alpha)
\,
\qquad
\theta_{2,\alpha}=u_{-1,\alpha}
\,.
$$
Letting $\theta_i=\sum_{\alpha\in I}\theta_{i,\alpha}E^\alpha$, $i=1,2$, the constraints can be rewritten in compact 
form as $\theta_1=U_{-2}-\id$ and $\theta_2=U_{-1}$.
Let $C(\lambda)=\left(C_{(i,\alpha),(j,\beta)}(\lambda)\right)$ be the matrix, with coefficients in $\mc V_{-1}[\lambda]$, 
defined by
$$
C_{(i,\alpha),(j,\beta)}(\lambda)=\{{\theta_{j,\beta}}_\lambda \theta_{i,\alpha}\}^{\epsilon}
\,,
$$
and let $C^{-1}(\lambda)$ be its inverse. The Dirac modification $\{L_1(z)_\lambda L_2(w)\}^{\epsilon,D}$
of the $\epsilon$-Adler identity \eqref{eq:3adler-eps} is defined by the formula
\begin{equation}
\begin{split}\label{20220720:eq3}
&\{L_1(z)_\lambda L_2(w)\}^{\epsilon,D}
=\{L_1(z)_\lambda L_2(w)\}^{\epsilon}
\\
&-\sum_{\substack{i,j=1,2\\\alpha,\beta\in I}}
\{{\theta_{i,\alpha}}_{\lambda+\partial}L_2(w)\}^\epsilon
(C^{-1})_{(i,\alpha),(j,\beta)}(\lambda+\partial)\{L_1(z)_\lambda\theta_{j,\beta}\}^\epsilon
\,.
\end{split}
\end{equation}
The most important facts for us are that the modified $\epsilon$-Adler identity \eqref{20220720:eq3} defines a PVA structure on
$\mc V_1$
and that $\ker\phi$ is a PVA ideal for this structure (proofs can be found in \cite{DSKVDirac}).
It is clear from \eqref{20220720:eq1} that $\im\phi=\mc V_{-1}$. Hence the modified
Adler identity \eqref{20220720:eq3} induces a PVA structure on the quotient space
$\mc V_{-1}\cong\mc V_{1}/\Ker\phi$. 

Using the identities ($X,Y,Z\in A$)
\begin{equation}\label{eq:Tr}
(\Tr(Z\cdot)\otimes\id)(\Omega (X\otimes Y))=X Z Y
\,,\quad
(\id\otimes\Tr(Z\cdot))(\Omega (X\otimes Y))=Y Z X\,,
\end{equation}
and noticing that
$\{{u_{p,\alpha}}_\lambda L_2(w)\}^\epsilon=(\Tr(E_\alpha\cdot)\otimes\id)\{{(U_{p,\alpha})_1}_\lambda L_2(w)\}$, we get
from equations \eqref{20220720:eq2}:
\begin{equation}\label{20220720:eq4}
\begin{split}
&\phi\left(\{{\theta_{1,\alpha}}_\lambda L_2(w)\}^\epsilon\right)=
\id\otimes\phi(L_(w+\lambda))E_\alpha-\id\otimes E_\alpha\phi(L(w))\,,
\\
&\phi\left(\{{\theta_{2,\alpha}}_\lambda L_2(w)\}^\epsilon\right)=
\id\otimes\phi(L(w+\lambda))E_\alpha(w+2\epsilon)
-\id\otimes E_\alpha(w+\partial+2\epsilon)\phi(L(w))
\,,
\end{split}
\end{equation}
for every $\alpha\in I$. By skewsymmetry \eqref{eq:pva5} we also get
\begin{equation}\label{20220720:eq5}
\begin{split}
&\phi\left(\{L_1(z)_\lambda\theta_{1,\beta}\}^\epsilon\right)=
E_\beta \phi(L(z))\otimes\id-\phi(L^*(\lambda-z))E_\beta\otimes1\,,
\\
&\phi\left(\{L_1(z)_\lambda \theta_{2,\beta}\}^\epsilon\right)=
E_{\beta}(z+\partial+2\epsilon)\phi(L(z))\otimes\id
-\phi(L^*(\lambda-z))E_\beta(z+2\epsilon)\otimes\id
\,.
\end{split}
\end{equation}
Finally, using \eqref{eq:Tr} and equations \eqref{C11}-\eqref{C22}, we find the following explicit expressions for the entries of the matrix 
$\phi(C(\lambda))$ ($\alpha,\beta\in I$):
\begin{equation}\label{phiC}
\begin{split}
&\phi\left(C_{(1,\alpha),(1,\beta)}(\lambda)\right)=0\,,
\quad
\phi\left(C_{(1,\alpha),(2,\beta)}(\lambda)\right)
=\phi\left(C_{(2,\alpha),(1,\beta)}(\lambda)\right)=\Tr(E_\alpha E_\beta)\lambda\,,
\\
&\phi\left(C_{(2,\alpha),(2,\beta)}(\lambda)\right)
=\Tr([E_\alpha,E_{\beta}]U_0)+2\epsilon\, \Tr(E_\alpha E_\beta)\lambda
\,.
\end{split}
\end{equation}
From \eqref{phiC} it is immediate to get the expression for the entries of the matrix $\phi(C^{-1}(\lambda))$.
We have ($\alpha,\beta\in I$)
\begin{equation}\label{phiCinv}
\begin{split}
&\phi\left((C)^{-1}_{(1,\alpha),(1,\beta)}(\lambda)\right)=
-(\lambda+\partial)^{-1}\left(\Tr([E^\alpha,E^\beta]U_0)+2\epsilon\Tr(E^\alpha E^\beta)\lambda\right)
\lambda^{-1}\,,
\\
&\phi\left((C)^{-1}_{(1,\alpha),(2,\beta)}(\lambda)\right)
=\phi\left((C)^{-1}_{(2,\alpha),(1,\beta)}(\lambda)\right)=\Tr(E^\alpha E^\beta)\lambda^{-1}\,,
\\
&
\phi\left((C^{-1})_{(2,\alpha),(2,\beta)}(\lambda)\right)
=0
\,.
\end{split}
\end{equation}
Applying $\phi$ to both sides of the Dirac modified $\epsilon$ Adler-identity and using \eqref{20220720:eq4},
\eqref{20220720:eq5} and \eqref{phiCinv} we get the explicit form of the 
Dirac modified $\epsilon$ Adler-identity defining the induced PVA structure on the quotient space.
We summarize this in the next result
\begin{theorem}
Let $A$ be a finite dimensional unital associative algebra over $\mb F$ 
with a non-degenerate trace form, and fix dual bases $\{E_{\alpha}\}_{\alpha\in I}$
and $\{E^\alpha\}_{\alpha\in I}$ satisfying \eqref{eq:motiv7}. Let $\mc V:=\mc V_0$ be the algebra of differential polynomials in infinitely many variables $u_{p,\alpha}$,
$p\geq0$, $\alpha\in I$. Let
$$
L(\partial)=\id \partial+\sum_{p\geq0}U_p\partial^{-p-1}\in\mc V((\partial^{-1}))\otimes A\,,
$$
where $U_p=\sum_{\alpha\in I}u_{p,\alpha}E^\alpha\in\mc V\otimes A$.
The following identity
\begin{equation}\label{eq:3adler-eps-Dirac}
\begin{array}{l}
\displaystyle{
\vphantom{\Big(}
\{L_1(z)_\lambda L_2(w)\}^{\epsilon}
} \\
\displaystyle{
\vphantom{\Big(}
=
\Omega
\bigg(
%1
(L_1(w+\lambda+\partial)+\epsilon\id)
(z-w-\lambda-\partial)^{-1}
L_2^*(\lambda-z)(L_2(w)+\epsilon\id)
} \\
\displaystyle{
\vphantom{\Big(}
%2
-(L_1(w+\lambda+\partial)+\epsilon\id)
L_1(z)
(z-w-\lambda-\partial)^{-1}
(L_2(w)+\epsilon\id)
} \\
\displaystyle{
\vphantom{\Big(}
%3
+
L_1(w+\lambda+\partial)
(L_1(z)+\epsilon\id)
(z-w-\lambda-\partial)^{-1}
(L_2^*(\lambda-z)+\epsilon\id)
}
\\
\displaystyle{
\vphantom{\Big(}
%4
-
(L_1(z)+\epsilon\id)
(z-w-\lambda-\partial)^{-1}
(L_2^*(\lambda-z)+\epsilon\id)
L_2(w)
\bigg)
}
\\
\displaystyle{
\vphantom{\Big(}
%5
+\sum_{\alpha,\beta\in I}
\left(\id\otimes L(w+\lambda+\partial)E_\alpha-
\id\otimes E_\alpha L(w)\right)
(\lambda+\partial)^{-1}\times
}\\
\displaystyle{
\vphantom{\Big(}
\qquad\qquad\times\Tr([E^\alpha,E^\beta]U_0)(\lambda+\partial^{-1})
\left(E_\beta L(z)\otimes\id-L^*(\lambda-z)E_\beta\otimes\id\right)
}
\\
\displaystyle{
\vphantom{\Big(}
%6
-\sum_{\alpha\in I}
\left(
\id\otimes L(w+\lambda+\partial)E_\alpha-\id\otimes E_\alpha L(w)\right)
(\lambda+\partial)^{-1}\times
}\\
\displaystyle{
\vphantom{\Big(}
\qquad\qquad\times
\left(E^\alpha (z+\partial)L(z)\otimes\id-L^*(\lambda-z)E^\alpha z\otimes\id\right)
}
\\
\displaystyle{
\vphantom{\Big(}
%7
-\sum_{\alpha\in I}
\left(\id\otimes L(w+\lambda+\partial)E_\alpha w-\id\otimes E_\alpha\left((w+\partial) L_2(w)\right)\right)
(\lambda+\partial)^{-1}\times
}\\
\displaystyle{
\vphantom{\Big(}
\qquad\qquad\times
\left(E^\alpha L(z)\otimes\id-L^*(\lambda-z)E^\alpha\otimes\id\right)
}
\\
\displaystyle{
\vphantom{\Big(}
%8
-2\epsilon \sum_{\alpha\in I}
\left(\id\otimes L(w+\lambda+\partial)E_\alpha-\id\otimes E_\alpha L(w)\right)
(\lambda+\partial)^{-1}\times
}\\
\displaystyle{
\vphantom{\Big(}
\qquad\qquad\times
\left(E^\alpha L(z)\otimes\id-L^*(\lambda-z)E^\alpha\otimes\id\right)
\,,}
\end{array}
\end{equation}
defines a PVA structure on $\mc V$. Furthermore, expanding
$\{\cdot\,_\lambda\,\cdot\}^\epsilon=\{\cdot\,_\lambda\,\cdot\}_3+2\epsilon\{\cdot\,_\lambda\,\cdot\}_2
+\epsilon^2\{\cdot\,_\lambda\,\cdot\}_1$, we get three compatible PVA $\lambda$-brackets on $\mc V$.
\end{theorem}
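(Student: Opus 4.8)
\emph{Outline of the approach.} The plan is to realize the bracket \eqref{eq:3adler-eps-Dirac} as a Dirac reduction (in the sense of \cite{DSKVDirac}) of the PVA structure defined by the $\epsilon$-Adler identity \eqref{eq:3adler-eps} on the larger differential algebra $\mc V_1$, and then to obtain the compatibility of the three $\lambda$-brackets in its $\epsilon$-expansion by performing the same reduction on a two-parameter family of PVA $\lambda$-brackets rather than on the quadratic $\epsilon$-family alone. By Proposition \ref{20220719:prop1}(d), for $L(\partial)$ as in \eqref{L:quasiKP} the identity \eqref{eq:3adler-eps} is a PVA $\lambda$-bracket on $\mc V_1$ for every $\epsilon\in\mb F$. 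With the constraints $\theta_1=U_{-2}-\id$, $\theta_2=U_{-1}$ and the matrix $C_{(i,\alpha),(j,\beta)}(\lambda)=\{{\theta_{j,\beta}}_\lambda\theta_{i,\alpha}\}^\epsilon$ read off from \eqref{C11}--\eqref{C22}, the general Dirac reduction theorem of \cite{DSKVDirac} gives that the Dirac modified bracket \eqref{20220720:eq3} is again a PVA $\lambda$-bracket on $\mc V_1$ and that $\Ker\phi$ is a PVA ideal for it; since $\im\phi=\mc V=\mc V_1/\Ker\phi$, this induces a PVA structure on $\mc V$. The invertibility of $C(\lambda)$ required here is visible from \eqref{phiC}, whose off-diagonal block is $\Tr(E_\alpha E_\beta)\lambda$, nondegenerate because $\{E_\alpha\}$ and $\{E^\alpha\}$ are dual bases.

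\emph{The explicit formula.} To put the reduced bracket in the form \eqref{eq:3adler-eps-Dirac} I would apply $\phi$ to both sides of \eqref{20220720:eq3}. For the two factors $\phi(\{{\theta_{i,\alpha}}_\lambda L_2(w)\}^\epsilon)$ and $\phi(\{L_1(z)_\lambda\theta_{i,\beta}\}^\epsilon)$ one uses Lemma \ref{20220719:lem1}(a)--(d) together with the trace identities \eqref{eq:Tr}, which yield \eqref{20220720:eq4}--\eqref{20220720:eq5}; for the middle factor one uses the $\lambda$-brackets \eqref{C11}--\eqref{C22} and again \eqref{eq:Tr} to obtain \eqref{phiC}, and hence \eqref{phiCinv} for $\phi(C^{-1}(\lambda))$. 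Substituting these into the $\phi$-image of \eqref{20220720:eq3} is then a long but entirely mechanical $\lambda$-bracket computation that collapses to \eqref{eq:3adler-eps-Dirac}.

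\emph{The three compatible $\lambda$-brackets.} From \eqref{eq:3adler-eps-Dirac} the reduced bracket is a polynomial in $\epsilon$ of degree $2$, so on $\mc V$ we may write $\{\cdot\,_\lambda\,\cdot\}^\epsilon=\{\cdot\,_\lambda\,\cdot\}_3+2\epsilon\{\cdot\,_\lambda\,\cdot\}_2+\epsilon^2\{\cdot\,_\lambda\,\cdot\}_1$. Expanding the Jacobi identity (valid for all $\epsilon\in\mb F$, hence coefficientwise since $\mb F$ is infinite) and matching powers of $\epsilon$ gives, exactly as in the proof of Theorem \ref{20180408:thm}, the Jacobi identity for $\{\cdot\,_\lambda\,\cdot\}_1$ and $\{\cdot\,_\lambda\,\cdot\}_3$ together with the compatibility of consecutive brackets, but it leaves the Jacobi identity for $\{\cdot\,_\lambda\,\cdot\}_2$ undetermined. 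To remedy this I would repeat the reduction with \eqref{eq:3adler-eps} replaced by the family $a\{\cdot\,_\lambda\,\cdot\}_1^{(0)}+b\{\cdot\,_\lambda\,\cdot\}_2^{(0)}+\{\cdot\,_\lambda\,\cdot\}_3^{(0)}$ defined by \eqref{eq:1adler-0}--\eqref{eq:3adler-0}: by Theorem \ref{thm:main1} and Proposition \ref{20220719:prop1} this is a PVA $\lambda$-bracket on $\mc V_1$ for all $a,b\in\mb F$, and its constraint matrix has the same shape as \eqref{phiC} with the coefficient of $\{\cdot\,_\lambda\,\cdot\}_3^{(0)}$ equal to $1\neq0$, so that $\phi(C^{-1}(\lambda))$ is affine in $(a,b)$. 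Tracking the $(a,b)$-degrees through the Dirac correction — each of $\phi(\{{\theta_{i,\alpha}}_\lambda L\})$, $\phi(C^{-1})$, $\phi(\{L_\lambda\theta_{j,\beta}\})$ is affine in $b$ and independent of $a$, and $\phi((C^{-1})_{(2,\alpha),(2,\beta)})=0$ — one concludes that the reduced bracket on $\mc V$ is affine in $(a,b)$, of the form $\{\cdot\,_\lambda\,\cdot\}_3+b\{\cdot\,_\lambda\,\cdot\}_2+a\{\cdot\,_\lambda\,\cdot\}_1$, specializing to \eqref{eq:3adler-eps-Dirac} at $(a,b)=(\epsilon^2,2\epsilon)$. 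Since this affine family is a PVA $\lambda$-bracket for every $(a,b)\in\mb F^2$, comparing the coefficients of $1,a,b,a^2,ab,b^2$ in its Jacobi and skewsymmetry identities forces each of the three $\lambda$-brackets to be a PVA $\lambda$-bracket and any two of them to be compatible.

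\emph{Main obstacle.} The heaviest routine part is the derivation of \eqref{eq:3adler-eps-Dirac} from \eqref{20220720:eq3}. The genuinely delicate point — and the reason the naive $\epsilon$-expansion does not suffice, just as in Theorem \ref{20180408:thm} — is that $\{\cdot\,_\lambda\,\cdot\}_1^{(0)}$ and $\{\cdot\,_\lambda\,\cdot\}_2^{(0)}$ cannot be Dirac-reduced in isolation, since their constraint matrices $\{{\theta_\bullet}_\lambda\theta_\bullet\}$ degenerate after passing to $\mc V$; one is therefore forced to reduce the whole pencil containing $\{\cdot\,_\lambda\,\cdot\}_3^{(0)}$ with nonzero coefficient, and the work lies in checking that this reduction produces an affine-in-$(a,b)$ family and that the theorem of \cite{DSKVDirac} applies uniformly along it.
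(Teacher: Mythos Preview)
Your approach to the first claim (that \eqref{eq:3adler-eps-Dirac} defines a PVA structure on $\mc V$) coincides with the paper's: Dirac reduction of the $\epsilon$-Adler bracket on $\mc V_1$ by the constraints $\theta_1=U_{-2}-\id$, $\theta_2=U_{-1}$, followed by the explicit computation of $\phi$ applied to \eqref{20220720:eq3} using \eqref{20220720:eq4}, \eqref{20220720:eq5} and \eqref{phiCinv}. The paper's proof is a one-line reference to these computations and to \cite{DSKVDirac}.

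For the compatibility of the three $\lambda$-brackets you supply genuinely more than the paper does. The paper's terse proof does not indicate how one deduces that $\{\cdot\,_\lambda\,\cdot\}_2$ itself satisfies Jacobi (equivalently, that $\{\cdot\,_\lambda\,\cdot\}_1$ and $\{\cdot\,_\lambda\,\cdot\}_3$ are compatible); as you correctly observe, the quadratic-in-$\epsilon$ family alone yields only the same partial information as in Theorem \ref{20180408:thm}. Your remedy---Dirac-reducing the full two-parameter pencil $a\{\cdot\,_\lambda\,\cdot\}_1^{(0)}+b\{\cdot\,_\lambda\,\cdot\}_2^{(0)}+\{\cdot\,_\lambda\,\cdot\}_3^{(0)}$ on $\mc V_1$ (a PVA bracket for every $(a,b)$ by Theorem \ref{thm:main1} and Proposition \ref{20220719:prop1})---is sound. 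The degree-tracking you sketch is correct: after applying $\phi$ the $(1,1)$ block of $C$ and the brackets $\{\theta_{2,\alpha}{}_\lambda L\}$, $\{L{}_\lambda\theta_{2,\beta}\}$ are affine in $b$ and independent of $a$ (the $\epsilon^2$ contributions of Lemma \ref{20220719:lem1}(c),(d) and of \eqref{C22} all vanish under $\phi$ since $U_{-2}\mapsto\id$), and the vanishing $\phi((C^{-1})_{(2,\alpha),(2,\beta)})=0$ kills the only potential $b^2$ term in the correction. Hence the reduced pencil is affine in $(a,b)$, and expanding Jacobi in the independent parameters $a,b$ gives all three Jacobi identities and all pairwise compatibilities. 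This argument is more complete than the paper's; what the paper gains in brevity it pays for by leaving this point implicit.
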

\begin{proof}
The statement follows from the computations outlined in the present section and the results in
\cite{DSKVDirac} about Dirac reduction for PVA.
\end{proof}

As in Section \ref{sec:ham} define elements in $\mc V$ by $h_0=0$ and $h_{n}=\frac{-1}{n}\Res_z\Tr(L^n(z))$, for $n>0$. Then,
and the corresponding Hamiltonian equations \eqref{20220621:eq2} are
\begin{equation}\label{eq:KP}
\frac{d L(w)}{dt_n}=[(L^n)_+,L](w)\,,
\qquad n\geq0\,.
\end{equation}
These equations form  the $A$-valued KP hierarchy and equation \eqref{eq:3adler-eps-Dirac} gives its tri-Hamiltonian structure.
\begin{remark}
When $A=\mb F$, equation \eqref{eq:3adler-eps-Dirac} reads
\begin{equation}\label{scalarKP}
\begin{array}{l}
\displaystyle{
\vphantom{\Big(}
\{L(z)_\lambda L(w)\}^{\epsilon}
=(L(w+\lambda+\partial)+\epsilon)
(z-w-\lambda-\partial)^{-1}
L^*(\lambda-z)(L(w)+\epsilon)
} \\
\displaystyle{
\vphantom{\Big(}
%2
-(L(w+\lambda+\partial)+\epsilon)
L(z)
(z-w-\lambda-\partial)^{-1}
(L(w)+\epsilon)
\big)
} \\
\displaystyle{
\vphantom{\Big(}
%3
+
L(w+\lambda+\partial)
(L(z)+\epsilon)
(z-w-\lambda-\partial)^{-1}
(L^*(\lambda-z)+\epsilon)
}
\\
\displaystyle{
\vphantom{\Big(}
%4
-
(L(z)+\epsilon)
(z-w-\lambda-\partial)^{-1}
(L^*(\lambda-z)+\epsilon)
L(w)
\bigg)
}
\\
\displaystyle{
\vphantom{\Big(}
%6
-(L(w+\lambda+\partial)-L(w))(\lambda+\partial)^{-1}
((z+\partial)L(z)-L^*(\lambda-z) z)
}
\\
\displaystyle{
\vphantom{\Big(}
%7
-(L(w+\lambda+\partial)w-\left((w+\partial) L(w)\right))(\lambda+\partial)^{-1}
(L(z)-L^*(\lambda-z))
}
\\
\displaystyle{
\vphantom{\Big(}
%8
-2\epsilon\,(L(w+\lambda+\partial)-L(w))(\lambda+\partial)^{-1}
(L(z)-L^*(\lambda-z))
\,.}
\end{array}
\end{equation}
The coefficient of $2\epsilon$ and $\epsilon^2$ agree, up to an overall minus sign, with the Adler type formulas used in 
\cite{AGD} to define the bi-Hamiltonian structure of the KP hierarchy. Moreover, note that 
the constant term in $\epsilon$ in \eqref{scalarKP} defines a local PVA. Hence, the same computations as in \cite{AGD} show that \eqref{eq:KP} is a tri-Hamiltonian integrable hierarchy.
We expect this to be true for arbitrary associative algebras $A$.
\end{remark}

\begin{remark}
Let $V$ be a finite dimensional vector space and $A=\End(V)$. Then,
the coefficient of $2\epsilon$ and $\epsilon^2$ in \eqref{eq:3adler-eps-Dirac} agree, up to an overall minus sign, with the Adler type formulas used in 
\cite{AGD} to define the bi-Hamiltonian structure of the matrix KP hierarchy.
\end{remark}

\begin{remark}
Replacing $R^{(0)}$ with $R^{(1)}$ similar computations as in the present section lead to the Hamiltonian
formalism for the modified KP hierarchy, see \cite{Kup85}.
\end{remark}

\begin{remark}
In \cite{AGD}, for every $N\geq1$, a compatible pair of $\lambda$-brackets describing a bi-Hamiltonian structure for the KP hierarchy has been found. These are obtained from Proposition \ref{20220719:prop1}(b)
and (c). However Proposition \ref{20220719:prop1} shows that it is not possible to get the third
compatible $\lambda$-bracket if $N>1$. To overcome this problem one has to consider a (non-local) Dirac modification
of the $\epsilon$-Adler identity \eqref{eq:3adler-eps} by the constraints $\theta_i=U_{-N-1-i}$, $i\geq1$. 
%Note that,
%despite the fact that the matrix $C(\lambda)$ of $\lambda$-brackets among the constraints is infinite 
%dimensional, $\pi_N (C(\lambda))$ is a finite dimensional invertible matrix. Hence, the Dirac reduction
%formula is well defined in the quotient space $\mc V_N\cong \mc V_\infty\yhat/\ker\pi_N$.
\end{remark}

%%%
%\section{Reductions of the Adler identities}
%\label{sec:6}
%
%%%
%\section{$\mc W$-algebras}
%\label{sec:6b}
%
%%%
%\section{Hamiltonian equations and integrability}
%\label{sec:7}
%
%%%
%\section{Examples}
%\label{sec:7b}
%
%%%%
%\section{``Finite'' Adler identities}
%\label{sec:8}
%
%%%%
%\pecetta{
%There should also be a ``finite analogue'' of all this, that should lead to the classical finite Yangian identity
%(whose quantization should be the Yangian, or the $W$-algebra, for $\mf{gl}_N$, and after reducing, 
%the Yangian, or the $W$-algebra, for other classical Lie algebras).
%}
%
%
%%%%
%\section{Multiplicative Adler identities}
%\label{sec:9}
%
%%
%\pecetta{
%??? For the next paper ???
%}

\end{document}